\newcommand{\spa}[1]{\operatorname{span} \{#1\}}
\newcommand{\tr}[1]{\operatorname{tr}(#1)}
\def\sp(#1,#2){\left\langle #1,#2 \right\rangle}
\def\bp#1{\sp(#1)}
\DeclareFontFamily{U}{matha}{\hyphenchar\font45}
\DeclareFontShape{U}{matha}{m}{n}{
      <5> <6> <7> <8> <9> <10> gen * matha
      <10.95> matha10 <12> <14.4> <17.28> <20.74> <24.88> matha12
      }{}
\DeclareSymbolFont{matha}{U}{matha}{m}{n}
\DeclareFontFamily{U}{mathx}{\hyphenchar\font45}
\DeclareFontShape{U}{mathx}{m}{n}{
      <5> <6> <7> <8> <9> <10>
      <10.95> <12> <14.4> <17.28> <20.74> <24.88>
      mathx10
      }{}
\DeclareSymbolFont{mathx}{U}{mathx}{m}{n}
\DeclareMathSymbol{\obot}         {2}{matha}{"6B}
\DeclareMathSymbol{\bigobot}       {1}{mathx}{"CB}
\newcommand{\R}{\mathbb{R}}
\renewcommand{\E}{\mathbb{E}} 
\newcommand{\Si}{\mathbb{S}}
\newcommand{\eunn}{\mathbb{E}^{n}_{\nu}}
\newcommand{\sgn}{\operatorname{sgn}}
\def\pderiv#1#2{\dfrac{\partial #1}{\partial #2}}
\newcommand{\lied}[2]{\mathcal{L}_{#2} #1} 
\def\ve{\mathfrak{X}} 
\newcommand{\Ei}{{\mathcal{E}}}
\newcommand{\F}{{\mathcal{F}}}
\def\K{\mathcal{K}}
\theoremstyle{break} 
\newtheorem{theorem}{Theorem}[section]
\newtheorem{proposition}[theorem]{Proposition}
\newtheorem{corollary}[theorem]{Corollary}
\newtheorem{lemma}[theorem]{Lemma}
\newtheorem{thmMy}[theorem]{Theorem}
\newtheorem{propMy}[theorem]{Proposition}
\newtheorem{corMy}[theorem]{Corollary}
\newtheorem{example}[theorem]{Example}
\newtheorem{remark}[theorem]{Remark}
\newtheorem{definition}[theorem]{Definition}
\theoremstyle{nonumberplain}
\newtheorem{proof}{Proof}
\numberwithin{equation}{section} 
\newcounter{partCounter}
\newenvironment{parts}{
	\begin{list}{\bfseries{}Case \arabic{partCounter}~}{\usecounter{partCounter}}
}{
	\end{list}
}
\renewcommand{\d}{{\textrm d}}
\newcommand{\defn}[1]{\emph{\gls*{#1}}} 
\begin{document}
\pagenumbering{roman}
\title{Killing tensors, Warped Products and The Orthogonal Separation of The Hamilton-Jacobi Equation}
\author{Krishan Rajaratnam\footnote{e-mail: k2rajara@uwaterloo.ca}, Raymond G. McLenaghan\footnote{e-mail: rgmclenaghan@uwaterloo.ca}\\Department of Applied Mathematics, University of Waterloo, Canada}
\maketitle
\begin{abstract}\centering
We study Killing tensors in the context of warped products and apply the results to the problem of orthogonal separation of the Hamilton-Jacobi equation. This work is motivated primarily by the case of spaces of constant curvature where warped products are abundant. We first characterize Killing tensors which have a natural algebraic decomposition in warped products. We then apply this result to show how one can obtain the Killing-St\"{a}ckel space (KS-space) for separable coordinate systems decomposable in warped products. This result in combination with Benenti's theory for constructing the KS-space of certain special separable coordinates can be used to obtain the KS-space for all orthogonal separable coordinates found by Kalnins and Miller in Riemannian spaces of constant curvature. Next we characterize when a natural Hamiltonian is separable in coordinates decomposable in a warped product by showing that the conditions originally given by Benenti can be reduced. Finally we use this characterization and concircular tensors (a special type of torsionless conformal Killing tensor) to develop a general algorithm to determine when a natural Hamiltonian is separable in a special class of separable coordinates which include all orthogonal separable coordinates in spaces of constant curvature.
\end{abstract}
\tableofcontents
\cleardoublepage
\phantomsection		

\printglossaries
\cleardoublepage
\phantomsection

\addcontentsline{toc}{section}{List of New Results}
\textbf{List of New Results}
\theoremlisttype{allname}
\listtheorems{thmMy,propMy,corMy}
\cleardoublepage
\phantomsection

\addcontentsline{toc}{section}{List of Known Results}
\textbf{List of Known Results}
\theoremlisttype{allname} 
\listtheorems{theorem,proposition,corollary}
\cleardoublepage
\phantomsection

\addcontentsline{toc}{section}{List of Examples}
\textbf{List of Examples}
\theoremlisttype{allname} 
\listtheorems{example}
\cleardoublepage
\phantomsection

\pagenumbering{arabic}
\newpage
\section{Introduction}



Eisenhart first showed that a special type of Killing $2$-tensor hereafter after called a characteristic Killing tensor, can be used to intrinsically characterize coordinates which orthogonally separate the geodesic Hamilton-Jacobi equation \cite{Eisenhart1934}. Benenti has generalized this result to Hamiltonians with potentials and to the non-orthogonal case \cite{Benenti1997a}.

Before continuing a discussion of the important results of the theory, we introduce some crucial notions. A \emph{separable web} is a collection $\Ei = (E_i)_{i=1}^{n}$ of $n$ pair-wise orthogonal $1$-distributions (line bundles) $E_{i}$, where each $E_{i}$ is spanned by one of the $n$ coordinate vector fields of a separable coordinate system. In other words, $(E_i)_{i=1}^{n}$ are the $n$ eigenspaces of a characteristic Killing tensor associated with a separable coordinate system. A \emph{warped product} is a product manifold $M = \prod_{i=0}^{k} M_{i}$ of pseudo-Riemannian manifolds $(M_{i}, g_{i})$ where $\dim M_{i} > 0$ for $i > 0$ equipped with the metric

\begin{equation}
	g = \pi_{0}^{*} g_{0} + \sum_{i=1}^{k} \rho_{i}^{2} \pi_{i}^{*} g_{i}
\end{equation}

\noindent where $\rho_{i} : M_0 \rightarrow \R^{+}$ are functions and $\pi_{i} : M \rightarrow M_{i}$ are the canonical projection maps \cite{Meumertzheim1999}. A separable web $\Ei = (E_i)_{i=1}^{n}$ is called \emph{reducible} if there exists a warped product $M = \prod_{i=0}^{k} M_{i}$ such that each $E_{i}$ is a section of the tangent bundle $T L_{j} \subset T M$ for some $j$ where $L_{j}$ is the canonical foliation associated with $M_{j}$. In order to make this last definition non-trivial, we must have $\dim M_{0} > 0$ or $k > 1$. Finally for the rest of the introduction and through most of the article, \emph{tensor} refers to a $2$-tensor.

Based on Eisenhart's initial work in \cite{Eisenhart1934}, Kalnins and Miller have obtained a recursive classification of all separable webs for Riemannian spaces of constant curvature \cite{Kalnins1986a,Kalnins1986}. An examination of their classification reveals that many of the separable webs they discovered are reducible. This was our initial motivation for studying Killing tensors and warped products.

It is known that to each separable web there are $n$ (including the metric) independent Lie-Schouten commuting Killing tensors diagonalized in any coordinates adapted to the separable web \cite{Benenti2004}. These Killing tensors span an $n$ dimensional vector space called the \defn{kss} associated to the separable web \cite{Benenti2004}. 

Working along the lines of the general theory of orthogonal separation in terms of characteristic Killing tensors, Benenti discovered a special class of separable webs for which the KS-space could be obtained algebraically \cite{Benenti1992c} using a special conformal Killing tensor hereafter called a Benenti tensor. These results are primarily applicable to spaces of constant curvature, for example to the elliptic and parabolic webs \cite{Benenti1992c}. They have been further refined in \cite{Benenti1993,Benenti2004,Benenti2005a}. \citeauthor{Crampin2003} showed \cite{Crampin2003} that a Benenti tensor $L$ could be intrinsically characterized as a symmetric contravariant tensor field whose associated $\binom{1}{1}$-tensor has point-wise simple real eigenvalues satisfying the following equation:

\begin{equation}
	\nabla_{x}L = \alpha \odot x
\end{equation}

\noindent for all vector fields $x$ and some vector field $\alpha$, where $\nabla$ is the Levi-Civita connection and $\gls{odot}$ denotes the symmetric product. He went on to study the remarkable properties possessed by these tensors (some of which we shall briefly review later). Hereafter we refer to a symmetric contravariant tensor $L$ satisfying the above equation as a \defn{ct}. We use this name because concircular tensors can be viewed as generalizations of concircular vectors \cite{Crampin2007}. Concircular tensors have also been called special conformal Killing tensors by \citeauthor{Crampin2003} in \cite{Crampin2003} and J-tensors by Benenti in \cite{Benenti2005a} where they are studied more thoroughly. One of our discoveries is that point-wise diagonalizable concircular tensors have a fundamental role to play in orthogonal separation of the Hamilton-Jacobi equation in spaces of constant curvature. This role will become clear after reading \cref{sec:BEKMsepAl}.

An unsolved problem within the theory is that of obtaining a basis for the KS-space for the separable webs found by Kalnins and Miller \cite{Kalnins1986} by using an algebraic procedure such as that in \cite{Benenti1992c}. The results obtained by Benenti in \cite{Benenti1992c} could be used to obtain a basis for the KS-space associated with certain ``irreducible'' separable webs obtained by Kalnins and Miller such as the elliptic and parabolic webs. In this article we define the notion of a \defn{kem} \emph{web}, which is (roughly speaking) a web built up recursively by using concircular tensors. We show that all the separable metrics found by Kalnins and Miller in \cite{Kalnins1986} are associated to KEM webs, hence showing that all separable webs in Riemannian spaces of constant curvature are KEM webs. In combination with other results presented in this article and by generalizing Benenti's theory presented in \cite{Benenti1992c}, one can show that a basis for the KS-space of any KEM web can be obtained algebraically using concircular tensors provided the warped product decompositions of the space are known. The specifics of this result will appear elsewhere. We note that warped product decompositions of spaces of constant curvature can be obtained in a straightforward manner and that they are well known in $\E^{n}$ and $\Si^{n}$ (some of which are closely related to spherical coordinates), see \cite{Nolker1996} for more details. In a subsequent expository article (see \cite{Rajaratnam2014c}) we shall describe the warped product decompositions of all spaces of constant curvature similar in style to \cite{Nolker1996}.

Another related problem we solve in this article is that of separating natural Hamiltonians defined in spaces of constant curvature. Taking advantage of the fact that separable webs in such spaces are KEM webs, we present a recursive algorithm (called the \defn{bekm} separation algorithm) which uses concircular tensors to determine when a natural Hamiltonian defined over a space of constant curvature is separable. This algorithm is equivalent to that derived by \citeauthor{Waksjo2003} in \cite{Waksjo2003} (by doing lengthy coordinate calculations based on St\"{a}ckel theory) for $\E^{n}$ and $\Si^{n}$. It should be noted that the BEKM separation algorithm is applicable to hyperbolic space and other signatures such as Minkowski space-time.

The approach taken in this article is based on the discovery (to be presented) that a multidimensional eigenspace of a concircular tensor\footnote{To be precise, we always assume these tensors (their associated endomorphisms) are point-wise diagonalizable.} naturally induces a warped product. Generalizing this fact, we make use of a formulation of Killings equation in terms of the geometry of the eigenspaces of a Killing tensor, originally given in \cite{Coll2006}. We use this formulation together with the theory of twisted and warped products presented in \cite{Meumertzheim1999} to show that a general (conformal) Killing tensor naturally induces a twisted product, see \cref{cor:CKTsandTPnets}. Continuing in this more general setting, in \cref{prop:extCKT} we characterize the Killing tensors which have a natural algebraic decomposition in a warped product. Building on this result, in \cref{prop:wpKss} we characterize the KS-space of a reducible separable web. Then we move on to study the problem of separating natural Hamiltonians in reducible separable webs; this culminates in \cref{thm:wpSOV}.

Finally in the last section, we present the main results of this article concerning the application of concircular tensors to the orthogonal separation of the Hamilton-Jacobi equation. This is done by applying the theory presented earlier which covered general warped products. In particular, we present some preliminary theory on concircular tensors, apply this theory to introduce the notion of KEM webs and then prove their relation to separable webs in spaces of constant curvature. We conclude by introducing the BEKM separation algorithm which can determine separability of natural Hamiltonians in KEM webs.

The results presented in this article and much more can be found in thesis of the first author \cite{Rajaratnam2014}.

\section{Preliminaries and Notation} \label{sec:preNot}

All differentiable structures are assumed to be smooth (class $C^{\infty}$). Without further specification the assumed context is a pseudo-Riemannian manifold $M$ of dimension $n$ equipped with covariant metric $g$. The contravariant metric is usually denoted by $G$ and $\bp{\cdot, \cdot}$ plays the role of the covariant and contravariant metric depending on the arguments. We denote $\gls*{spm}$ as the set of symmetric contravariant tensor fields of valence p on M. Furthermore $\defn{fm} = S^{0}(M)$ is the set of functions from M to $\R$ and $\defn{vem}$ denotes the set of vector fields over $M$. If $f \in \F(M)$ then $\nabla f \in \ve(M)$ denotes the gradient of $f$, i.e. the vector field metrically equivalent to $\d f$. Given a vector bundle E, $\defn{secd}$ denotes the set of sections of E.

We shall refer to a distribution $E$ as \emph{non-degenerate} if the induced metric on $E$ is non-degenerate at each point. The following notion of orthogonal nets will be useful:

\begin{definition}[Orthogonal Nets \cite{Meumertzheim1999}]
	A family $\Ei = (E_{i})_{i=1}^{k}$ of non-degenerate integrable distributions $E_{i}$ on a pseudo-Riemannian manifold $M$ is called an \emph{orthogonal net} on $M$ if the tangent bundle $T M$ can be decomposed as:
	
	\begin{equation}
		T M = \bigobot_{i=1}^{k}E_{i}
	\end{equation}
	
	Note that equations such as above are interpreted point-wise and that the symbol $\gls*{obot}$ stands for ``orthogonal direct sum''.
\end{definition}

Our remaining notations related to nets mostly follow the notations in \cite{Meumertzheim1999}. Suppose $M = \prod_{i=1}^{k} M_{i}$ is a pseudo-Riemannian product. We denote $M_{i \perp} := M_{1} \times \cdots \times M_{i-1}  \times M_{i+1} \times \cdots \times M_{k} $ and the canonical projections $\pi_{i} : M \rightarrow M_{i}$ by $p \rightarrow p_{i}$ for each $i$. We denote by $L_{i}$ the canonical foliation of $M$ induced by $M_{i}$. For $\bar{p} \in M$, the leaf of $L_{i}(\bar{p})$ through $\bar{p}$ and the canonical embedding of $M_{i}$ in M denoted $\tau_{i}$ are given by

\begin{align}
	\tau_{i}(p) & := (\bar{p}_{1},\dotsc, \bar{p}_{i-1}, p, \bar{p}_{i+1},\dotsc, \bar{p}_{k}),  \quad p \in M_{i} \\ 
	L_{i}(\bar{p}) &: = \tau_{i}(M_{i}) = \{p \in M :  p = \tau_{i}(p_{i}), \ p_{i} \in M_{i}  \}  
\end{align}

We let $E_{i}$ denote the integrable distribution induced by $L_{i}$, then $\Ei = (E_{i})_{i=1}^{k}$ is called the \emph{product net} of $\prod_{i=1}^{k} M_{i}$.

We can naturally ``lift'' any tensor defined on the manifolds $M_{i}$ to M. For example if $\tilde{\varphi} \in \F(M_{i})$ then the lift is $\varphi := \tilde{\varphi} \circ \pi_{i} \in \F(M)$, we denote the set of all such functions on $M$ of this form by $\hat{\F}(M_{i})$. For $\tilde{v} \in \ve(M_{i})$, the lift is the unique vector field $v \in \ve(M)$ such that $(\pi_{i})_{*} v = \tilde{v}$ and $(\pi_{i \perp})_{*} v = 0$. Analogously we denote the set of all such vector fields on $M$ of this form by $\hat{\ve}(M_{i})$. $\hat{S}^{p}(M_{i})$ is defined similarly. Note that it if $v \in \hat{\ve}(M_{i})$ and $u \in \hat{\ve}(M_{j})$, then $(\pi_{i})_{*} [v,u] = [\tilde{v}, \tilde{u}]$ if $i = j$ and $[v,u] = 0$ if $i \neq j$, where $[\cdot, \cdot]$ is the Lie bracket. Also note that usually we will use the same symbol for a tensor and its lift.  For $\varphi \in \F(M)$, we say that $\varphi$ is independent of $M_{i}$ (or $E_{i}$) if $\varphi \in \hat{\F}(M_{i \perp})$; if $M$ is connected this is equivalent to $\varphi_{*} E_{i} = 0$. We say that $\varphi$ depends only on $M_{i}$ (or $E_{i}$) if $\varphi \in \hat{\F}(M_{i})$.

A net $\Ei$ is said to be \emph{(locally) integrable} (or \emph{locally decomposable} in \cite{Meumertzheim1999}) if for every $p \in M$ there exists a neighborhood $U \subseteq M$ of $p$ and a $C^{\infty}$-diffeomorphism $f$ from a product manifold $\prod_{i=1}^{k} M_{i}$ onto $U$ such that for every $q \in \prod_{i=1}^{k} M_{i}$ and every $i = 1,...,k$ the slice $(q_{1},...,q_{i-1}) \times M_{i} \times (q_{i+1},...,q_{k})$ gets mapped into an integral manifold of $E_{i}$. In this case, the product manifold $\prod_{i=1}^{k} M_{i}$ is said to be (locally) \emph{adapted} to $\Ei$. A net $\Ei$ is called an \emph{(orthogonal) web} if it is integrable and $\dim E_{i} = 1$ for each i. Given a collection of distributions $\Ei = (E_{i})_{i=1}^{k}$ on a pseudo-Riemannian manifold, we say the collection is \emph{orthogonally integrable} if $\Ei$ forms an integrable net. In \cite[Theorem~1]{Reckziegel1999} the following has been shown, which justifies the term ``orthogonally integrable''

\begin{theorem}[Characterizations of integrable nets \cite{Reckziegel1999}] \label{thm:charIntNets}
	For the decomposition $T M = \bigobot_{i=1}^{k}E_{i}$ by the family of distributions $\Ei = (E_{i})_{i=1}^{k}$, the following are equivalent
	
	\begin{enumerate}
		\item $\Ei$ is an integrable net.
		\item The orthogonal distributions $E_{i}^{\perp}$ are integrable for $i=1,...,k$.
		\item The distributions $E_{i}$ and their direct sums $E_{i} \obot E_{j}$ are integrable for $i,j = 1,...,k$.
	\end{enumerate}
\end{theorem}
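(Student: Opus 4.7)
My plan is to prove the equivalence cyclically via $(1) \Rightarrow (3) \Rightarrow (2) \Rightarrow (1)$. The algebraic fact driving the argument is that the orthogonal decomposition $TM = \bigobot_{i=1}^{k} E_{i}$ yields the dual relations $E_{i}^{\perp} = \bigobot_{j \neq i} E_{j}$, $E_{i} = \bigcap_{j \neq i} E_{j}^{\perp}$, and more generally $E_{i} \obot E_{j} = \bigcap_{l \neq i, j} E_{l}^{\perp}$, so that Lie-bracket closure on one collection of distributions can be read off from Lie-bracket closure on the complementary collection.

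For $(1) \Rightarrow (3)$, the existence of a local product chart $f : \prod_{i} M_{i} \to U$ exhibits each $E_{i}$ (and each $E_{i} \obot E_{j}$) as the tangent distribution to the corresponding product slice, so integrability is automatic. For $(3) \Rightarrow (2)$, I apply Frobenius to $E_{i}^{\perp}$: given sections $X, Y$ of $E_{i}^{\perp}$, decompose them orthogonally as $X = \sum_{j \neq i} X_{j}$ and $Y = \sum_{l \neq i} Y_{l}$ with $X_{j}, Y_{l}$ sections of $E_{j}, E_{l}$ respectively (smooth because the orthogonal projectors onto the $E_{j}$ are smooth). Then $[X,Y] = \sum_{j,l \neq i} [X_{j}, Y_{l}]$, and each summand lies in the integrable distribution $E_{j}$ (when $j = l$) or $E_{j} \obot E_{l}$ (when $j \neq l$), both contained in $E_{i}^{\perp}$; hence $[X,Y]$ is a section of $E_{i}^{\perp}$.

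For $(2) \Rightarrow (1)$, I first obtain integrability of each individual $E_{i}$ from the intersection formula: for sections $X, Y$ of $E_{i} \subseteq E_{j}^{\perp}$ with $j \neq i$, hypothesis $(2)$ gives $[X,Y] \in E_{j}^{\perp}$ for every $j \neq i$, so $[X,Y] \in \bigcap_{j \neq i} E_{j}^{\perp} = E_{i}$. To upgrade integrability of each $E_{i}$ to a local product structure, I proceed by induction on $k$. The base case $k = 2$ is immediate since $E_{1}$ and $E_{2} = E_{1}^{\perp}$ are complementary transverse integrable distributions, and the standard simultaneous Frobenius argument converts the two transverse foliations into a product chart. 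For $k > 2$, fix a leaf $N$ of $E_{1}^{\perp}$ through $p$; the restrictions $(E_{j}|_{N})_{j=2}^{k}$ form an orthogonal decomposition of $TN$, and the orthogonal complement of $E_{j}|_{N}$ inside $TN$ is obtained by intersecting a leaf of $E_{j}^{\perp} \subseteq TM$ with $N$, so it is integrable. The inductive hypothesis yields coordinates on $N$ adapted to the sub-net, and flowing along the transverse one-leaf foliation of $E_{1}$ extends these to a full adapted chart in a neighborhood of $p$.

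The main obstacle is verifying, in the inductive step, that the leaves of $E_{j}^{\perp} \subseteq TM$ really do cut $N$ transversally in submanifolds whose tangent space is the orthogonal complement of $E_{j}|_{N}$ in $TN$; this transversality and dimension count rest on $TM = \bigobot E_{i}$ combined with $TN = E_{1}^{\perp}$, but the bookkeeping is delicate. Once this point is established, the concluding step, namely that flowing the chart on $N$ along the leaves of $E_{1}$ gives a smooth product structure, works because $E_{1}$ is orthogonal to all the other $E_{j}$ and its leaves meet nearby leaves of $E_{1}^{\perp}$ in single points locally, so the product map is a local diffeomorphism.
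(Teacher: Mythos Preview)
The paper does not actually prove this theorem; it is quoted from \cite{Reckziegel1999} and used as a black box. So there is no ``paper's proof'' to compare against, and the question reduces to whether your argument stands on its own.

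Your implications $(1)\Rightarrow(3)$ and $(3)\Rightarrow(2)$ are clean and correct. In $(2)\Rightarrow(1)$, the preliminary step (each $E_i$ is integrable because $E_i = \bigcap_{j\neq i} E_j^{\perp}$) and the verification that the induced net on a leaf $N$ of $E_1^{\perp}$ again satisfies condition~(2) are both fine. However, the inductive step has a genuine gap, and it is not the one you flag as the ``main obstacle''. Transversality of the leaves of $E_j^{\perp}$ with $N$ is the easy part. The missing point is this: once you have the product splitting $M \cong M_1 \times N$ adapted to $(E_1, E_1^{\perp})$ and an adapted chart $N \cong M_2 \times \cdots \times M_k$ from the inductive hypothesis, you must show that the \emph{ambient} distributions $E_j$ (for $j\geq 2$) coincide with the lifts of $E_j|_{N}$ under this product structure. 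Your justification (``$E_1$ is orthogonal to all the other $E_j$ and its leaves meet nearby leaves of $E_1^{\perp}$ in single points'') only establishes that $M_1\times N \to M$ is a local diffeomorphism adapted to $(E_1,E_1^{\perp})$; it says nothing about whether $E_j$ at $(q_1,\cdot)$ agrees with $E_j$ at $(p_1,\cdot)$ under the identification of the $N$-slices.

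The gap is fillable, and the fix uses hypothesis~(2) once more. In product coordinates $(x,y)$ with $E_1=\operatorname{span}\{\partial_{x^i}\}$ and $E_1^{\perp}=\operatorname{span}\{\partial_{y^a}\}$, take $V\in\Gamma(E_j)$ with $j\geq 2$. Then $[\partial_{x^i},V]\in E_1^{\perp}$ automatically (it is a $\partial_y$-combination), and for each $l\notin\{1,j\}$ one has $\partial_{x^i}\in E_1\subseteq E_l^{\perp}$ and $V\in E_j\subseteq E_l^{\perp}$, so integrability of $E_l^{\perp}$ gives $[\partial_{x^i},V]\in E_l^{\perp}$. Hence $[\partial_{x^i},V]\in \bigcap_{l\neq j}E_l^{\perp}=E_j$, so $E_j$ is invariant under the $\partial_{x^i}$-flows and therefore is the lift of $E_j|_N$. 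This is exactly the step you need to insert before concluding.
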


Without further specification, \emph{tensor} is short for valence 2-tensor field and being on a pseudo-Riemannian manifold, the type depends on the context. If T is an endomorphism of the tangent bundle, and $\lambda$ is an eigenfunction of T, then the eigenspace corresponding to $\lambda$ is $E_{\lambda} := \ker(T - \lambda I)$. A distribution $D$ is called \emph{T-invariant} if $T_{p} D_{p} \subseteq D_{p}$ for all $p \in M$. By an \emph{orthogonal tensor}, we mean a symmetric contravariant tensor whose uniquely determined endomorphism of $T M$ is point-wise diagonalizable with real eigenvalues. One can check that the eigenspaces of such an endomorphism (which is a \emph{self-adjoint} operator) are necessarily pair-wise orthogonal non-degenerate subspaces. An endomorphism T is said to have a \emph{simple eigenvalue} $\lambda$, if $\lambda$ is real and has algebraic multiplicity equal to 1. T is said to have \emph{simple eigenvalues} if all its eigenvalues are simple. We say an endomorphism field T has \emph{simple eigenfunctions} if it's eigenfunctions are point-wise simple. Finally note that all our arguments are local, we work in a neighborhood of a point where the dimensions of the eigenspaces don't vary, hence we can assume the eigenspaces are distributions.

\subsection{Brief outline of The Classification of Pseudo-Riemannian Distributions }

The following exposition of the classification of pseudo-Riemannian distributions is a combination of that from \cite{Meumertzheim1999} and \cite{Coll2006}. Suppose $E$ is an m-dimensional non-degenerate distribution. Then we use the orthogonal splitting $T M = E \obot E^{\perp}$, $V = V^{E} + V^{E^{\perp}}$, to define a tensor $s^{E} : T M \times_{M} E \rightarrow E^{\perp}$ and a linear connection $\nabla^{E}$ for $E$ by:

\begin{equation}
	\nabla_{X}Y = \nabla^{E}_{X}Y + s^{E}(X,Y)
\end{equation}

\noindent for all $X \in \ve(M)$ and $Y \in \Gamma(E)$. $s^{E}$ is called the \emph{generalized second fundamental form} of $E$ and above equation is referred to as the \emph{Gauss equation}. One can also check that $\nabla^{E}$ is metric compatible, i.e. $X\bp{Y,Z} = \bp{\nabla^{E}_{X}Y,Z} + \bp{Y,\nabla^{E}_{X}Z}$ for all  $X \in \ve(M)$ and $Y,Z \in \Gamma(E)$.

For the remainder of the discussion we set $s^{E} := s^{E}|_{(E\times_{M} E)}$. For $X,Y \in \Gamma(E)$, we can further decompose $s^{E}(X,Y)$ into its anti-symmetric part and symmetric part 

\begin{align}
	s^{E}(X,Y) & = (\nabla_{X}Y)^{E^{\perp}} = \frac{1}{2}(\nabla_{X}Y + \nabla_{Y}X)^{E^{\perp}} + \frac{1}{2}(\nabla_{X}Y - \nabla_{Y}X)^{E^{\perp}} \\
	& = h^{E}(X,Y) + A^{E}(X,Y)  \\
	A^{E}(X,Y) & := \frac{1}{2}(\nabla_{X}Y - \nabla_{Y}X)^{E^{\perp}} \\
	h^{E}(X,Y) & := \frac{1}{2}(\nabla_{X}Y + \nabla_{Y}X)^{E^{\perp}}
\end{align}

Since $\nabla$ is torsion-free, $A^{E}(X,Y) = \frac{1}{2}([X,Y])^{E^{\perp}}$, hence $E$ is integrable iff $A^{E} \equiv 0$. $h^{E}$ is called the \emph{second fundamental form} of $E$ and it is the second fundamental form of the leaves of the foliation induced by $E$ when $E$ is integrable \cite[P.~100]{barrett1983semi}. The second fundamental form can be decomposed in terms of its trace to get a further classification of $E$ as follows:

\begin{align}
	h^{E}(X,Y) & = \bp{X,Y} H_{E} + h^{E}_{T}(X,Y) \\
	 H_{E} & = \frac{1}{m} \tr{h^{E}}
\end{align}

\noindent where $h^{E}_{T}$ is trace-less. $H_{E}$ is called the \emph{mean curvature normal} of $E$. $E$ is called \emph{minimal}, \emph{umbilical} or \emph{geodesic}\footnote{Note that some authors use the name auto-parallel instead \cite{Meumertzheim1999}.} if $s^{E}(X,Y) = h^{E}_{T}(X,Y)$, $s^{E}(X,Y) = \bp{X,Y} H_{E}$ or $s^{E}(X,Y) = 0$ respectively for all $X,Y \in \Gamma(E)$. We add the qualifier ``almost'' to the three definitions above by replacing $s^{E}$ with $h^{E}$; this just drops the requirement that $A^{E} \equiv 0$. For example $E$ is \emph{almost umbilical} iff $h^{E}_{T} = 0$. We remark that when $E$ is one dimensional $h^{E}_{T} = 0$ trivially, hence all one dimensional non-degenerate foliations and similarly all one dimensional pseudo-Riemannian submanifolds are trivially umbilical. If E is umbilical and $\nabla_{X}^{E^{\perp}}H_{E} = 0$ for all $X \in \Gamma(E)$ then E is called \emph{spherical}. Finally if E is spherical and $E^{\perp}$ is geodesic then E is called \emph{Killing}; we will see later that Killing distributions can be thought of as multidimensional generalizations of Killing vectors.

\subsection{Twisted and Warped Products}

The content of this section is primarily from \cite{Meumertzheim1999} where the notion of a twisted product was introduced. For more on warped products see \cite{Meumertzheim1999,Zeghib2011}. The following general definition of a twisted product is useful in the study of conformal Killing tensors.

\begin{definition}[Twisted and Warped Products] \label{def:twistedProd}
	Let $M = \prod_{i=0}^{k} M_{i}$ be a product of pseudo-Riemannian manifolds $(M_{i}, g_{i})$ where $\dim M_{i} > 0$ for $i > 0$. Suppose for $i=0,...,k$, $\pi_{i} : M \rightarrow M_{i}$ is the projection map and $\rho_{i}: M \rightarrow \R^{+}$ is a function. The following metric $g$ on $M$ is called a \emph{twisted product metric}
	
	\begin{equation}
		g(X,Y) = \sum_{i=0}^{k} \rho_{i}^{2} g_{i}(\pi_{i *}X,\pi_{i *}Y) \quad \text{for $X,Y \in \ve(M)$}
	\end{equation}
	
	In this case $(M, g)$ is called a \emph{twisted product} and is denoted by $\sideset{^{\rho}}{_{i=0}^{k}}{\prod}M_{i}$ where $\rho = (\rho_{0},...,\rho_{k})$. Furthermore the $\rho_{i}$ are called twist functions of the twisted product. If each $\rho_{i}$ depends only on $M_{0}$ and $\rho_{0} \equiv 1$ then $g$ is called a \emph{warped product metric} and $(M,g)$ is called a \emph{warped product}. The warped product is denoted by $M_{0} \times_{\rho_{1}} M_{1} \times \cdots \times_{\rho_{k}} M_{k} $. $M_{0}$ is called the geodesic factor of the warped product and the $M_{i}$ for $i > 0$ are called spherical factors.
\end{definition}

\begin{example}
	By taking $M_{0}$ to be a point and $k = 1$ in the definition of a twisted product, we get a \emph{conformal product}.
\end{example}
\begin{example}
	By taking $M_{0}$ to be a point and $k > 1$ in the definition of a warped product, we get a \emph{pseudo-Riemannian product}. Throughout this article we will treat pseudo-Riemannian products as special cases of warped products this way.
\end{example}
\begin{example}
	If $\dim M_{i} = 1$ for each i, then the twisted product metric is locally the metric of an \emph{orthogonal coordinate system}.
\end{example}
\begin{example}[Prototypical warped product]
	The prototypical example of a warped product is the following warped product defined in (an open subset of) $\E^{n}$, which is the product manifold $\R^{+} \times S^{n-1}$ equipped with the metric $g = \d \rho^{2} + \rho^{2} \tilde{g}$ where $\tilde{g}$ is the metric of the $(n-1)$-sphere $S^{n-1}$.
\end{example}

Note that a twist function $\rho_{i}$ of a twisted product is only uniquely defined modulo products of functions $f \in \hat{\F}(M_{i})$. To elaborate, from the above definition one sees that we can multiply $\rho_{i}^{2}$ by $f \in \hat{\F}(M_{i})$ if we divide $g_{i}$ by $f$. The geometry of the twisted product is not altered by such transformations as we will see. We say that the twist functions are \emph{normalized} (with respect to a point $\bar{p} \in M$), if for each $i$, $\rho_{i}(p) = 1$ for all $p \in L_{i}(\bar{p})$. The following properties of the twisted product can be found in Proposition~2 in \cite{Meumertzheim1999}.

\begin{proposition}[Properties of the Twisted Product \cite{Meumertzheim1999}] \label{prop:tpProps}
	Let $\sideset{^{\rho}}{_{i=0}^{k}}{\prod}M_{i}$ be a twisted product with product net $\Ei = (E_{i})_{i=0}^{k}$ and $U_{i} := - \nabla \log \rho_{i}$.
	\begin{enumerate}
		\item $\Ei$ is an orthogonally integrable net.
		\item For each i the distribution $E_{i}$ is umbilical with mean curvature normal $H_{i} = U_{i}^{\perp i}$.
		\item $E_{i}$ is geodesic iff $\rho_{i}$ is independent of $M_{j}$ for $j \neq i$. $E_{i}^{\perp}$ is geodesic iff $\rho_{j}$ is independent of $M_{i}$ for $j \neq i$.
		\item If $\rho$ is independent of $M_{i}$ then $E_{i}$ is Killing. The converse is also true if the twisted product is normalized.
	\end{enumerate}
\end{proposition}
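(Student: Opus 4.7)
The plan is to derive all four properties from a single Koszul-formula computation of the Levi-Civita connection $\nabla$ of the twisted product metric. Because $[X,Y] = 0$ whenever $X \in \hat{\ve}(M_i)$ and $Y \in \hat{\ve}(M_j)$ with $i \neq j$, Koszul's formula will collapse to explicit expressions involving only the twist functions $\rho_i$, the vector fields $U_i = -\nabla\log\rho_i$, and the pulled-back Levi-Civita connections of the factors $M_i$. The key identity I would extract is the \emph{master formula}: for $X,Y \in \Gamma(E_i)$,
\[ (\nabla_X Y)^{E_i^\perp} = \langle X,Y\rangle\, U_i^{\perp i}. \]
To verify this one applies Koszul to $2\langle\nabla_X Y, Z\rangle$ with $Z$ a lift from $M_j$, $j\neq i$; the first two Koszul terms vanish by orthogonality of $E_i$ and $E_j$, all three Lie-bracket terms vanish (either because $E_i$ is integrable or because lifts from distinct factors commute), and the remaining term reduces to $-Z\langle X,Y\rangle = 2\langle X,Y\rangle\langle U_i, Z\rangle$ using the block-diagonal form of $g$. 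One then extends from lifts to all of $\Gamma(E_i)$ by observing that both sides are $C^\infty$-bilinear on $\Gamma(E_i)\times\Gamma(E_i)$ (the $E_i^\perp$-projection absorbs the derivation terms since $Y \in \Gamma(E_i)$).

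With the master formula in hand, part (1) is immediate: each $E_i$ is tangent to the canonical foliation $L_i$ hence integrable, and pairwise orthogonality of the $E_i$ is built into the block-diagonal form of $g$. Part (2) also follows directly: since $(\nabla_X Y)^{E_i^\perp}$ is symmetric in $X$ and $Y$, we read off $A^{E_i}\equiv 0$ (reconfirming integrability) and $h^{E_i}(X,Y) = \langle X,Y\rangle H_i$ with $H_i = U_i^{\perp i}$, so $E_i$ is umbilical.

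For (3), I would argue that $E_i$ is geodesic iff $H_i = U_i^{\perp i} = 0$, i.e., iff $\nabla\log\rho_i$ lies in $E_i$; block-diagonality of $g$ then identifies this with $\rho_i$ being independent of $M_j$ for every $j\neq i$. The dual statement follows by applying the master formula within each factor $E_j$, $j\neq i$, and collecting $E_i$-components: the single-factor contributions give the conditions ``$\rho_j$ is independent of $M_i$'', while the cross terms for $X \in \Gamma(E_j), Y \in \Gamma(E_l)$ with $j\neq l$ vanish by an analogous direct Koszul computation. For (4), recall that Killing means spherical together with $E_i^\perp$ geodesic. The geodesic condition reduces by (3) to: each $\rho_j$ with $j\neq i$ is independent of $M_i$. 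The spherical condition $\nabla^{E_i^\perp}_X H_i = 0$ for $X\in\Gamma(E_i)$ then translates, after normalization, into $\rho_i$ being independent of $M_i$ as well, yielding independence of the full tuple $\rho$ from $M_i$.

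The hard part is the converse direction in (4). Without normalization, the twist functions are defined only up to the gauge $\rho_i^2 g_i \mapsto (f\rho_i^2)(f^{-1} g_i)$ with $f\in\hat{\F}(M_i)$, which allows $\rho_i$ to absorb arbitrary $M_i$-dependence without altering the intrinsic geometry. Since the spherical condition is expressed entirely through the gauge-invariant mean curvature normal $H_i = U_i^{\perp i}$, it can only pin down $\rho_i$ up to this gauge; the normalization $\rho_i|_{L_i(\bar p)} \equiv 1$ eliminates the ambiguity and produces the canonical representative in which the Killing condition equates sharply to independence of the full tuple $\rho$ from $M_i$. I would have to be careful to confirm that the spherical condition really is gauge-invariant (so that the obstruction is truly the freedom to translate $\rho_i$ along $M_i$) and that no further compatibility conditions hide in the interaction between the spherical condition and the orthogonal-geodesic condition.
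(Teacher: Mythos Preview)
The paper does not prove this proposition; it is quoted as a known result from \cite{Meumertzheim1999} (Proposition~2 there), so there is no in-paper argument to compare against. Your Koszul-formula derivation is the standard route and your plan is correct.

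One clarification for part~(4), where you flagged uncertainty: keep the two directions separate. The forward implication (if $\rho$ is independent of $M_i$ then $E_i$ is Killing) needs no normalization. Once every $X\in\hat\ve(M_i)$ annihilates each $\log\rho_j$, the identity
\[
\langle\nabla_X H_i, Z\rangle \;=\; -\,XZ(\log\rho_i)\;+\;X(\log\rho_j)\,Z(\log\rho_i)
\qquad (Z\in\hat\ve(M_j),\ j\neq i),
\]
which drops straight out of your master formula together with \cref{lem:ldncov}, vanishes term by term. Normalization is needed only for the converse, exactly as you diagnose: $E_i^\perp$ geodesic gives $X(\log\rho_j)=0$ for $j\neq i$, so sphericity reduces to $XZ(\log\rho_i)=0$, which forces $\log\rho_i$ to split additively as a function on $M_i$ plus a function on $M_{i\perp}$; the condition $\rho_i|_{L_i(\bar p)}\equiv 1$ then makes the $M_i$-summand constant. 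No hidden compatibility conditions arise.
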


The following notions of twisted and warped product nets will be useful for studying conformal Killing tensors. It was originally Definition~3 in \cite{Meumertzheim1999}.

\begin{definition}[Twisted and warped product nets]
	Let M be a pseudo-Riemannian manifold and suppose $\Ei = (E_{i})_{i=0}^{k}$ is an orthogonal net.
	\begin{enumerate}
		\item $\Ei$ is called a \defn{tpnet} if it is integrable and each distribution $E_{i}$ is umbilical.
		\item $\Ei$ is called a \defn{wpnet} if $E_{i}$ is Killing for $i=1,...,k$.
	\end{enumerate}
\end{definition}
\begin{remark}
	In all applications, $\dim E_{i} > 0$ for $i > 0$. Although we will allow $\dim E_{0} = 0$ for a WP-net since this gives us a \emph{pseudo-Riemannian product net (RP-net)}.
\end{remark}

It can be shown that if $\Ei$ is a WP-net, then it is a TP-net with $E_{0} = \bigcap\limits_{i=1}^{k} E_{i}^{\perp}$ a geodesic distribution  \cite[Proposition~3]{Meumertzheim1999}. Also in the case $\Ei$ is a WP-net we refer to $E_{0}$ as the geodesic distribution of the WP-net and the $E_{i}$ for $i > 0$ as the Killing distributions of the WP-net. The following theorem gives the motivation for the above definition, it shows that every TP-net (resp. WP-net) admits a locally adapted twisted product (resp. warped product). See Corollary~1 in \cite{Meumertzheim1999} for a proof.

\begin{theorem}[Twisted and warped product nets \cite{Meumertzheim1999}] \label{cor:tpWpNets}
	Let M be a pseudo-Riemannian manifold and suppose $\Ei = (E_{i})_{i=0}^{k}$ is a TP-net (resp. WP-net). Then for every $p \in M$ there exists an open set $U \subseteq M$ containing p and a map $f : \prod_{i=0}^{k}M_{i} \rightarrow U $ which is an isometry with respect to a twisted (resp. warped) product metric on $\prod_{i=0}^{k}M_{i}$.
\end{theorem}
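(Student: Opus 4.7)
The plan is to exploit the umbilical (and for the warped case, Killing) hypotheses to upgrade the purely topological adapted diffeomorphism supplied by integrability into a genuine isometry onto a twisted (resp.\ warped) product.

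\textbf{Step 1: Adapted coordinates.} Since a TP-net is by definition integrable, \cref{thm:charIntNets} together with the Frobenius theorem yields, about any $p \in M$, a diffeomorphism $f : \prod_{i=0}^{k} M_{i} \to U$ such that each $M_{i}$-slice is an integral manifold of $E_{i}$. Because the distributions $E_{i}$ are pairwise orthogonal, the pulled-back metric splits as a block-diagonal sum $f^{*} g = \sum_{i=0}^{k} g^{(i)}$, where $g^{(i)}$ is a symmetric bilinear form on the $M_{i}$-fibres depending a priori on all coordinates.

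\textbf{Step 2: Extracting the twist functions.} For $X, Y \in \Gamma(E_{i})$ and $Z \in \Gamma(E_{i}^{\perp})$, differentiating $\langle X, Z\rangle = 0$ and using the umbilical identity $(\nabla_{X} Y)^{E_{i}^{\perp}} = \langle X, Y\rangle H_{i}$ yields, after a short computation,
\begin{equation}
  (\lied{g}{Z})(X, Y) = -2\, \langle H_{i}, Z\rangle\, g(X, Y).
\end{equation}
In adapted coordinates, taking $Z$ to be any coordinate field of some $M_{j}$ with $j \neq i$, this becomes the first-order linear relation $\partial_{Z} g^{(i)}_{ab} = -2\, \Gamma_{Z}\, g^{(i)}_{ab}$, where $\Gamma_{Z} := \langle H_{i}, Z\rangle$. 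Commuting two such transversal derivatives and using $\partial^{2} g^{(i)}_{ab} = \partial^{2} g^{(i)}_{ab}$ forces $\partial_{W} \Gamma_{Z} = \partial_{Z} \Gamma_{W}$, i.e.\ the restriction of $H_{i}^{\flat}$ to each leaf of $E_{i}^{\perp}$ (itself integrable by \cref{thm:charIntNets}) is closed. The Poincar\'e lemma on each such leaf then furnishes a positive function $\rho_{i}$ on $U$ with $\langle H_{i}, Z\rangle = -Z \log \rho_{i}$ for all $Z \in \Gamma(E_{i}^{\perp})$, so that $\rho_{i}^{-2} g^{(i)}$ is constant along $E_{i}^{\perp}$ and hence descends to a metric $g_{i}$ on $M_{i}$. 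Summing over $i$ gives $f^{*} g = \sum_{i=0}^{k} \rho_{i}^{2}\, \pi_{i}^{*} g_{i}$, precisely the twisted product form.

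\textbf{Step 3: The warped case.} If, in addition, each $E_{i}$ with $i \geq 1$ is Killing, then $E_{i}^{\perp}$ is geodesic. Applied to the twisted product just constructed, part (3) of \cref{prop:tpProps} then tells us that $\rho_{j}$ is independent of $M_{i}$ for all $j \neq i$ with $i \geq 1$; thus for every $j \geq 1$ the twist function $\rho_{j}$ depends only on $M_{0}$. A standard normalization with respect to a chosen basepoint then allows us to absorb the remaining factor and arrange $\rho_{0} \equiv 1$, producing the warped product form.

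The principal obstacle is the closedness argument in Step~2: coherently integrating the pointwise conformal scaling prescribed by umbilicality across all leaves of $E_{i}^{\perp}$ requires the 1-form $H_{i}^{\flat}$, restricted to each leaf of $E_{i}^{\perp}$, to be closed. This rests crucially on the full integrability of $E_{i}^{\perp}$ guaranteed by \cref{thm:charIntNets}, which is why it is the TP-net hypothesis, rather than mere umbilicality of each $E_{i}$ in isolation, that drives the conclusion.
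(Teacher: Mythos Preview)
The paper does not actually supply a proof of this theorem; it simply cites \cite{Meumertzheim1999} (their Corollary~1). So there is no paper-proof to compare against, and your proposal is a self-contained argument. Steps~1 and~2 are fine: the Lie-derivative identity $(\lied{g}{Z})(X,Y)=-2\langle H_i,Z\rangle g(X,Y)$ is correct, and your closedness argument via mixed partials legitimately produces the twist functions $\rho_i$.

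There is, however, a genuine gap in Step~3. From ``$E_i$ Killing'' you extract only the consequence ``$E_i^\perp$ geodesic'' and then apply \cref{prop:tpProps}(3). That gives you, for each $i\geq 1$, that $\rho_j$ is independent of $M_i$ for all $j\neq i$. For a fixed $j\geq 1$ this says $\rho_j$ is independent of $M_1,\dotsc,M_{j-1},M_{j+1},\dotsc,M_k$; it does \emph{not} say $\rho_j$ is independent of $M_j$, so your ``thus $\rho_j$ depends only on $M_0$'' does not follow. Normalization alone will not rescue this: a function $\rho_j(p_0,p_j)$ need not factor as a product, so dividing by its restriction to $L_j(\bar p)$ does not in general remove the $M_j$-dependence. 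You have silently discarded the \emph{spherical} half of the Killing hypothesis, which is exactly what is needed here.

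The clean fix is to invoke \cref{prop:tpProps}(4) rather than (3): first normalize the twisted product obtained in Step~2 at a basepoint $\bar p$; then, since each $E_i$ with $i\geq 1$ is Killing, part~(4) gives directly that the entire tuple $\rho=(\rho_0,\dotsc,\rho_k)$ is independent of $M_i$ for every $i\geq 1$. Hence every $\rho_j$ depends only on $M_0$, and since $\rho_0$ depends only on $M_0$ and equals $1$ on $L_0(\bar p)$, it is identically $1$. This yields the warped product form and closes the argument.
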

\begin{remark}
	One can also check that a similar theorem holds for a RP-net and a pseudo-Riemannian product metric.
\end{remark}

Now we can give some justification to the name ``Killing'' for a non-degenerate distribution which is spherical and has a geodesic orthogonal complement. By the above corollary, we see that a one dimensional Killing distribution is always spanned by a Killing vector field. Conversely any normal non-null Killing vector field spans a Killing distribution. The following can be said about multidimensional Killing distributions via the warped products they induce \cite{Zeghib2011}:

\begin{proposition}[Lifting isometries from Killing distributions]
	Let $M = B \times_{\rho} F$ be a warped product and suppose $\tilde{f} : F \rightarrow F$ is an isometry of F. Then the lift $f$ defined by
	
	\begin{equation}
		f(x,y) := (x, \tilde{f}(y)), \quad (x,y) \in B \times F
	\end{equation}
	
	\noindent is an isometry of M.
\end{proposition}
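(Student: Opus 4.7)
The plan is to verify directly that $f^{*}g = g$, where $g = \pi_{B}^{*}g_{B} + \rho^{2}\pi_{F}^{*}g_{F}$ is the warped product metric on $M = B \times_{\rho} F$. Bijectivity and smoothness of $f$ are immediate, since $\tilde{f}$ is a diffeomorphism and $f$ factors as the product map $\operatorname{id}_{B} \times \tilde{f}$; in particular $f^{-1}(x,y) = (x, \tilde{f}^{-1}(y))$. So the real content is the metric-preserving condition.

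First I would record the two compatibility identities that come from the explicit form $f(x,y) = (x, \tilde{f}(y))$, namely $\pi_{B} \circ f = \pi_{B}$ and $\pi_{F} \circ f = \tilde{f} \circ \pi_{F}$. The first identity immediately gives
\begin{equation}
	f^{*}\pi_{B}^{*}g_{B} = (\pi_{B} \circ f)^{*}g_{B} = \pi_{B}^{*}g_{B},
\end{equation}
so the ``base'' part of the metric is preserved. The second identity, together with the hypothesis $\tilde{f}^{*}g_{F} = g_{F}$, gives
\begin{equation}
	f^{*}\pi_{F}^{*}g_{F} = (\tilde{f} \circ \pi_{F})^{*}g_{F} = \pi_{F}^{*}(\tilde{f}^{*}g_{F}) = \pi_{F}^{*}g_{F},
\end{equation}
so the ``fiber'' part is preserved as well.

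It remains to handle the warping factor. By the definition of a warped product, $\rho$ depends only on $B$, i.e.\ $\rho = \tilde{\rho} \circ \pi_{B}$ for some $\tilde{\rho} : B \to \R^{+}$. Combining this with $\pi_{B} \circ f = \pi_{B}$ yields $\rho \circ f = \rho$, hence $f^{*}(\rho^{2}) = \rho^{2}$. Putting the three pieces together,
\begin{equation}
	f^{*}g = f^{*}\pi_{B}^{*}g_{B} + f^{*}(\rho^{2})\, f^{*}\pi_{F}^{*}g_{F} = \pi_{B}^{*}g_{B} + \rho^{2}\pi_{F}^{*}g_{F} = g,
\end{equation}
which is the required equality.

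There is no real obstacle; the only subtle point is noticing that the argument crucially uses $\rho \in \hat{\F}(B)$, so the same recipe does \emph{not} lift isometries of $F$ in a general twisted product, where $\rho$ could also depend on $F$. Analogously, lifting isometries of $B$ would require additionally that $\tilde{f}$ preserve $\tilde{\rho}$, which is why only the fiber side is considered here.
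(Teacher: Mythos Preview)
Your argument is correct: the direct verification that $f^{*}g = g$ via the identities $\pi_{B}\circ f = \pi_{B}$, $\pi_{F}\circ f = \tilde{f}\circ\pi_{F}$, and $\rho\circ f = \rho$ is clean and complete, and your closing remark about why the twisted-product case fails is apt. The paper itself does not supply a proof of this proposition but simply attributes it to \cite{Zeghib2011}, so there is nothing to compare against beyond noting that your approach is the standard one.
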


\subsection{Conformal Killing tensors}

A tensor $K \in S^{p}(M)$ is said to be a \defn{ckt} of valence p if there exists $C \in S^{p-1}(M)$ (called the \emph{conformal factor}) such that

\begin{equation}
	[K,G] = - 2 C \odot G
\end{equation}

\noindent where $[\cdot,\cdot]$ is the \emph{Schouten bracket}\footnote{Note that the Schouten bracket for symmetric tensors is directly related to the Poisson bracket on the cotangent bundle \cite{Woodhouse1975a,Nijenhuis1955}.} \cite{Woodhouse1975a,Nijenhuis1955} which generalizes the usual Lie bracket of vector fields. When $C = 0$, $K$ is called a \defn{kt} of valence p and additionally if $p = 1$ then $K$ is a \defn{kv} and the above equation reduces to $\lied{G}{K} = 0$. In terms of the Levi-Civita connection $\nabla$, for a Killing tensor, the above equation becomes in coordinates:

\begin{equation}
	\nabla_{(i}K_{i_{1}...i_{p})} = 0
\end{equation}

\noindent which implies that $K_{i_1 \ldots i_p}\dot{x}^{i_1}\ldots\dot{x}^{i_p}$ is constant along geodesics where $(x^{i}(t))$ are parametrized geodesics in local coordinates.

We now enumerate special classes of conformal Killing tensors that are of interest. If $K \in S^{2}(M)$ is a Killing tensor, we say it is a \defn{chkt} if it has simple eigenfunctions and its eigenspaces are orthogonally integrable. Due to \cref{thm:charIntNets}, the condition that the eigenspaces are orthogonally integrable is equivalent to the condition that $K$ has normal eigenvector fields\footnote{A normal vector field is a non-zero vector field whose orthogonal distribution is Frobenius integrable.}. If $K$ is a CKT with conformal factor $C = \nabla f$ for some $f \in \F(M)$, we say $K$ is a CKT of \emph{gradient-type}. In particular if $f = \tr{K}$, then we say $K$ is of \emph{trace-type}.

\subsection{Orthogonal Separation of The Hamilton-Jacobi Equation}

Suppose M is a pseudo-Riemannian manifold and denote by $T^{*}M$ the cotangent bundle of M. Suppose $(q,p)$ are canonical coordinates on $T^{*}M$ and let $V \in \F(M)$. Then the (natural) \emph{Hamiltonian} $H$ is defined by:

\begin{equation}
	H(q,p) := \frac{1}{2} \bp{p,p} + V(q)
\end{equation}

The \emph{geodesic Hamiltonian} is obtained by setting $V \equiv 0$ in the above equation.

Separable coordinates $(q^{i})$ on $M$ are characterized as solutions of the Levi-Civita equations \cite{Levi-Civita1904} (see \cite[P.~13]{Kalnins1986} for English readers). By analyzing these equations, one can show that a given natural Hamiltonian is separable in given coordinates only if the geodesic Hamiltonian is separable in those coordinates \cite{Benenti1997a}. One can also show that if a given orthogonal coordinate system,$(q^{i})$, is separable, then any coordinates adapted to the orthogonal web formed by $(q^{i})$ are also separable\footnote{We should mention that our definition of an orthogonal web is dual to the one used in \cite{Benenti1997a}, which defines an orthogonal web as $n$ pair-wise orthogonal co-dimension one non-degenerate foliations.}. Hence orthogonal separation depends only on the existence of a special orthogonal web, hereafter called an (orthogonal) \emph{separable web}. By further analysis of the Levi-Civita equations, one can characterize orthogonal separation of geodesic Hamiltonians in terms of ChKTs:

\begin{theorem}[Orthogonal Separation of Geodesic Hamiltonians \cite{Benenti1997a}] \label{thm:HJosepI}
	The geodesic Hamiltonian is separable in an orthogonal web $\Ei$ iff there exists a ChKT whose eigenspaces 
	form $\Ei$.
\end{theorem}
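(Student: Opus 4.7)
The plan is to reduce both directions to the classical Levi-Civita separability conditions by working in coordinates adapted to the orthogonal web.

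For the ``if'' direction, I would start by assuming the existence of a ChKT $K$ whose eigenspaces form $\mathcal{E}$. Since $K$ has simple eigenfunctions each eigenspace is one-dimensional, and orthogonal integrability combined with \cref{thm:charIntNets} implies that $\mathcal{E}$ is a bona fide integrable web. Choosing local coordinates $(q^i)$ adapted to $\mathcal{E}$, both $g$ and $K$ become diagonal, say $K = \sum_i \lambda_i g^{ii}\, \partial_i \otimes \partial_i$. The Killing equation $\nabla_{(a}K_{bc)} = 0$ written out component-wise in these coordinates reduces, for $i \neq j$, to the family of relations
\begin{equation}
\partial_j \lambda_i = (\lambda_j - \lambda_i)\, \partial_j \ln g_{ii}.
\end{equation}
Simplicity of the eigenvalues then allows one to solve for $\partial_j \ln g_{ii}$ in terms of the $\lambda$'s. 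The Frobenius-type compatibility conditions on this overdetermined system (from cross-differentiation and the remaining components of the Killing equation) are precisely the Levi-Civita separability conditions on the diagonal metric coefficients, so the geodesic Hamiltonian separates in $\mathcal{E}$.

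For the ``only if'' direction, separability of the geodesic Hamiltonian in $\mathcal{E}$ produces, via the Stäckel characterization, an $n$-dimensional space of mutually commuting quadratic first integrals simultaneously diagonalized in coordinates adapted to $\mathcal{E}$: this is the Killing--Stäckel space recalled in the introduction. I would pick an element $K$ of this family with pointwise distinct eigenvalues, which can be obtained as a generic linear combination of a basis. By construction $K$ is a Killing tensor whose eigenspaces coincide with the one-dimensional leaves of $\mathcal{E}$; these are automatically orthogonal and integrable and the eigenfunctions are simple, so $K$ is a ChKT.

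The main obstacle is the coordinate bookkeeping in the ``if'' direction: one must carefully split the Killing equation according to which indices coincide (three distinct, two equal, all equal), use simplicity of the $\lambda_i$ to invert the algebraic relation between $\partial_j \lambda_i$ and $\partial_j \ln g_{ii}$, and then check that the resulting compatibility conditions on $(g_{ii}, \lambda_i)$ are equivalent to the existence of a Stäckel matrix. This is essentially Eisenhart's original calculation; the substantive content of the theorem is that the intrinsic data carried by a ChKT captures exactly the Levi-Civita separability conditions, with the remaining first integrals supplied by the ambient KS-space.
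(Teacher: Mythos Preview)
The paper does not supply a proof of this theorem at all: it is stated in the preliminaries section as a known result with citation to Benenti~\cite{Benenti1997a}, and the paper's own theorem environments distinguish ``Known Results'' (the plain \texttt{theorem} environment used here) from ``New Results'' (the \texttt{thmMy} environment). So there is nothing in the paper to compare your argument against.

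That said, your sketch is the standard Eisenhart--Benenti route and is essentially correct. In the ``if'' direction the relation $\partial_j \lambda_i = (\lambda_j - \lambda_i)\,\partial_j \ln g_{ii}$ is exactly what the Killing equation yields in diagonal coordinates, and cross-differentiating these together with the remaining components does reproduce the Levi-Civita equations; the paper alludes to this when it says ``by further analysis of the Levi-Civita equations, one can characterize orthogonal separation of geodesic Hamiltonians in terms of ChKTs.'' In the ``only if'' direction the existence of the $n$-dimensional commuting family (the KS-space) is precisely what the paper invokes just after stating the theorem, and a generic element of that space has simple eigenvalues, giving the required ChKT. The only point worth tightening is the claim that the compatibility conditions are \emph{equivalent} to St\"{a}ckel form: you need both the mixed-index Killing equations and the observation that the $\lambda_i$ themselves satisfy a closed system (Eisenhart's ``separation equations''), but this is exactly the content of Eisenhart~\cite{Eisenhart1934} as you note.
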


Hence the problem of classifying separable coordinates for a geodesic Hamiltonian is equivalent to the problem of classifying ChKTs. Before presenting results on the separation of natural Hamiltonians, we need a further definition. By analyzing the equations satisfied by a ChKT in adapted coordinates, one can show that there is an $n$-dimensional vector space of KTs simultaneously diagonalized in the adapted coordinates. This vector space of KTs is called the \defn{kss} associated with an orthogonal separable web.

The following theorem addresses the separation of natural Hamiltonians:

\begin{theorem}[Benenti's Theorem \cite{Benenti1997a}] \label{thm:HJosepII}
	A natural Hamiltonian with potential $V$ is separable in a web $\Ei$ iff there exists a ChKT $K$ whose eigenspaces form $\Ei$ which satisfies the \emph{dKdV equation}:

	\begin{equation} \label{eq:dKdV}
		\d (K \d V) = 0
	\end{equation}
	
	Furthermore if $V$ separates in the separable web $\Ei$, then all $K$ in the KS-space associated with $\Ei$ satisfy the dKdV equation with $V$.
\end{theorem}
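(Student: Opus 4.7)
The plan is to convert separability of the natural Hamiltonian $H$ into the existence of $n$ Poisson-commuting quadratic-in-momenta first integrals, and then to exploit the uniform structure of the KS-space in coordinates adapted to $\Ei$.

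First I would recall that, by the classical Levi-Civita analysis, $H$ is orthogonally separable in a web $\Ei$ if and only if there exist $n$ pointwise-independent, pairwise Poisson-commuting first integrals of the form $F_{a} = \tfrac{1}{2}K_{a}^{ij}p_{i}p_{j} + U_{a}(q)$ with each $K_{a}$ diagonal in coordinates adapted to $\Ei$. Expanding $\{F_{a},H\}=0$ by homogeneity in the momenta splits it into a purely geometric piece $[K_{a},G]=0$ (each $K_{a}$ is a Killing tensor) and a first-order piece $K_{a}\,dV = dU_{a}$. The latter equation admits a local potential $U_{a}$ iff the closedness condition $d(K_{a}\,dV)=0$ holds. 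Hence the problem reduces to understanding which Killing tensors diagonal in $\Ei$ satisfy the dKdV equation.

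For the forward direction of the biconditional, separability of $H$ implies separability of the geodesic Hamiltonian, and \cref{thm:HJosepI} produces a ChKT $K$ whose eigenspaces form $\Ei$; choosing $K_{1}=K$ in the reduction above yields $d(K\,dV)=0$. For the converse, the existence of such a $K$ gives, via \cref{thm:HJosepI}, geodesic separability and hence an $n$-dimensional KS-space $\mathcal{K}(\Ei)$ of pairwise commuting Killing tensors diagonal in adapted coordinates. To complete the proof and simultaneously obtain the ``furthermore'' clause, I must show that the single hypothesis $d(K\,dV)=0$ implies $d(\tilde K\,dV)=0$ for every $\tilde K\in\mathcal{K}(\Ei)$, after which each $K_{a}\in\mathcal{K}(\Ei)$ may be promoted to a first integral by defining $U_{a}$ locally from $K_{a}\,dV$.

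This last implication is the main technical obstacle. My strategy is to work in coordinates $(q^{i})$ adapted to $\Ei$, where each $K_{a}\in\mathcal{K}(\Ei)$ is diagonal with eigenvalues $\lambda^{(a)}_{i}(q)$, and to write $d(K_{a}\,dV)=0$ componentwise as
\begin{equation*}
(\lambda^{(a)}_{j}-\lambda^{(a)}_{i})\,\partial_{i}\partial_{j}V \;=\; (\partial_{i}\lambda^{(a)}_{j})\,\partial_{j}V \;-\; (\partial_{j}\lambda^{(a)}_{i})\,\partial_{i}V, \qquad i\ne j.
\end{equation*}
The key input, to be extracted from the Killing equations satisfied by a diagonal tensor in orthogonal coordinates, is that the ``rotation coefficients''
\begin{equation*}
\Gamma_{ij} \;:=\; \frac{\partial_{j}\lambda^{(a)}_{i}}{\lambda^{(a)}_{j}-\lambda^{(a)}_{i}}, \qquad i\ne j,
\end{equation*}
are \emph{independent of} $a$, being determined entirely by the metric coefficients of $\Ei$ (this is the standard St\"{a}ckel-type identity for eigenvalues of diagonal Killing tensors on a separable web, and relies essentially on $\Ei$ being a ChKT web so that the $\lambda^{(a)}_{j}-\lambda^{(a)}_{i}$ are nonzero and the Killing equations can be solved pointwise). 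Once this independence is in hand, the dKdV equations for different elements of $\mathcal{K}(\Ei)$ all collapse to the single St\"{a}ckel multiplier system $\partial_{i}\partial_{j}V=\Gamma_{ij}\,\partial_{j}V-\Gamma_{ji}\,\partial_{i}V$ for $V$, intrinsic to the web. Therefore dKdV for the ChKT $K$ is equivalent to dKdV for every $\tilde K\in\mathcal{K}(\Ei)$, and by the quadratic first integral reduction both are equivalent to the orthogonal separability of the natural Hamiltonian with potential $V$ in $\Ei$, proving both assertions.
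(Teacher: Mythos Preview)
The paper does not supply its own proof of this statement; it is quoted as a known result from \cite{Benenti1997a}. There is therefore nothing to compare your argument against directly.

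That said, your proposal is correct in outline, and the paper independently derives the identity that drives your ``rotation coefficient'' step. In the calculation preceding \cref{prop:solnsOfDKdVeQ}, the authors show that for any Killing tensor $K$ diagonal in coordinates adapted to the web, with eigenfunctions $\lambda_i$ and twist functions $\rho_i$, one has $x\lambda_j = (\lambda_j - \lambda_i)\,x\sigma_j$ (where $\sigma_j=\log\rho_j^2$), and hence
\[
\d(K\,\d V)(x,y) \;=\; (\lambda_j-\lambda_i)\bigl(x y V + x\sigma_j\, yV + y\sigma_i\, xV\bigr).
\]
The bracketed factor depends only on the web, not on $K$; this is exactly your assertion that the $\Gamma_{ij}$ are independent of the choice of Killing tensor in the KS-space, and it immediately yields both the converse implication and the ``furthermore'' clause.

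One small refinement is worth making. You state the key identity as a claim about the \emph{ratio} $\Gamma_{ij}=\partial_j\lambda_i^{(a)}/(\lambda_j^{(a)}-\lambda_i^{(a)})$, which presupposes $\lambda_j^{(a)}\ne\lambda_i^{(a)}$; but a generic element of the KS-space need not have simple eigenfunctions. It is cleaner to keep the factored form $\partial_j\lambda_i^{(a)}=(\lambda_i^{(a)}-\lambda_j^{(a)})\,\partial_j\sigma_i$, valid unconditionally for diagonal Killing tensors, so that $\d(K_a\,\d V)=0$ follows for every $K_a$ once the web-intrinsic factor vanishes, which is exactly what $\d(K\,\d V)=0$ for the ChKT $K$ guarantees.
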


\section{Orthogonal conformal Killing tensors}

For now, by a (conformal) Killing tensor we will mean an orthogonal (conformal) Killing tensor. In this section we fist present a formulation of the (conformal) Killing equation in terms of the eigenspaces of a (conformal) Killing tensor given in \cite{Coll2006}. This formulation will be the most useful in our study. We then use the theory of twisted and warped products given in \cite{Meumertzheim1999} to show that an orthogonal (conformal) Killing tensor naturally induces a twisted product and then derive the well known (conformal) Killing equation in the eigenframe. We then give necessary and sufficient conditions on an eigenfunction of the tensor for the associated eigenspace to be geodesic or Killing. Finally we end with a well known result on restricting CKTs to special submanifolds which will be used later.

The following theorem can be deduced from Theorem~2 in \cite{Coll2006} which only covered traceless orthogonal CKTs. Before we state it, for $x,y \in \ve(M)$ we let $\{x,y\} := \frac{1}{2}(\nabla_{x}y + \nabla_{y}x)$. Also, given a collection of distributions $(E_{i})_{i=1}^{k}$ satisfying $T M = \bigobot_{i=1}^{k}E_{i}$, then for any vector $x \in \ve(M)$, we have the orthogonal splitting $x = \sum\limits_i x^i$ where each $x^i \in \Gamma(E_{i})$.

\begin{theorem}[Geometric Characterization of Orthogonal CKTs \cite{Coll2006}] \label{prop:OKTcharII}
	Let T be an orthogonal tensor and let $E_{i}$ be the eigenspaces corresponding to the eigenfunctions $\lambda_{i}$. Then T is a conformal Killing tensor with conformal factor t iff
	
	\begin{enumerate}
		\item The eigenspaces $E_{i}$ are almost umbilical.
		\item The mean curvature normal $H_i$ of the eigenspace $E_i$ satisfies the following equation:
		
		\begin{equation}
			H_{i}  = - \frac{1}{2} \sum\limits_{j \neq i} (\nabla \log \Abs{\lambda_{i}-\lambda_{j}})^{j}
		\end{equation}
		\item The conformal factor satisfies the following equation:
		\begin{equation}
			t = \sum (\nabla \lambda_{i})^{i}
		\end{equation}
		\item $T(\{x,y\},z) + T(\{z,x\},y) + T(\{y,z\},x) = 0$ for eigenvectors $x,y,z$ with different eigenfunctions
	\end{enumerate}
\end{theorem}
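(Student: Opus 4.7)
The plan is to reduce the statement to Theorem~2 of \cite{Coll2006}, which handles the traceless case. Set $\alpha := \tr{T}/n$ and consider the trace-free part $T_{0} := T - \alpha G$. Since $G$ is covariantly constant, the defining equation $\nabla_{(a} T_{bc)} = t_{(a} g_{bc)}$ of a CKT (with vector conformal factor $t$) immediately yields $\nabla_{(a}(T_{0})_{bc)} = (t - \nabla\alpha)_{(a} g_{bc)}$; thus $T$ is a CKT with conformal factor $t$ if and only if $T_{0}$ is a CKT with conformal factor $t_{0} := t - \nabla\alpha$. Crucially, $T$ and $T_{0}$ share the same eigenspaces $E_{i}$ while their eigenfunctions are merely shifted by a common scalar, $\lambda_{i}^{0} = \lambda_{i} - \alpha$. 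Applying Theorem~2 of \cite{Coll2006} to the traceless tensor $T_{0}$ then gives conditions (1)--(4) expressed in terms of the $\lambda_{i}^{0}$ and $t_{0}$, and the task reduces to translating these back to $T$.

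Conditions (1) and (2) are manifestly invariant under the trace shift: the almost umbilical property depends only on the eigenspaces themselves, which are unchanged, and the mean curvature formula involves only the differences $\lambda_{i}^{0} - \lambda_{j}^{0} = \lambda_{i} - \lambda_{j}$. For condition (3), since $\sum_{i} X^{i} = X$ for any vector field $X$ under the orthogonal decomposition $TM = \bigobot_{i} E_{i}$, one computes
\[
t = t_{0} + \nabla\alpha = \sum_{i}(\nabla\lambda_{i}^{0})^{i} + \nabla\alpha = \sum_{i}(\nabla\lambda_{i})^{i} - \nabla\alpha + \nabla\alpha = \sum_{i}(\nabla\lambda_{i})^{i},
\]
as required.

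The only genuinely non-trivial invariance check is condition (4). Writing $T = T_{0} + \alpha G$, for eigenvectors $x, y, z$ with pairwise distinct eigenfunctions, hence lying in mutually orthogonal eigenspaces, one has
\[
\sum_{\mathrm{cyc}} T(\{x,y\}, z) = \sum_{\mathrm{cyc}} T_{0}(\{x,y\}, z) + \alpha \sum_{\mathrm{cyc}} \langle \{x,y\}, z \rangle.
\]
Expanding $\{x,y\} = \tfrac{1}{2}(\nabla_{x} y + \nabla_{y} x)$ and using metric compatibility, each pair $\langle \nabla_{x} y, z \rangle + \langle \nabla_{x} z, y \rangle$ equals $x\langle y, z \rangle = 0$ because $y$ and $z$ are orthogonal; grouping the six terms of $\sum_{\mathrm{cyc}} \langle \{x,y\}, z \rangle$ into three such pairs shows the whole sum vanishes. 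Hence condition (4) for $T$ coincides with condition (4) for $T_{0}$. The main obstacle in the argument is precisely this cyclic-sum vanishing: it is what lets the trace shift be absorbed transparently in condition (4) and thereby reduces the general orthogonal case to the traceless one already settled in \cite{Coll2006}.
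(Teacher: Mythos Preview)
Your proposal is correct and follows precisely the route the paper indicates: the paper does not spell out a proof but states that the result ``can be deduced from Theorem~2 in \cite{Coll2006} which only covered traceless orthogonal CKTs,'' and your argument carries out exactly this reduction by passing to the trace-free part $T_{0} = T - (\tr{T}/n)\,G$ and verifying that each of the four conditions is invariant under the shift $\lambda_{i} \mapsto \lambda_{i} - \alpha$, $t \mapsto t - \nabla\alpha$. The cyclic-sum check in condition~(4) is handled correctly via metric compatibility and pairwise orthogonality of the eigenspaces.
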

\begin{remark}
	For a Killing tensor the second condition can be simplified to:
	
	\begin{align}
		H_{i}  & = - \frac{1}{2} \sum\limits_{j \neq i} \frac{1}{(\lambda_{i}-\lambda_{j})} (\nabla \lambda_{i})^{j}
	\end{align}
\end{remark}

We will now proceed to show that when the eigenspaces are orthogonally integrable, Condition 4 of the above theorem is automatically satisfied. The following lemma can be deduced from a knowledge of rotation coefficients, although we state it for completeness.
\begin{lemma} \label{lem:ldncov}
	Suppose $(E_{i})_{i=0}^{k}$ is an integrable net. Then for $x \in \Gamma(E_{i})$ and $y \in \Gamma(E_{j})$ with $j \neq i$, $\nabla_{x}y \in \Gamma(E_{i} \obot E_{j})$.
\end{lemma}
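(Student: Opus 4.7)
The plan is to prove the lemma by a direct application of the Koszul formula for the Levi--Civita connection. Since $T M = \bigobot_{\alpha=0}^{k} E_{\alpha}$, it suffices to show that $\bp{\nabla_{x} y, z} = 0$ for every $z \in \Gamma(E_{\ell})$ with $\ell \notin \{i, j\}$; then $\nabla_{x} y$ has no component in any such $E_{\ell}$, forcing $\nabla_{x} y \in \Gamma(E_{i} \obot E_{j})$.

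First I would write down the Koszul identity
\[
2 \bp{\nabla_{x} y, z} = x \bp{y, z} + y \bp{x, z} - z \bp{x, y} + \bp{[x, y], z} - \bp{[x, z], y} - \bp{[y, z], x}
\]
and show each of its six terms vanishes. The three directional-derivative terms vanish because distinct distributions in an orthogonal net are pointwise orthogonal, so the functions $\bp{x, y}$, $\bp{y, z}$, and $\bp{x, z}$ are identically zero. For the bracket terms the idea is to pair each bracket with the integrable orthogonal complement that contains both of its inputs, and then test against the remaining vector, which lies in the corresponding $E_{\alpha}$. By \cref{thm:charIntNets}, integrability of the net is equivalent to each $E_{\alpha}^{\perp}$ being integrable. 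Since $x, y \in \Gamma(E_{\ell}^{\perp})$, one has $[x, y] \in \Gamma(E_{\ell}^{\perp})$, hence $\bp{[x, y], z} = 0$; similarly $x, z \in \Gamma(E_{j}^{\perp})$ gives $\bp{[x, z], y} = 0$, and $y, z \in \Gamma(E_{i}^{\perp})$ gives $\bp{[y, z], x} = 0$.

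I do not expect a serious obstacle; the only bookkeeping is to match each of the three bracket terms with the correct integrable orthogonal complement, and these matchings are forced by the cyclic structure of the Koszul identity. A more elementary variant of the same argument would sidestep Koszul and instead combine the three Leibniz identities obtained by differentiating $\bp{x, y} = \bp{y, z} = \bp{x, z} = 0$ with the three torsion-free identities $[u, v] = \nabla_{u} v - \nabla_{v} u$, and extract $\bp{\nabla_{x} y, z}$ from the resulting small linear system in the three quantities $\bp{\nabla_{x} y, z}$, $\bp{\nabla_{y} z, x}$, $\bp{\nabla_{z} x, y}$; but the Koszul formulation packages this calculation in one line.
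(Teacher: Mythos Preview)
Your proof is correct and essentially coincides with the paper's. The paper's argument is precisely the ``more elementary variant'' you describe at the end: it uses the torsion-free identity $g(\nabla_{u}v,w)-g(\nabla_{v}u,w)=g([u,v],w)=0$ (by integrability of $E_{k}^{\perp}$) together with the Leibniz identity $g(\nabla_{u}v,w)+g(v,\nabla_{u}w)=0$, cycles through the permutations of $x,y,z$, and obtains the chain
\[
g(\nabla_{x}y,z)=g(\nabla_{y}x,z)=-g(x,\nabla_{y}z)=-g(x,\nabla_{z}y)=g(\nabla_{z}x,y)=g(\nabla_{x}z,y)=-g(z,\nabla_{x}y),
\]
forcing $g(\nabla_{x}y,z)=0$. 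Your Koszul-formula argument packages exactly the same six identities into one line.
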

\begin{proof}
	Suppose $z \in \Gamma(E_{k})$ where k is different from $i,j$. Observe that
		
	\begin{equation}
		g(\nabla_{x}y, z) - g(\nabla_{y}x, z) = g([x,y], z) = 0
	\end{equation}
	
	Also
	
	\begin{equation}
		g(\nabla_{y}x, z) + g(x, \nabla_{y}z) = \nabla_{y}g(x,z) = 0
	\end{equation}
	
	\noindent The above two equations hold for all permutations of $x,y,z$. Thus
	
	\begin{multline}
		g(\nabla_{x}y, z) = g(\nabla_{y}x, z) = - g(x, \nabla_{y}z) = -  g(x, \nabla_{z}y)  \\
		= g(\nabla_{z} x, y) = g(\nabla_{x} z, y) = - g( z, \nabla_{x}y)
	\end{multline}
	
	Thus $g(\nabla_{x}y, z) = 0$.
\end{proof}

The following corollary gives a version of the above theorem for orthogonal tensors with orthogonally integrable eigenspaces.
\begin{corMy} \label{prop:OKTcharIII}
	Suppose T is an orthogonal tensor with orthogonally integrable eigenspaces and let $E_{i}$ be the eigenspaces corresponding to the eigenfunctions $\lambda_{i}$.
	Then T is a conformal Killing tensor with conformal factor t iff
	
	\begin{enumerate}
		\item The eigenspaces $E_{i}$ are umbilical.
		\item The mean curvature normals of the eigenspaces satisfy the following equation:
		
		\begin{equation} \label{eq:CKTmeanCurv}
			H_{i}  = - \frac{1}{2} \sum\limits_{j \neq i} (\nabla \log \Abs{\lambda_{i}-\lambda_{j}})^{j}
		\end{equation}
		\item The conformal factor satisfies the following equation:
		\begin{equation} \label{eq:CKTconFac}
			t = \sum (\nabla \lambda_{i})^{i}
		\end{equation}
	\end{enumerate}
\end{corMy}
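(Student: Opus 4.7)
The plan is to deduce this corollary directly from Theorem~\ref{prop:OKTcharII} by exploiting the orthogonal integrability hypothesis to (i) upgrade ``almost umbilical'' to ``umbilical'' in Condition~1 and (ii) show that Condition~4 of that theorem is automatically satisfied. Conditions~2 and~3 are identical in the two statements, so no work is needed there.

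For the first reduction, recall that an almost umbilical distribution $E_i$ is umbilical precisely when $A^{E_i}\equiv 0$, that is, when $E_i$ is integrable. Since the hypothesis ``orthogonally integrable'' together with \cref{thm:charIntNets} implies each $E_i$ is integrable, the two notions coincide on every eigenspace of $T$. So in the current setting Condition~1 of \cref{prop:OKTcharII} is equivalent to Condition~1 here, regardless of whether $T$ is a CKT.

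For the second reduction, I would fix eigenvectors $x,y,z$ lying in pairwise distinct eigenspaces $E_a$, $E_b$, $E_c$, and show that each term in
\begin{equation*}
  T(\{x,y\},z)+T(\{z,x\},y)+T(\{y,z\},x)
\end{equation*}
vanishes individually. By \cref{lem:ldncov} applied to the orthogonally integrable net $(E_i)$, $\nabla_{x}y$ and $\nabla_{y}x$ both lie in $\Gamma(E_a\obot E_b)$, hence so does $\{x,y\}=\tfrac{1}{2}(\nabla_{x}y+\nabla_{y}x)$. Since $T$ is self-adjoint with eigenspaces $(E_i)$, and $z\in\Gamma(E_c)$ is orthogonal to $E_a\obot E_b$, the image $T\{x,y\}$ still lies in $E_a\obot E_b$, so $T(\{x,y\},z)=0$. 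The same argument, applied to the other two cyclic permutations, kills the remaining terms. Thus Condition~4 of \cref{prop:OKTcharII} holds automatically.

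Combining these two observations, under the orthogonal integrability assumption the four conditions of \cref{prop:OKTcharII} reduce to exactly the three listed here, giving the claimed equivalence. There is no real obstacle in the argument; the only subtle point is the appeal to \cref{lem:ldncov}, which requires the \emph{full} integrability of the net (not just of each $E_i$ individually) and is precisely what the orthogonal integrability hypothesis supplies via \cref{thm:charIntNets}.
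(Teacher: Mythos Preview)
Your proposal is correct and matches the paper's proof essentially line for line: the paper also observes that $A^{E_i}=0$ under the integrability hypothesis (so almost umbilical $\Leftrightarrow$ umbilical) and then invokes \cref{lem:ldncov} to dispose of Condition~4 of \cref{prop:OKTcharII}. Your write-up simply spells out the application of \cref{lem:ldncov} in more detail than the paper does.
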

\begin{proof}
	Since $A^{E_{i}} = 0$ for each i, the eigenspaces are almost umbilical iff they are umbilical. Condition 4 of \cref{prop:OKTcharII} is automatically satisfied due to \cref{lem:ldncov}, hence the result holds by \cref{prop:OKTcharII}.
\end{proof}

Now we use a result from \cite{Meumertzheim1999} which characterizes twisted products to show that orthogonally integrable CKTs naturally give rise to a twisted product structure.

\begin{corMy}[Conformal Killing tensors induce twisted product nets] \label{cor:CKTsandTPnets}
	Suppose T is an orthogonal tensor with orthogonally integrable eigenspaces $(E_{i})_{i=1}^{k}$ and associated eigenfunctions $(\lambda_{i})_{i=1}^{k}$. Let $M = \prod\limits_{i=1}^{k} M_{i}$ be a connected product manifold locally adapted to the eigenspaces of T. Then T is a CKT iff $(M,g)$ is a twisted product with twist functions $\rho_{i}$ satisfying the following equation:
	
	\begin{equation} \label{eq:meanCurvWarpFn}
		(\nabla \log \rho_{i})^{\perp i} = \frac{1}{2} \sum\limits_{j \neq i} (\nabla \log \Abs{\lambda_{i}-\lambda_{j}})^{j}
	\end{equation}
\end{corMy}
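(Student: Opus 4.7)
The plan is to combine the geometric characterization of orthogonal CKTs in \cref{prop:OKTcharIII} with the structure theory of TP-nets, using the mean curvature formula for twisted products from \cref{prop:tpProps}. The crucial observation is that the umbilical condition on the eigenspaces (part of the CKT characterization) together with the orthogonal integrability hypothesis is precisely the definition of a TP-net, so by \cref{cor:tpWpNets} such a net is locally the product net of a twisted product. The job is then simply to identify the mean curvature normals computed from the CKT side with those coming from the twist functions.

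For the forward direction, I would assume $T$ is a CKT and apply \cref{prop:OKTcharIII} to conclude that each eigenspace $E_i$ is umbilical with mean curvature normal $H_i$ given by \cref{eq:CKTmeanCurv}. Combined with the orthogonal integrability assumption, this makes $(E_i)_{i=1}^{k}$ a TP-net, so \cref{cor:tpWpNets} equips the given adapted product $M = \prod_{i=1}^{k} M_i$ (locally) with a twisted product metric for some twist functions $\rho_i$. By \cref{prop:tpProps}(2), the mean curvature normal of $E_i$ in this twisted product is $H_i = U_i^{\perp i} = -(\nabla \log \rho_i)^{\perp i}$. Equating this expression with \cref{eq:CKTmeanCurv} yields \cref{eq:meanCurvWarpFn}. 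For the reverse direction, if $(M,g)$ is a twisted product with twist functions satisfying \cref{eq:meanCurvWarpFn}, then by \cref{prop:tpProps} the product net is orthogonally integrable, each $E_i$ is umbilical, and the mean curvature normal is $H_i = -(\nabla \log \rho_i)^{\perp i}$, which by hypothesis equals $-\tfrac{1}{2}\sum_{j \neq i}(\nabla \log |\lambda_i - \lambda_j|)^j$. Defining $t := \sum_i (\nabla \lambda_i)^i$, all three conditions of \cref{prop:OKTcharIII} are met, so $T$ is a CKT with conformal factor $t$.

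The main subtlety I anticipate is ensuring that the twisted product structure produced by \cref{cor:tpWpNets}, which is a priori only a local isometry between some abstract product and a neighborhood in $M$, genuinely endows the given adapted product $M = \prod_{i=1}^{k} M_i$ with a twisted product metric whose twist functions are well-defined globally on $M$. Since the factor decomposition is fixed by the eigenspaces and $M$ is connected, the local twist functions should patch together (up to the usual rescaling ambiguity discussed before \cref{prop:tpProps}) to give globally defined smooth positive $\rho_i$ on $M$; verifying that this patching respects \cref{eq:meanCurvWarpFn} is the only non-bookkeeping point in the proof.
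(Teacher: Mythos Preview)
Your proposal is correct and matches the paper's own proof almost exactly: the paper's argument is the one-line observation that the result follows from \cref{prop:OKTcharIII} together with \cref{cor:tpWpNets} and \cref{prop:tpProps}(2), which is precisely the chain of implications you spell out. Your added remark about globally patching the twist functions is reasonable caution, but the paper treats all such constructions locally (as stated at the end of \cref{sec:preNot}), so this is not a genuine obstacle.
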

\begin{proof}
	This result follows from the above corollary together with \cref{cor:tpWpNets} and \cref{prop:tpProps}~(2).
\end{proof}

The above corollary contains new information mainly when at least one of the eigenspaces has dimension greater than 1. Indeed when the eigenfunctions of T in the above corollary are simple and T is a KT, then Eisenhart has solved the defining equations in \cite{Eisenhart1934}. He has shown that the metric is in St\"{a}ckel form and has given the relationship of the KT to the metric via the St\"{a}ckel matrix \cite{Eisenhart1934}. Although knowing that all separable metrics are in St\"{a}ckel form is of little use for finding the ChKTs defined on a given pseudo-Riemannian manifold. The progress made by \citeauthor{Eisenhart1934} in \cite{Eisenhart1934} and then \citeauthor{Kalnins1986a} in \cite{Kalnins1986a} on obtaining orthogonal separable coordinates in $S^{n}$ and $\E^{n}$ are based on the integrability conditions of \cref{eq:meanCurvWarpFn,eq:CKTconFac} when $T$ is a ChKT.

The above corollary motivates us to define a \emph{Killing net} (K-net) and a \emph{Conformal Killing net} (CK-net) as the TP-net formed by the eigenspaces of a Killing tensor respectively Conformal Killing tensor when the eigenspaces are orthogonally integrable. The following proposition shows that CK-nets are a special class of TP-nets. In particular, it will give us a simple way to check when an eigenspace of a CKT is Killing.

\begin{propMy} \label{prop:perpGeoImpKil}
	Suppose $(E_{i})_{i=1}^{k}$ is an orthogonally integrable CK-net and let $\lambda_{i}$ be the associated eigenfunctions. If $E_{i}^{\perp}$ is geodesic, then $E_{i}$ is spherical.
\end{propMy}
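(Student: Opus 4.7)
The plan is to verify the two defining conditions of a spherical distribution: that $E_i$ is umbilical, and that $\nabla^{E_i^\perp}_X H_i = 0$ for every $X \in \Gamma(E_i)$. Umbilicity is immediate from Corollary \ref{prop:OKTcharIII}(1), so the work concentrates on the second condition.

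I would set up the computation via the twisted product adapted to the CK-net given by Corollary \ref{cor:CKTsandTPnets}, whose twist functions $\rho_j$ satisfy Equation \ref{eq:meanCurvWarpFn}; recall that $H_i = -(\nabla \log \rho_i)^{\perp i}$ by Proposition \ref{prop:tpProps}(2). Via Proposition \ref{prop:tpProps}(3), the geodesic hypothesis on $E_i^\perp$ is equivalent to $\rho_l$ being independent of $M_i$ for every $l \neq i$, i.e.\ $(\nabla \log \rho_l)^i = 0$.

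The key algebraic extraction is that $|\lambda_i - \lambda_l|$ is also independent of $M_i$. Writing Equation \ref{eq:meanCurvWarpFn} with $i$ replaced by $l$ and projecting onto $E_i$ yields $(\nabla \log \rho_l)^i = \tfrac{1}{2}(\nabla \log|\lambda_l - \lambda_i|)^i$, so the vanishing of the left-hand side forces the right-hand side to vanish, giving $X(\log|\lambda_i - \lambda_l|) = 0$ for every $X \in \Gamma(E_i)$ and every $l \neq i$.

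For the remaining step I would work in coordinates adapted to the twisted product. The off-diagonal Christoffel symbol is $\Gamma^{\sigma}_{d^{(i)} c^{(l)}} = (\partial^{(l)}_c \log \rho_i)\, \delta^{\sigma}_{d^{(i)}} + (\partial^{(i)}_d \log \rho_l)\, \delta^{\sigma}_{c^{(l)}}$, and by the geodesic hypothesis only the first summand survives, so $\nabla_{\partial^{(i)}_d} \partial^{(l)}_c \in \Gamma(E_i)$. Taking $X = \partial^{(i)}_d$ and $W = \partial^{(l)}_c$ with $l \neq i$, in the identity $g(\nabla_X H_i, W) = X(g(H_i, W)) - g(H_i, \nabla_X W)$ the second term vanishes (since $\nabla_X W \in \Gamma(E_i)$ is orthogonal to $H_i$) and the first equals $-\partial^{(i)}_d \partial^{(l)}_c \log \rho_i$, which via Equation \ref{eq:meanCurvWarpFn} rewrites as $-\tfrac{1}{2}\partial^{(i)}_d \partial^{(l)}_c \log|\lambda_i - \lambda_l|$ and hence vanishes by the previous paragraph. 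Thus $(\nabla_X H_i)^{\perp i} = 0$, proving sphericity. The main obstacle is really the algebraic step that upgrades the geodesic hypothesis into constancy of $|\lambda_i - \lambda_l|$ along $E_i$; everything else is forced by the adapted Christoffel symbols.
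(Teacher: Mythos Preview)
Your argument is correct and shares the same conceptual core as the paper's proof: both exploit the mean-curvature formula \eqref{eq:CKTmeanCurv} to show that the cross-derivative $X\,\langle H_i, W\rangle$ vanishes once $E_i^\perp$ is geodesic. The execution differs, however. You first extract the intermediate fact that $|\lambda_i - \lambda_l|$ is constant along $E_i$ (via \eqref{eq:meanCurvWarpFn} applied to $l$) and then compute $\langle\nabla_X H_i, W\rangle$ in adapted coordinates using explicit Christoffel symbols. The paper instead works coordinate-free with lifted vector fields $x \in \hat{\ve}(M_i)$, $y \in \hat{\ve}(M_j)$: it observes directly from \eqref{eq:CKTmeanCurv} that $x\,\langle H_i, y\rangle = -\tfrac{1}{2}\,xy\log|\lambda_i-\lambda_j| = y\,\langle H_j, x\rangle$, which vanishes because the geodesic hypothesis gives $H_j^i = 0$; then, using only torsion-freeness and metric compatibility, it rewrites $\langle\nabla_x H_i, y\rangle$ as $\langle\nabla_y H_i, x\rangle$, and this vanishes because geodesicity of $E_i^\perp$ forces $\nabla_y H_i \in \Gamma(E_i^\perp)$. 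The paper's route is shorter and avoids both the intermediate constancy statement and any Christoffel computation; your route is more explicit and makes the dependence on the twist functions transparent, which may be useful if one wants the quantitative form of the mean curvature.
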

\begin{proof}	
	Suppose $x \in \hat{\ve}(M_{i})$ and $y \in \hat{\ve}(M_{j})$ where $j \neq i$. Then by \cref{eq:CKTmeanCurv}
	
	\begin{align}
		x \bp{H_{i}, y} & = - \frac{1}{2} x \bp{\nabla \log \Abs{\lambda_{i}-\lambda_{j}},y} \\
		& = - \frac{1}{2} x y  \log \Abs{\lambda_{i}-\lambda_{j}} \\
		& = - \frac{1}{2} y x  \log \Abs{\lambda_{i}-\lambda_{j}} \\
		& = - \frac{1}{2} y \bp{\nabla \log \Abs{\lambda_{i}-\lambda_{j}},x} \\
		& = y \bp{H_{j},x}
	\end{align}
	Now, since $E_{i}^{\perp}$ is geodesic, one can show that $H_{j}^{i} = 0$ for $j \neq i$. This can be seen for example, by working in a local twisted product given by \cref{cor:CKTsandTPnets} and then using \cref{prop:tpProps}~(3). Hence by the above calculation, $x \bp{H_{i}, y} = y \bp{H_{j}, x} = y \bp{H_{j}^{i}, x} = 0$. Thus
	
	\begin{align}
		\bp{\nabla_{x} H_{i}, y} & = x \bp{H_{i}, y} - \bp{H_{i}, \nabla_{x} y} \\
		& = - \bp{H_{i}, \nabla_{y} x} \\
		& = \bp{\nabla_{y} H_{i}, x}  - y \bp{H_{i}, x} \\
		& = \bp{\nabla_{y} H_{i}, x} \\
		& = 0
	\end{align}
	
	\noindent where the last line follows since $E_{i}^{\perp}$ is geodesic. Hence $\bp{\nabla_{x} H_{i}, y} = 0$ for all $x \in \Gamma(E_{i})$ and $y \in \Gamma(E_{i}^{\perp})$, thus $E_{i}$ is spherical.
\end{proof}

The following corollary allows us to determine the geometry of the eigenspaces of a CKT with orthogonally integrable eigenspaces using its eigenfunctions.

\begin{corMy} \label{cor:CKTeigProps}
	Suppose T is a CKT with conformal factor $t$ and orthogonally integrable eigenspaces $(E_{i})_{i=1}^{k}$. 
	\begin{enumerate}
		\item $E_{i}$ is Killing iff
			
			\begin{equation}
				(\nabla \lambda_{j})^{i}  = t^{i} \quad \text{for all j } \neq i 
			\end{equation}
		\item $E_{i}$ is geodesic iff
		
			\begin{equation}
				(\nabla \lambda_{i})^{j} = t^{j}  \quad \text{for all j } \neq i 
			\end{equation}
	\end{enumerate}
	
	In particular for a KT, $E_{i}$ is Killing iff all the eigenfunctions are independent of $E_{i}$ and $E_{i}$ is geodesic iff $\lambda_{i}$ is a constant.
\end{corMy}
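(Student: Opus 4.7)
The plan is to translate both conditions into statements about the twist functions of the locally adapted twisted product produced by \cref{cor:CKTsandTPnets}, and then use the explicit formula relating the twist functions to the eigenfunctions.

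First, by \cref{cor:CKTsandTPnets}, around any point we may realise $M$ as a twisted product $\prod_{i=1}^{k} M_i$ adapted to the eigenspaces $E_i$ of $T$, with twist functions $\rho_i$ satisfying
\begin{equation*}
	(\nabla \log \rho_i)^{\perp i} \;=\; \tfrac{1}{2}\sum_{j\neq i}\bigl(\nabla\log|\lambda_i-\lambda_j|\bigr)^{j}.
\end{equation*}
Since the $j$-th summand lies in $E_j$, projecting onto $E_j$ for a fixed $j\neq i$ yields
\begin{equation*}
	(\nabla\log\rho_i)^{j} \;=\; \tfrac{1}{2(\lambda_i-\lambda_j)}\bigl((\nabla\lambda_i)^{j}-(\nabla\lambda_j)^{j}\bigr).
\end{equation*}
Using the conformal factor formula $t=\sum_l(\nabla\lambda_l)^{l}$ from \cref{prop:OKTcharIII}, we have $t^{l}=(\nabla\lambda_l)^{l}$ for each $l$, so $(\nabla\log\rho_i)^{j}=0$ is equivalent to $(\nabla\lambda_i)^{j}=t^{j}$.

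For part (2), \cref{prop:tpProps}(3) says $E_i$ is geodesic iff $\rho_i$ is independent of $M_j$ for every $j\neq i$, which is exactly $(\nabla\log\rho_i)^{j}=0$ for all $j\neq i$. By the equivalence above, this is precisely $(\nabla\lambda_i)^{j}=t^{j}$ for all $j\neq i$. For part (1), the CK-net version of Killing reduces nicely: by definition $E_i$ Killing means $E_i$ spherical and $E_i^{\perp}$ geodesic, but \cref{prop:perpGeoImpKil} tells us that in a CK-net the second condition already forces the first, so $E_i$ Killing is equivalent to $E_i^{\perp}$ geodesic. By \cref{prop:tpProps}(3) the latter holds iff $\rho_j$ is independent of $M_i$ for every $j\neq i$, i.e. $(\nabla\log\rho_j)^{i}=0$ for every $j\neq i$. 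Applying the boxed identity above with the roles of $i$ and $j$ swapped converts this into $(\nabla\lambda_j)^{i}=t^{i}$ for all $j\neq i$.

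Finally, the specialisation to a Killing tensor is immediate: when $t\equiv 0$ the relation $t^l=(\nabla\lambda_l)^{l}$ forces $(\nabla\lambda_l)^{l}=0$ for every $l$, so the condition $(\nabla\lambda_j)^{i}=t^{i}=0$ for all $j\neq i$ together with $(\nabla\lambda_i)^{i}=0$ says every eigenfunction is independent of $E_i$, while the condition $(\nabla\lambda_i)^{j}=0$ for all $j\neq i$ together with $(\nabla\lambda_i)^{i}=0$ says $\nabla\lambda_i=0$.

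The only subtle point—and the spot to be careful about—is the bookkeeping in the decomposition of $\nabla\log|\lambda_i-\lambda_j|$ into eigenspace components, ensuring that only the $E_j$-component survives in the sum defining $(\nabla\log\rho_i)^{\perp i}$ and that the $E_j$-component of $\nabla\lambda_j$ is correctly identified with $t^{j}$ via the conformal factor formula; once this is in place the two parts of the corollary follow directly from \cref{prop:tpProps}(3) and \cref{prop:perpGeoImpKil}, with no further analysis required.
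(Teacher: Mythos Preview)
Your proof is correct and follows essentially the same route as the paper: the paper's one-line argument invokes \cref{prop:perpGeoImpKil}, \cref{prop:OKTcharIII}, and the definitions, which is exactly what you unpack---you just phrase the mean-curvature condition via the twist functions of \cref{cor:CKTsandTPnets} and \cref{prop:tpProps}(3) rather than directly via $H_i$, but since $H_i = -(\nabla\log\rho_i)^{\perp i}$ these are the same computation. The key reduction (that $E_i$ Killing $\Leftrightarrow$ $E_i^{\perp}$ geodesic via \cref{prop:perpGeoImpKil}) and the identification $t^{j}=(\nabla\lambda_j)^{j}$ are used identically in both.
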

\begin{proof}
	This follows from the above proposition together with \cref{prop:OKTcharIII} and the definitions of Killing and geodesic distributions.
\end{proof}

From the above corollary, it follows immediately that if $M$ admits a KT with orthogonally integrable eigenspaces $\Ei =(E_{i})_{i=0}^{k}$ and respective eigenfunctions $(\lambda_{i})_{i=0}^{k}$ such that $\lambda_{0}$ is constant and $\lambda_{i}$ depends only on $E_{0}$ for each $i > 0$, then $\Ei$ is a WP-net. One can easily use \cref{prop:OKTcharIII} and \cref{cor:CKTsandTPnets} to show conversely that any WP-net admits a KT. Although in the next section this fact will follow as a corollary of another proposition we will prove.

If $D$ is a distribution then we denote by $S^{p}(D)$ the set of symmetric contravariant tensors of valence $p$ over the vector bundle $D$. The following proposition on restriction of CKTs to submanifolds will be of use later on.

\begin{proposition}[Restriction of CKTs to Invariant Submanifolds] \label{prop:restrictCKT}
	Let $T$ be a CKT with conformal factor $t$ and suppose $D$ is an integrable non-degenerate T-invariant distribution. If $\tilde{M}$ is an integral manifold of $D$ regarded as a pseudo-Riemannian manifold with the induced metric, then T restricts to a CKT on $\tilde{M}$ with the induced conformal factor.
\end{proposition}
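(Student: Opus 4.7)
The plan is to verify directly that the symmetric tensor obtained by restricting $T$ to sections of $D$ satisfies the conformal Killing equation on $\tilde{M}$. The two ingredients are (i) a careful check that the restriction $\tilde{T}$ is well defined on $\tilde{M}$ and (ii) the Gauss formula relating $\nabla$ to the Levi-Civita connection $\tilde{\nabla}$ of $(\tilde{M}, \tilde{g})$.

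First I would observe that since $T \in S^{2}(M)$ is symmetric, the associated $\binom{1}{1}$-endomorphism is self-adjoint with respect to $g$. Hence for $v \in D^{\perp}$ and any $w \in \Gamma(D)$, one has $g(Tv,w) = g(v,Tw) = 0$ because $Tw \in \Gamma(D)$ by the $T$-invariance hypothesis; thus $D^{\perp}$ is also $T$-invariant. In particular $T$ splits block-diagonally with respect to the orthogonal decomposition $TM|_{\tilde{M}} = D \obot D^{\perp}$, and its $D$-block defines a symmetric contravariant $2$-tensor $\tilde{T}$ on $\tilde{M}$ with $\tilde{T}(X,Y) = T(X,Y)$ for $X,Y \in \Gamma(D)$.

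Next I would exploit the Gauss equation $\nabla_{X} Y = \tilde{\nabla}_{X} Y + h(X,Y)$ for $X,Y \in \Gamma(D)$, where $h(X,Y) \in \Gamma(D^{\perp})$ is the second fundamental form of $\tilde{M}$. For $X,Y,Z \in \Gamma(D)$, I expand $(\nabla_{X}T)(Y,Z) = X\bp{T,Y \otimes Z} - T(\nabla_{X}Y,Z) - T(Y,\nabla_{X}Z)$, insert the Gauss formula, and note that the correction terms $T(h(X,Y),Z)$ and $T(Y,h(X,Z))$ both vanish: since $h(X,Y) \in \Gamma(D^{\perp})$ and $D^{\perp}$ is $T$-invariant, $T h(X,Y) \in \Gamma(D^{\perp})$, while $Z \in \Gamma(D)$. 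This yields the key identity $(\nabla_{X}T)(Y,Z) = (\tilde{\nabla}_{X}\tilde{T})(Y,Z)$ for all $X,Y,Z \in \Gamma(D)$.

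Finally, the CKT condition $\nabla^{(a}T^{bc)} = g^{(ab}C^{c)}$ pulled back and symmetrized over $X,Y,Z \in \Gamma(D)$ becomes, by the identity just established, $\tilde{\nabla}^{(a}\tilde{T}^{bc)} = \tilde{g}^{(ab}\tilde{C}^{c)}$, where $\tilde{C}$ is the orthogonal projection of $C$ onto $D$ (which is the induced conformal factor, since for $X \in \Gamma(D)$ one has $C(X) = g(X,C^{\sharp}) = \tilde{g}(X,(C^{\sharp})^{D})$). Thus $\tilde{T}$ is a CKT on $\tilde{M}$ with conformal factor $\tilde{C}$. The only step needing care is the self-adjointness argument that upgrades $T$-invariance of $D$ to $T$-invariance of $D^{\perp}$; once that is in place the calculation is immediate and no integrability conditions on $D$ are needed beyond what is already assumed to guarantee the existence of $\tilde{M}$.
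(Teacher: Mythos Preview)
Your proof is correct and takes a genuinely different route from the paper's. The paper works entirely with the Schouten bracket: it splits $T = T_{D} + T_{D^{\perp}}$, $G = G_{D} + G_{D^{\perp}}$, $t = t_{D} + t_{D^{\perp}}$, observes by naturality that $[T_{D},G_{D}] = \iota_{*}[\tilde{T},\tilde{G}]$ lies in $S^{3}(D)$, expands $[T,G] = -2\, t \odot G$ bilinearly, and then projects onto the $S^{3}(D)$ component to conclude $[\tilde{T},\tilde{G}] = -2\,\tilde{t} \odot \tilde{G}$. Your argument instead works directly with the symmetrized covariant derivative form of the CKT equation and the Gauss formula for the embedding $\tilde{M} \hookrightarrow M$: the key step is that the second-fundamental-form corrections $T(h(X,Y),Z)$ vanish because $D^{\perp}$ is $T$-invariant (a point you make explicit via self-adjointness, which the paper uses implicitly when writing $T = T_{D} + T_{D^{\perp}}$). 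Your approach is more elementary and geometric, and makes transparent exactly where the $T$-invariance hypothesis enters; the paper's Schouten-bracket approach is more algebraic and has the advantage of extending verbatim to CKTs of arbitrary valence, since the bracket and its naturality are defined on all of $S^{*}(M)$.
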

\begin{proof}
	By hypothesis $TM = D \obot D^{\perp}$, hence we can write
	
	\begin{align}
		T & = T_{D} + T_{D^{\perp}} \\
		t & = t_{D} + t_{D^{\perp}} \\
		G & = G_{D} + G_{D^{\perp}}
	\end{align}
	
	Let $\iota : \tilde{M} \rightarrow M$ be the inclusion map, then note that $T_{D} = \iota_{*} \tilde{T}$ for some $\tilde{T} \in S^{2}(\tilde{M})$. Similar equations hold for $t_{D}$ and $G_{D}$. Thus we observe that the following equation holds over $\tilde{M}$, $[T_{D}, G_{D}] = [\iota_{*} \tilde{T}, \iota_{*} \tilde{G}] = \iota_{*} [ \tilde{T}, \tilde{G}]$ by naturality of the Schouten bracket. In particular, we see that $[T_{D}, G_{D}] \in S^{2}(D)$. Now
	
	\begin{align}
		[T,G] & = [T_{D}, G_{D}] + [T_{D}, G_{D^{\perp}}] + [T_{D^{\perp}}, G_{D}] + [T_{D^{\perp}}, G_{D^{\perp}}]
	\end{align}
	
	also
	
	\begin{align}
		t \odot G & = t_{D} \odot G_{D} + t_{D} \odot G_{D^{\perp}} + t_{D^{\perp}} \odot G_{D} + t_{D^{\perp}} \odot G_{D^{\perp}}
	\end{align}
	
	By projecting onto $S^{2}(D)$ we find that $[T_{D}, G_{D}] = -2 t_{D} \odot G_{D}$, thus $[ \tilde{T}, \tilde{G}] = -2 \tilde{t} \odot \tilde{G}$ by injectivity of $\iota_{*}$.
\end{proof}

\begin{remark}
	The above result shows that any ChKT $K$ induces a ChKT on any leaf of the foliation of a K-invariant distribution. Hence by \cref{thm:HJosepI}, this gives a method to construct lower dimensional separable webs from a given separable web. Conversely it motivates us to look for methods to build separable webs of higher dimension from given separable webs. We will see later on that warped products give us a means to do just this.
\end{remark}

\section{Killing tensors in Warped Products} \label{sec:KTinWP}

In this section we give conditions under which $K \in S^{2}(M)$ that admits a K-invariant Killing distribution is a \gls{kt}. The first application of this result is to find necessary and sufficient conditions for extending Killing tensors defined on the geodesic and spherical factors of a warped product. This result also allows us to study \glspl{kss} which have a common invariant Killing distribution. We will see that such KS-spaces can be decoupled into ones on the geodesic and spherical factors of an induced warped product. Such decomposable KS-spaces will allow us to define a \emph{reducible separable web} which is a natural concept that follows from this investigation.

\subsection{Preliminaries}

But first we need some properties of the Schouten bracket.

\begin{lemma}
	Suppose M is a manifold with local coordinates $(x^{i})$ and let $X_{i} := \partial_{i}$. Then for $K,G \in S^{2}(M)$ the Schouten bracket is given as follows

	\begin{align}
		[K,G] & = (2 K^{i l} (\d G^{j k})_{l} - 2 G^{i l} (\d K^{j k})_{l}) X_{i} \odot X_{j} \odot X_{k} \label{eq:SchoutForm} \\
		& = 2 ( K \d G^{j k} -  G \d K^{j k}) \odot X_{j} \odot X_{k}
	\end{align}
	
	Furthermore, the following hold:
	
	For $V \in \ve(M)$ and $K \in S^p (M)$
	\begin{equation}
		[V, K] = \lied{K}{V}
	\end{equation}
	
	For $G \in S^{2}(M)$ and $f \in \F(M)$
	\begin{equation}
		[G,f] = 2 G(\d f)
	\end{equation}
\end{lemma}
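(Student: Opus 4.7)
The plan is to use the isomorphism between symmetric contravariant tensor fields on $M$ and fiberwise polynomial functions on $T^{*}M$, under which the Schouten bracket corresponds (up to sign conventions) to the canonical Poisson bracket. This is precisely the relationship invoked in the footnote of the statement, and it reduces all three identities to routine Poisson-bracket computations in canonical coordinates $(x^i, p_i)$.

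First I would handle the main formula. For $K, G \in S^{2}(M)$ the associated polynomials are $P_K = K^{ij}p_i p_j$ and $P_G = G^{ij}p_i p_j$. A direct computation of $\{P_K, P_G\} = \pderiv{P_K}{p_i}\pderiv{P_G}{x^i} - \pderiv{P_K}{x^i}\pderiv{P_G}{p_i}$ yields a homogeneous cubic polynomial in the $p_i$ whose coefficients are $2K^{il}(\partial_l G^{jk}) - 2G^{il}(\partial_l K^{jk})$ symmetrized over $(i,j,k)$. Translating back through the isomorphism, this is exactly \eqref{eq:SchoutForm}. The second displayed line of part~1 is just a rewriting: recognizing $K^{il}\partial_l G^{jk}$ as $(K\,\d G^{jk})^i$ (the action of the tensor $K$ on the $1$-form $\d G^{jk}$), and similarly for the other term, collapses the expression into $2(K\,\d G^{jk} - G\,\d K^{jk})\odot X_j \odot X_k$.

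For the identity $[V, K] = \mathcal{L}_V K$ with $V \in \ve(M)$, the cleanest route is to verify that the coordinate formula specializes correctly: plugging in a degree-$1$ tensor on one side gives, after symmetrization, precisely the coordinate expression $V^l \partial_l K^{i_1\ldots i_p} - p\, K^{l(i_2\ldots i_p}\partial_l V^{i_1)}$ for the Lie derivative of a symmetric tensor along a vector field. Alternatively, one may appeal to the standard characterization of the Schouten bracket as the unique graded extension of the Lie derivative of vector fields on symmetric tensors; the coordinate check reduces to a handful of terms and is straightforward. For $[G, f] = 2 G(\d f)$ with $G \in S^2(M)$ and $f \in \F(M)$, I would simply apply part~1 treating $f$ as a valence-$0$ tensor: the $G\,\d K^{jk}$ term vanishes (no upper indices to differentiate in the derivative-of-$G$ piece when the contraction collapses to a scalar), and one is left with $2 G^{ij}(\partial_j f)\, X_i = 2 G(\d f)$.

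The main obstacle is bookkeeping with the symmetrization and the factors of $2$ that come from the valence: one has to be careful that the polynomial-bracket computation recovers the stated coefficients exactly as written (without an unsuppressed symmetrization over $(i,j,k)$), and that the conventions for $K(\d\varphi)$ as ``$K$ applied to a $1$-form'' match those used implicitly in the second form of the formula. Once these conventions are pinned down, everything reduces to a small symbolic calculation.
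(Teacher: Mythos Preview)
Your approach is correct and, for the main formula and the identity $[V,K]=\mathcal{L}_V K$, it is essentially what the paper does: the paper simply cites \cite{Woodhouse1975a} for these two parts, and since the footnote in the statement already points to the Schouten--Poisson correspondence in that reference, your Poisson-bracket computation on $T^*M$ is exactly the intended argument.

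Where you genuinely diverge is in the third identity $[G,f]=2G(\d f)$. You obtain it by specializing the Poisson-bracket computation to $P_f=f$ (degree $0$); the paper instead derives it algebraically from the graded bi-derivation property of the Schouten bracket, writing $G=G^{ij}X_i\odot X_j$ and expanding $[G,f]$ via $[A\odot B,C]=A\odot[B,C]+[A,C]\odot B$, then using $[X_i,f]=\partial_i f$. Your route is quicker and keeps everything inside one framework; the paper's route has the small advantage that it does not require invoking the general $(p,q)$-valence formula or re-running the Poisson computation, only the derivation rule and the degree-$1$ case already granted. One caution on your version: the displayed formula in part~1 is stated only for two valence-$2$ tensors, so ``applying part~1 with $f$ of valence $0$'' is not literally a specialization of that line---you should phrase it as redoing the Poisson bracket with $P_f=f$, which is what your parenthetical explanation actually does.
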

\begin{proof}
	For the proof of the first statement, see for example \cite{Woodhouse1975a}. The second statement is well known, see also for example \cite{Woodhouse1975a}. The third can also be deduced from \cite{Woodhouse1975a}, but we give the proof here. First note that $G^{ij} X_{i}\otimes X_{j} = G^{ij} X_{i}\odot X_{j}$, hence
	
	\begin{align}
		[G,f] & = [G^{ij}X_{i}\odot X_{j}, f] \\
		& = G^{ij}X_{i} \odot [ X_{j}, f] + [G^{ij}X_{i}, f] \odot X_{j} \\
		& = G^{ij}X_{i} \odot [ X_{j}, f] + G^{ij} [X_{i}, f] \odot X_{j} \\
		& = 2 G^{ij}[ X_{i}, f]  X_{j} \\
		& = 2 G(\d f)
	\end{align}
\end{proof}

The following lemma won't be directly used but it's useful to keep it in mind for proofs to come.

\begin{lemma}[Schouten bracket on Product Manifolds]
	Let $M = B \times F$ be a product manifold and suppose $K \in \hat{S}^{p}(B)$, $G \in \hat{S}^{q}(F)$. Then the following holds:
	
	\begin{equation}
		[K,G] = 0 
	\end{equation}
\end{lemma}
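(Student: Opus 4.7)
The plan is to reduce the computation to elementary brackets between lifted vector fields and lifted functions on distinct product factors, exploiting the fact that the Schouten bracket is a graded biderivation with respect to the symmetric product $\odot$. This biderivation property is justified by the footnote preceding the previous lemma, which identifies the Schouten bracket with the Poisson bracket on $T^{*}M$: under this identification symmetric contravariant tensors correspond to fibrewise-polynomial functions on $T^{*}M$, the symmetric product $\odot$ corresponds to their pointwise product, and the usual Leibniz rule of the Poisson bracket gives the biderivation property for $[\cdot,\cdot]$.

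First, on a neighbourhood adapted to the product I would choose coordinates $(x^{a})$ on $B$ and $(y^{\alpha})$ on $F$ and write
\begin{equation*}
K \;=\; K^{a_{1}\cdots a_{p}}(x)\,\partial_{a_{1}} \odot \cdots \odot \partial_{a_{p}},
\end{equation*}
where each coefficient $K^{a_{1}\cdots a_{p}}$ lies in $\hat{\F}(B)$ and each coordinate vector field $\partial_{a_{\ell}}$ lies in $\hat{\ve}(B)$; an analogous local expression holds for $G$ with factors drawn from $\hat{\F}(F) \cup \hat{\ve}(F)$. This exhibits $K$ and $G$ locally as finite sums of symmetric products of scalars and vector fields lifted from $B$ and from $F$ respectively.

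Next, I would expand $[K,G]$ using the Leibniz rule. Each term produced contains exactly one elementary bracket $[\alpha,\beta]$ in which $\alpha$ is a factor drawn from the decomposition of $K$ (hence lying in $\hat{\F}(B)\cup\hat{\ve}(B)$) and $\beta$ is a factor drawn from $G$ (in $\hat{\F}(F)\cup\hat{\ve}(F)$), with the remaining factors being merely symmetric-multiplied onto this bracket. Such elementary brackets come in four kinds, each of which vanishes: the Schouten bracket of two functions is zero; for $V \in \hat{\ve}(B)$ and $h \in \hat{\F}(F)$ one has $[V,h] = V(h) = 0$ because $h$ is independent of $B$ (and symmetrically $[W,f] = 0$ for $W \in \hat{\ve}(F)$, $f \in \hat{\F}(B)$); and finally $[V,W] = 0$ for lifted vector fields on distinct factors, as explicitly recorded in \cref{sec:preNot}. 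Hence every term in the expansion of $[K,G]$ vanishes, and $[K,G]=0$.

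The only real obstacle is accepting the biderivation property of the Schouten bracket for symmetric tensors of arbitrary valence, which the paper has not explicitly stated (only the degree-$2$ coordinate formula and a few special cases were given) but which is immediate from the Poisson-bracket viewpoint above. An alternative route, avoiding this, would be to invoke the standard coordinate formula generalising \cref{eq:SchoutForm} and observe that in adapted coordinates every contraction in it involves either a $B$-derivative of a tensor pulled back from $F$, or vice versa, each of which vanishes identically.
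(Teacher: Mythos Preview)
Your argument is correct. The paper's own proof is a one-line appeal to ``naturality of the Schouten bracket'' and the analogy with the vector-field case, without further detail; your biderivation expansion is precisely one standard way to make that analogy rigorous, reducing the bracket to the elementary cases $[V,W]$, $[V,h]$, $[f,h]$ on distinct factors, each of which vanishes. Your alternative coordinate route at the end is, if anything, even closer in spirit to what the paper seems to have in mind by ``similar to that when $K$ and $G$ are vector fields.''
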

\begin{proof}
	This follows from the naturality of the Schouten bracket, i.e. the proof is similar to that when $K$ and $G$ are vector fields.
\end{proof}

\subsection{Killing tensors in Warped Products}

In the follow proposition we will characterize KTs in warped products.

\begin{propMy}[Killing tensors in Warped Products] \label{prop:extCKT}
	Suppose $K \in S^{2}(M)$ and $D$ is a K-invariant Killing distribution. Let $B \times_{\rho} F$ be a local warped product adapted to the WP-net $(D^{\perp},D)$ with contravariant metric $G  = G_{0} + \kappa G_{1}$ where $\kappa := \rho^{-2}$.
	 
	 Then $K$ is a KT iff there exist KTs $K' \in S^{2}(B)$, $\tilde{K} \in S^{2}(F)$ and $t \in \F(B)$ such that the following equations hold:
	 
	 \begin{align}
	 	K & = K' + t G_{1} +  \tilde{K} \label{eq:WPktDecomp} \\
	 	\d t & = K' \d \kappa
	 \end{align}
	 
	 Furthermore $\tilde{K}$ is also a KT on $B \times_{\rho} F$.
\end{propMy}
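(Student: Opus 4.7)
The statement is an equivalence plus a corollary. I plan to prove both directions by unraveling $[K,G] = 0$ using the decomposition $G = G_0 + \kappa G_1$ of the contravariant metric.

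For the reverse direction, I would substitute the ansatz $K = K' + t G_1 + \tilde{K}$ into $[K,G]$ and expand by bilinearity into six Schouten brackets. Using the product-manifold lemma, several of these vanish: $[K', G_0] = 0$ because $K'$ is a KT on $B$ and Schouten brackets of tensors lifted from a single factor are themselves lifts of brackets; $[K', G_1] = 0$ and $[\tilde{K}, G_0] = 0$ because these pair lifts from different factors; and the $\kappa$-proportional part of $[\tilde{K}, \kappa G_1]$ vanishes because $\tilde{K}$ is a KT on $F$. The remaining contributions, coming from $[K', \kappa G_1]$ and $[t G_1, G]$, can be computed using the graded Leibniz rule together with the identity $[A, \kappa] = 2 A(\d\kappa)$ and the fact that $\tilde{K}(\d\kappa) = 0$ (since $\d\kappa$ takes values in $D^\perp$ while $\tilde{K}$ acts only on $D$). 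They collapse to multiples of $(\d t - K' \d\kappa) \odot G_1$, which vanish under the hypothesis $\d t = K' \d\kappa$. Specializing this calculation to $K' = 0$ and $t = 0$ immediately yields the ``furthermore'' claim that any KT on $F$ lifts to a KT on $M$.

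For the forward direction, assume $K$ is a KT on $M$. Using the $K$-invariance of $D$ (which by symmetry also makes $D^\perp$ invariant), I would decompose $K = K_{D^\perp} + K_D$ with $K_{D^\perp} \in \Gamma(S^2 D^\perp)$ and $K_D \in \Gamma(S^2 D)$, and then split $K_D = t G_1 + \tilde{K}$ by defining $t$ to be $(\dim F)^{-1}$ times the natural trace of $K_D$ with respect to the metric on $F$, so that $\tilde{K}$ is $F$-traceless. The task is then to extract the required properties of $K'$, $t$, and $\tilde{K}$ from the Killing equation $\nabla_{(a}K_{bc)} = 0$, projected onto the three kinds of index triples distinguished by the splitting $TM = D^\perp \obot D$. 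The triple-$D^\perp$ projection reduces to the Killing equation for $K_{D^\perp}$ on $B$; the mixed $(D, D, D^\perp)$ projection simultaneously forces both $K_{D^\perp}$ and $\tilde{K}$ to be independent of the $F$-coordinates (so $K_{D^\perp}$ descends to $K' \in S^2(B)$ and $\tilde{K}$ lifts from $S^2(F)$), and yields the compatibility $\d t = K' \d\kappa$; the triple-$D$ projection reduces, after absorbing the warping terms in the connection, to the Killing equation for $\tilde{K}$ on $F$.

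I expect the main obstacle to be the careful bookkeeping in the forward direction. In a warped product the Levi-Civita connection has cross terms proportional to $\nabla\rho$ coupling $B$- and $F$-tangent vectors, so these must be tracked with care when projecting the Killing equation in local adapted coordinates. The trace/traceless decomposition of $K_D$ against $G_1$ is the algebraic device that disentangles the scalar compatibility $\d t = K' \d\kappa$ from the intrinsic Killing equation for $\tilde{K}$ on $F$.
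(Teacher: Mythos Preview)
Your reverse direction is sound and essentially coincides with the paper's computation (the paper handles both directions at once by expanding $[K,G]$ with $K = K_0 + K_1$ and $G = G_0 + \kappa G_1$ and reading off necessary and sufficient conditions). Your ``furthermore'' argument is also correct.

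The forward direction, however, has a genuine gap: your definition of $t$ as the $G_1$-trace of $K_D$ divided by $\dim F$ does not produce the $t$ required by the statement. The trace of a Killing $2$-tensor is in general not constant (e.g.\ on $\E^2$ take $\tilde K = X \odot X$ with $X = x\,\partial_y - y\,\partial_x$, so $\operatorname{tr}\tilde K = x^2 + y^2$). Consequently, if the true decomposition is $K_D = t\,G_1 + \tilde K$ with $t \in \F(B)$ and $\tilde K \in \hat S^2(F)$, your $t_{\text{trace}} = t + (\dim F)^{-1}\operatorname{tr}_{G_1}\tilde K$ depends on the $F$-coordinates, so it fails $t \in \F(B)$; and your $\tilde K_{\text{trace}} = \tilde K - (\dim F)^{-1}(\operatorname{tr}_{G_1}\tilde K)\,G_1$ is not a KT on $F$, since $f\,G_1$ is Killing only for constant $f$. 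The trace/traceless split is therefore the wrong algebraic device here.

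What the paper does instead (and what you should do) is let the equations dictate $t$. Working with the Schouten bracket, one first isolates from $[K,G]=0$ the component forcing $K_0 \in \hat S^2(B)$, and then the mixed component yields $\d^0 K_1^{\alpha\beta} = K_0(\d\kappa)\,G_1^{\alpha\beta}$ in adapted coordinates. Non-degeneracy of $G_1$ forces $K_0(\d\kappa)$ to be closed on $B$, so $K_0(\d\kappa) = \d t$ for some $t \in \F(B)$; \emph{this} is the correct $t$, and then $\tilde K := K_1 - t\,G_1$ automatically lies in $\hat S^2(F)$ and inherits the Killing property from $[K_1,G_1]=0$. You can carry this out with the covariant Killing equation and warped-product Christoffel symbols if you prefer, but the Schouten-bracket route avoids that bookkeeping entirely.
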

\begin{proof}
	By hypothesis, we can write $K = K_{0} + K_{1}$ where $K_{0}  \in S^{2}(D^{\perp})$ and $K_{1}  \in S^{2}(D)$. Thus,
	
	\begin{align} 
		[K,G] & = [K_{0} + K_{1} ,G_{0} + \kappa G_{1} ] \\ 
		& = [K_{0} ,G_{0} ] + [K_{0}  , \kappa G_{1} ] + [ K_{1} ,G_{0}] + [ K_{1} ,\kappa G_{1} ] \\ 
		& = [K_{0} ,G_{0} ] + \kappa [K_{0}  , G_{1} ] + 2 K_{0}(\d \kappa) \odot  G_{1}  + [ K_{1} ,G_{0}] + \kappa [ K_{1} , G_{1} ]
	\end{align}
	
	Note that $[K_{0} ,G_{0} ] \in S^{3}(D^{\perp})$ (see \cref{eq:SchoutForm}), then by linear independence, $[K,G] = 0$ iff
	
	\begin{align}
		[K_{0} ,G_{0} ] & = 0 \label{eq:wpKt1} \\
		[ K_{1} , G_{1} ] &  = 0 \label{eq:wpKt2} \\
		\kappa [K_{0}  , G_{1} ] + 2 K_{0}(\d \kappa) \odot  G_{1}  + [ K_{1} ,G_{0}] & = 0 \label{eq:wpKt3}
	\end{align}
	
	Suppose $(x^{i}) = (x^{a},x^{\alpha})$ are local coordinates adapted to the warped product $B \times_{\rho} F$. We denote coordinates for $B$ using Latin letters such as $a,b$, coordinates for $F$ using Greek letters such as $\alpha, \beta$ and the letters $i,j,k$ are reserved for generic indices.   Let $X_{i} := \partial_{i}$, then
	
	\begin{align}
		[ K_{1} ,G_{0}] & = 2 ( K_{1} \d G_{0}^{j k} -  G_{0} \d K_{1}^{j k}) \odot X_{j} \odot X_{k} \\
		& = - 2 G_{0} \d K_{1}^{\alpha \beta} \odot X_{\alpha} \odot X_{\beta} 
	\end{align}
	
	and
	
	\begin{align}
			[K_{0}  , G_{1}] & = 2 ( K_{0} \d G_{1}^{j k} -  G_{1} \d K_{0}^{j k}) \odot X_{j} \odot X_{k} \\
			& = - 2 G_{1} \d K_{0}^{a b} \odot X_{a} \odot X_{b} 
	\end{align}
	
	Thus by linear independence, \cref{eq:wpKt3} is satisfied iff
	
	\begin{align}
		[K_{0}  , G_{1}] & = 0 \\
		2 K_{0}(\d \kappa) \odot  G_{1}  + [ K_{1} ,G_{0}] & = 0
	\end{align}
	
	The first of the above equations are satisfied iff $G_{1} \d K_{0}^{a b} = 0$, i.e. $K_{0} \in \hat{S}^{2}(B)$. The second becomes 
	
	\begin{align}
		2 K_{0}(\d \kappa) \odot  G_{1}  + [ K_{1} ,G_{0}] & = 2 (K_{0}(\d \kappa) G_{1}^{\alpha \beta} - G_{0} \d K_{1}^{\alpha \beta}) \odot X_{\alpha} \odot X_{\beta} 
	\end{align}
	
	\noindent which is identically zero iff
	
	\begin{align}
		\d^{0} K_{1}^{\alpha \beta} & = K_{0}(\d \kappa) G_{1}^{\alpha \beta} \\
		\Rightarrow \d^{0} (K_{0}(\d \kappa)) & = 0 \quad \text{by non-degeneracy of $G_{1}$}
	\end{align}
	
	\noindent where $\d^{0}$ is $\d$ followed by the (point-wise) orthogonal projection onto $(D^\perp )^*$. So, $K_{0}(\d \kappa) = \d^{0} t$ for some $t \in \F(B)$, thus
	
	\begin{align}
		\d^{0} K_{1}^{\alpha \beta} & = \d^{0} (t G_{1}^{\alpha \beta})
	\end{align}
	
	Hence $\tilde{K}^{\alpha \beta} := K_{1}^{\alpha \beta} - t G_{1}^{\alpha \beta} \in \F(F)$, i.e. $\tilde{K} \in \hat{S}^{2}(F)$. \Cref{eq:wpKt3} is satisfied iff $\tilde{K}$ is a KT on $F$ and \cref{eq:wpKt1} is satisfied iff $K_{0}$ is a KT on $B$. Finally if we let $K' := K_{0}$, the result follows. The last statement that $\tilde{K}$ is a KT on $B \times_{\rho} F$ can be readily verified from the above equations.
\end{proof}

Two important special cases of the above proposition are the following:
\begin{enumerate}
	\item By taking $K', t =0$, we see that $\tilde{K} \in \hat{S}^{2}(F)$ is a KT on F iff it is a KT on $B \times_{\rho} F$.
	\item By taking $\tilde{K} = 0$ we find that a necessary and sufficient condition for $K' \in S^{2}(B)$ to be lifted into a KT on $B \times_{\rho} F$ is that
	\begin{equation}
		\d (K' \d \kappa) = 0
	\end{equation}
\end{enumerate}

The second case reveals a connection between extending KTs into warped products and the separation of the Hamilton-Jacobi equation for natural Hamiltonians. Indeed, suppose $V \in \F(M)$ is the potential function of a natural Hamiltonian and let $K \in S^{2}(M)$ be a ChKT. Now consider the local warped product $M \times_{\rho} \E^{1}_{\nu}$ where $\rho,\nu$ are defined as follows:

\begin{align}
	\frac{1}{\rho^{2}} & := 2 \Abs{V} & \nu & := \sgn{V}
\end{align}

\noindent in a neighborhood of a point where $V$ is non-zero. This warped product metric is called an \emph{Eisenhart metric}, since Eisenhart showed that geodesics $x^{i}(t)$ in this warped product with $\dot{x}^{n+1} = 1$ project onto solutions of Hamilton's Equations for the natural Hamiltonian associated with $V$ \cite{Eisenhart1928}. Now observe that the condition for extending $K$ into a KT on the warped product $M \times_{\rho} \E^{1}_{\nu}$ given above is precisely the condition that determines weather or not $V$ is separable in the orthogonal web associated with $K$ by \cref{thm:HJosepII}. This connection was observed by Benenti \cite{Benenti1997a} when he proved \cref{thm:HJosepII}. We will use this connection later on to derive necessary and sufficient conditions for extending a KS-space from the geodesic factor of a warped product.

We can also prove the following corollary cf. \cite{Jelonek2000}, which shows that a WP-net is a K-net.

\begin{corollary}[WP-nets always admit KTs] \label{cor:wpnetKT}
	A pseudo-Riemannian manifold M admits a WP-net $\Ei = (E_{i})_{i=0}^{k}$ iff there exists a KT, $K$ on M whose eigen-net is $\Ei$ and the corresponding eigenfunctions $\lambda_{i}$ satisfy:
	\begin{enumerate}
		\item $\lambda_{0}$ is a constant
		\item $\lambda_{i}$ depends only on $E_{0}$ for each $i > 0$
	\end{enumerate}
	
	Furthermore if such a KT exists, then the warping functions can locally be chosen to satisfy the following equation $\rho_{i}^{2} = \Abs{\lambda_{i}-\lambda_{0}}$ for $i > 0$.
\end{corollary}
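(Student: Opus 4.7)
The plan is to prove the two directions separately. The reverse direction is immediate from \cref{cor:CKTeigProps}. If such a KT $K$ exists with $\lambda_0$ constant and each $\lambda_j$ (for $j > 0$) depending only on $E_0$, then for every $i > 0$ and every $j \neq i$ the eigenfunction $\lambda_j$ is independent of $E_i$: either because $\lambda_j$ is constant (when $j = 0$), or because $\lambda_j$ depends only on $E_0 \subseteq E_i^\perp$ (when $j > 0$). Hence $E_i$ is Killing for $i > 0$, and $E_0$ is geodesic (since $\lambda_0$ is constant), so $\Ei$ is a WP-net.

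For the forward direction, by \cref{cor:tpWpNets} the WP-net is locally modelled by an adapted warped product $M_0 \times_{\rho_1} M_1 \times \cdots \times_{\rho_k} M_k$. My proposal is to define
\begin{equation*}
K := \sum_{i=1}^k \epsilon_i\, \hat{G}_i, \qquad \epsilon_i \in \{\pm 1\},
\end{equation*}
where $\hat{G}_i \in \hat{S}^2(M_i)$ is the lift to $M$ of the contravariant metric $\tilde{G}_i$ of the factor $M_i$. The crucial step is showing each $\hat{G}_i$ is a KT on $M$. For this I would apply \cref{prop:extCKT} with $D := E_i$, viewing $M$ locally as $B \times_{\rho_i} M_i$ where $B$ is the warped product of the remaining factors; this is valid because $D^\perp$ is geodesic (each $\rho_j$ depends only on $M_0 \subseteq B$, hence is independent of $M_i$, cf.\ \cref{prop:tpProps}). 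In the decomposition $K = K' + t G_1 + \tilde{K}$ of that proposition, the choice $K' = 0$, $t = 0$, $\tilde{K} = \epsilon_i \tilde{G}_i$ trivially satisfies the compatibility condition $dt = K' d\kappa$, and $\tilde{K}$ is a constant multiple of the inverse metric of $M_i$ and so a KT on $M_i$.

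A short computation using $g|_{E_i} = \rho_i^2 \tilde{g}_i$ shows that $\hat{G}_i$ acts as the endomorphism $\rho_i^2 P_i$, where $P_i$ is the orthogonal projection onto $E_i$. Consequently $K$ has eigenvalues $\lambda_0 = 0$ on $E_0$ and $\lambda_i = \epsilon_i \rho_i^2$ on $E_i$ for $i > 0$, verifying conditions (1) and (2) and giving $\rho_i^2 = |\lambda_i - \lambda_0|$ as claimed. The eigenvalues can be made pairwise distinct — and hence the eigen-net of $K$ identified with $\Ei$ — by choosing a sign $\epsilon_j$ opposite to $\epsilon_i$ whenever $\rho_i = \rho_j$ as functions. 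I expect the main obstacle to be the clean identification of each bare contravariant metric $\hat{G}_i$ as a KT on the full warped product via \cref{prop:extCKT}; the remaining steps are routine bookkeeping.
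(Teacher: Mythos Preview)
Your overall strategy matches the paper's: both directions use \cref{cor:CKTeigProps} and \cref{prop:extCKT} in the same way, and your identification of each lifted metric $\hat{G}_i$ as a KT on $M$ is exactly the paper's observation. There is, however, a genuine gap in your choice of coefficients.

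Restricting to $\epsilon_i \in \{\pm 1\}$ is not enough to force the eigenvalues $\lambda_i = \epsilon_i \rho_i^2$ to be pairwise distinct. Your rule ``pick opposite signs whenever $\rho_i = \rho_j$ as functions'' fails in two ways. First, if three or more warping functions coincide (say $\rho_1 = \rho_2 = \rho_3$), the pigeonhole principle forces two of the $\epsilon_i$ to agree, and those two eigenvalues collapse. Second, even when $\rho_i$ and $\rho_j$ are distinct as functions they may agree at isolated points, and with equal signs the eigenvalues coincide there; your criterion does not detect this. The paper avoids both problems by taking $K = c_0 G + \sum_{i=1}^k c_i G_i$ with arbitrary real constants $c_i$, which (locally, near a fixed point) can always be chosen to separate the finitely many values $c_0 + c_i \rho_i^2$. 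The fix to your argument is simply to replace $\epsilon_i$ by generic real scalars.

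A second, smaller issue: the ``Furthermore'' clause asserts that for \emph{any} KT $K$ with the stated eigenvalue properties one may choose warping functions with $\rho_i^2 = |\lambda_i - \lambda_0|$. You only verify this identity for the particular $K$ you built. The paper proves the general statement by invoking \cref{eq:meanCurvWarpFn} from \cref{cor:CKTsandTPnets}: under hypotheses (1) and (2) the only surviving term in the mean curvature formula is $(\nabla \log |\lambda_i - \lambda_0|)^0$, which forces $\rho_i^2$ and $|\lambda_i - \lambda_0|$ to agree up to a factor in $\hat{\F}(M_i)$.
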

\begin{proof}
	If $M$ admits a KT with orthogonally integrable eigenspaces and eigenfunctions satisfying the above conditions, then it follows from \cref{cor:CKTeigProps} that its eigenspaces form a WP-net.
	
	Conversely suppose $\Ei$ is a WP-net, and suppose $ G = G_{0} + \sum_{i=1}^{k} \kappa_{i} G_{i}$ is an adapted warped product metric. The above proposition shows that each $G_{i}$ for $i > 0$ is a KT on $M$. Hence for each $i$ if we choose $c_{i} \in \R$, then $K := c_{0} G + \sum_{i=1}^{k} c_{i} G_{i}$ is a KT on $M$. Thus, locally we can always choose the $c_{i}$ such that $K$ is a KT with eigenspaces equal to $\Ei$ and clearly the eigenfunctions satisfy the above conditions.
	
	Now if such a KT exists, by \cref{eq:meanCurvWarpFn} in \cref{cor:CKTsandTPnets}, we have for $i > 0$
	
	\begin{align}
		(\nabla \log \rho_{i}^{2})^{\perp i} & = \sum\limits_{j \neq i} (\nabla \log \Abs{\lambda_{i}-\lambda_{j}})^{j} \\
		& = \nabla \log \Abs{\lambda_{i}-\lambda_{0}}
	\end{align}
	
	Thus it follows that locally we can choose the warping functions as stated.
\end{proof}

The following corollary follows immediately by inductively applying \cref{prop:extCKT}.

\begin{corMy} \label{cor:wpGenKTdecomp}
	Suppose $\Ei  = (D_{i})_{i=0}^{k}$ is a WP-net and $K$ is KT with $D_{i}$ a K-invariant distribution for $i = 1,\dotsc,k$. Let $M = M_{0} \times_{\rho} \prod_{i=1}^{k} M_{i}$ be a local warped product adapted to $\Ei$. Then in contravariant form, K can be decomposed as follows:
	
	\begin{equation}
		K = K_{0} + \sum_{i=1}^{k} K_{i}
	\end{equation}
	
	\noindent where each $K_{i} \in \hat{S}^{2}(M_{i})$ is a KT for $i = 1,..,k$. Furthermore $K_{0}$ is a KT and each $D_{i}$ is an eigenspace of $K_{0}$ for $i = 1,..,k$ cf. \cref{cor:CKTeigProps}.
\end{corMy}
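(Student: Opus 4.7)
The plan is to proceed by induction on the number $k$ of Killing factors, peeling off one spherical factor at a time via Proposition~\ref{prop:extCKT}. The base case $k=1$ is Proposition~\ref{prop:extCKT} itself: taking $K_1 := \tilde K \in \hat S^2(M_1)$ and $K_0 := K' + t\, G_1$, the decomposition $K = K_0 + K_1$ holds, and $D_1$ is an eigenspace of $K_0$ with eigenvalue $t$ because $K' \in S^2(M_0)$ annihilates covectors on $D_1$ while $t\, G_1$ acts as $t$ times the identity on $D_1$. That $K_0$ is a KT on $M$ follows from $K_0 = K - K_1$, using the ``furthermore'' clause of Proposition~\ref{prop:extCKT}.

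For the inductive step, assume the statement for $k-1$ and apply Proposition~\ref{prop:extCKT} to the K-invariant Killing distribution $D = D_k$. This produces a decomposition $K = K' + t_k\, G_k + K_k$, where $K_k \in \hat S^2(M_k)$ is a KT and $K' \in S^2(B)$ is a KT on $B := M_0 \times_{\rho_1} M_1 \times \cdots \times_{\rho_{k-1}} M_{k-1}$, the local warped product corresponding to $D_k^\perp = \bigobot_{i=0}^{k-1} D_i$. The key sub-claim to verify is that each $D_i$ with $1 \le i \le k-1$ remains $K'$-invariant on $B$: for any covector $\alpha$ supported in $D_i$, both $K_k(\alpha) = 0$ and $G_k(\alpha) = 0$ since $K_k$ and $G_k$ are supported on $D_k \perp D_i$; hence $K'(\alpha) = K(\alpha) \in \Gamma(D_i)$ by the hypothesis that $D_i$ is $K$-invariant. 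Thus $(D_0, D_1, \ldots, D_{k-1})$ is a WP-net on $B$ with the required invariance, and the inductive hypothesis yields $K' = K_0' + \sum_{i=1}^{k-1} K_i$ with each $K_i \in \hat S^2(M_i)$ a KT and $K_0'$ a KT on $B$ having each $D_i$ ($1 \le i \le k-1$) as an eigenspace.

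Setting $K_0 := K_0' + t_k\, G_k$ assembles the claimed decomposition $K = K_0 + \sum_{i=1}^{k} K_i$. The eigenspace property is checked pointwise: on $D_i$ with $i < k$, the term $t_k\, G_k$ vanishes (its support lies in $D_k$), so $K_0$ acts by the scalar supplied by $K_0'$; on $D_k$, $K_0'$ vanishes (its support lies in $TB$), so $K_0$ acts as $t_k$. That $K_0$ is a KT on $M$ follows from $K_0 = K - \sum_{i=1}^{k} K_i$, since each $K_i$ ($i \ge 1$) is a KT on $M$ by repeated use of the ``furthermore'' clause in Proposition~\ref{prop:extCKT}. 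The main obstacle in this argument is not computational but structural, namely verifying that the $K'$-invariance of each $D_i$ ($i < k$) survives the restriction to $B$ so that the induction can be applied; once this is established, the remainder is pure bookkeeping of tensor supports.
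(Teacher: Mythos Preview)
Your argument is correct and is precisely the approach the paper intends: its proof consists of the single sentence that the result follows immediately by inductively applying \cref{prop:extCKT}, and you have spelled out that induction carefully, including the check that $K'$-invariance of the remaining $D_i$ descends to the smaller warped product $B$. One trivial remark: as an endomorphism of $TM$ via the ambient metric $g$, the tensor $G_k$ acts on $D_k$ as $\rho_k^{2} I$ rather than $I$, so the eigenvalue of $K_0$ on $D_k$ is $t_k\rho_k^{2}$; this does not affect the conclusion, since the corollary only asserts that each $D_i$ is an eigenspace.
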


There are many examples of warped product metrics in relativity where these results could be applied. We will give two such examples in this article, here is the first:

\begin{example}[Separation in FRW metrics]
	An FRW metric is the product manifold $M = \E_{1}^{1} \times_{\rho} F$ equipped with warped product metric $g = - \d t^2 + \rho^2\tilde{g}$ where $\tilde{g}$ is a Riemannian metric (assumed to be of constant curvature which is not necessary here).
	
	The distribution orthogonal to $\pderiv{}{t}$, is Killing. Then by \cref{prop:extCKT}, note that the metric on $\E_{1}^{1}$ can be lifted to a KT on $M$. Hence it follows by \cref{prop:extCKT} (at least locally) that $M$ admits a ChKT $K$ with timelike eigenvector field $\pderiv{}{t}$ iff there exists a ChKT $\tilde{K} \in S^2(F)$. So any ChKT $\tilde{K} \in S^2(F)$ together with the time coordinate $t$ induces a separable web on $M$.
\end{example}

\subsection{Killing-St\"{a}ckel spaces in Warped Products}

The following corollary uses the connection between extending KTs into warped products and the separation of potentials observed earlier to obtain a necessary and sufficient condition for extending a Killing-St\"{a}ckel space from the geodesic factor of a Warped Product.

\begin{corMy}[Extending a Killing-St\"{a}ckel space into a Warped Product] \label{cor:WPextKS}
	Suppose $M = B \times_{\rho} F$ is a warped product and $\K$ is Killing-St\"{a}ckel space in $B$. If there exists a ChKT $K \in \K$ that can be extended into a KT on $M$ (via the method of \cref{prop:extCKT}) then all KTs in $\K$ can be extended into KTs on $M$.
\end{corMy}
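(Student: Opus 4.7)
The plan is to rephrase the extension condition from \cref{prop:extCKT} as an instance of the dKdV equation, and then invoke Benenti's Theorem (\cref{thm:HJosepII}) to spread the property across the whole KS-space.

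First I would observe that by \cref{prop:extCKT} (taking $\tilde{K} = 0$), a given $K' \in \K \subset S^{2}(B)$ extends to a KT on $M = B \times_{\rho} F$ precisely when there exists $t \in \F(B)$ with $\d t = K' \d \kappa$, where $\kappa := \rho^{-2} \in \hat{\F}(B)$. Locally this is equivalent, by the Poincar\'e lemma, to the closedness condition
\begin{equation}
	\d(K' \d \kappa) = 0,
\end{equation}
which is exactly the dKdV equation with ``potential'' $V := \kappa$ applied to the tensor $K'$. So the extendability of the members of $\K$ is nothing other than the dKdV equation with respect to $\kappa$.

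Next, by hypothesis there is a ChKT $K \in \K$ that extends, which in view of the previous paragraph means $\d(K \d \kappa) = 0$. Since $\K$ is, by definition, the KS-space of the separable web $\tilde{\Ei}$ formed by the eigenspaces of this ChKT $K$, the forward implication of \cref{thm:HJosepII} tells us that the natural Hamiltonian on $B$ with potential $\kappa$ is separable in $\tilde{\Ei}$. The ``furthermore'' clause of the same theorem then guarantees that every $K' \in \K$ also satisfies $\d(K' \d \kappa) = 0$.

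Combining these two observations, every $K' \in \K$ satisfies the local exactness condition demanded by \cref{prop:extCKT}, so each $K'$ can be extended (with the specific choice $\tilde{K} = 0$) to a KT on $M$. The only real step is the recognition that the warped-product extension condition is formally identical to the dKdV equation with $V = \kappa$; once this is noticed, the furthermore part of Benenti's theorem does essentially all the remaining work, and I do not anticipate any substantial obstacle.
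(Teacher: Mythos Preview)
Your proposal is correct and follows essentially the same route as the paper: recognize that the extension condition of \cref{prop:extCKT} with $\tilde{K}=0$ is precisely the dKdV equation for the potential $\kappa=\rho^{-2}$, then use the ``furthermore'' clause of Benenti's theorem (\cref{thm:HJosepII}) to propagate it from the given ChKT to every member of $\K$. The only minor addition you make is explicitly invoking the Poincar\'e lemma for the local equivalence of closedness and exactness, which the paper leaves implicit (having already noted the equivalence just after \cref{prop:extCKT}).
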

\begin{proof}
	Suppose $K \in \K$ is a ChKT that can be extended into a KT on $M$. Then from \cref{prop:extCKT}, $K$ satisfies the dKdV equation with $\rho^{-2}$. Then by Benenti's theorem it follows that every $K \in \K$ satisfies the dKdV equation with $\rho^{-2}$, hence by the above proposition every $K \in \K$ can be extended into a KT on $M$.
\end{proof}

The above corollary motivates the following notion of a reducible separable web, which is characterized intrinsically by the invariant distributions of an associated ChKT.

\begin{definition}[Reducible separable web]
	Suppose $\Ei$ is an orthogonal separable web locally characterized by a ChKT, $K$. $\Ei$ is said to be reducible if it admits a K-invariant Killing distribution.
\end{definition}

First note that since all KTs in the KS-space of a separable web are simultaneously diagonalized, the above definition doesn't depend on the choice of the ChKT, hence is well-defined. One can check that the above definition of a reducible separable web is equivalent to the one given in the introduction. We will make exclusive use of the above definition in the rest of the article. The following proposition states clearly why we introduce to notion of reducible separable webs.

\begin{propMy}[The Killing-St\"{a}ckel space of a reducible separable web] \label{prop:wpKss}
	Suppose $K$ is a ChKT with associated KS-space $\K$ inducing a reducible separable web, i.e. there exists a K-invariant Killing distribution $D$. Let $M = B \times_{\rho} F$ be a local warped product adapted to the WP-net $(D^{\perp},D)$ with adapted contravariant metric $G = G_{B} + \rho^{-2} G_{F}$. Then there are KS-spaces $\K_{B}$ and $\K_{F}$ on $B$ and $F$ respectively such that $L \in \K$ iff there exists $L_{B} \in \K_{B}$, $L_{F} \in \K_{F}$ and $l \in \hat{\F}(B)$ such that the following equations hold
	 \begin{align}
	 	L & = L_{B} + l G_{F} +  L_{F} \\
	 	\d l & = L_{B} \d \rho^{-2}
	 \end{align}
\end{propMy}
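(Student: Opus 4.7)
My plan is to apply Proposition~\ref{prop:extCKT} to every element of $\K$, collect the resulting $B$- and $F$-parts into vector spaces on the factors, and then use Corollary~\ref{cor:WPextKS} together with Benenti's theorem to see that these vector spaces are the expected KS-spaces and to handle the converse direction.

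First I would observe that because $K$ has simple eigenfunctions and $D$ is $K$-invariant, $D$ is the orthogonal direct sum of a subset of $K$'s eigenspaces; every $L \in \K$ is diagonalized in the common eigenframe, so $D$ is $L$-invariant as well. Thus Proposition~\ref{prop:extCKT} applies to every $L \in \K$, producing a decomposition $L = L_B + l_L G_F + L_F$ with $L_B \in \hat{S}^2(B)$ and $L_F \in \hat{S}^2(F)$ both KTs on their respective factors and $\d l_L = L_B \, \d \rho^{-2}$. I would then define $\K_B := \{L_B : L \in \K\}$ and $\K_F := \{L_F : L \in \K\}$; the one-parameter ambiguity $l_L \mapsto l_L + c$, $L_F \mapsto L_F - c G_F$ in the decomposition is absorbed by the fact that $G_F \in \K_F$.

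Next I would identify these vector spaces as KS-spaces. By Proposition~\ref{prop:restrictCKT} applied to a leaf of the foliation integrating $D^\perp \cong B$, the $B$-part $K_B$ of $K$ is a CKT on $B$, and Proposition~\ref{prop:extCKT} forces it to be a KT; it inherits simple eigenfunctions and orthogonally integrable eigenspaces from $K$, hence is a ChKT on $B$ and by Theorem~\ref{thm:HJosepI} determines a KS-space $\K_B'$ of dimension $\dim B$ on $B$ (and analogously $K_F$ determines $\K_F'$ on $F$). The inclusion $\K_B \subseteq \K_B'$ is immediate since each $L_B$ is diagonalized in the eigenframe of $K_B$. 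For the reverse inclusion, because $K \in \K$ already extends from its $B$-part $K_B$ via Proposition~\ref{prop:extCKT}, $K_B$ satisfies $\d(K_B \, \d \rho^{-2}) = 0$; Benenti's theorem (Theorem~\ref{thm:HJosepII}) then propagates this dKdV equation to every element of $\K_B'$, and the sufficient direction of Proposition~\ref{prop:extCKT} lifts each such element to a KT on $M$ that is simultaneously diagonalized in $\Ei$, hence an element of $\K$. The case of $\K_F$ is simpler, since by the first special case noted after Proposition~\ref{prop:extCKT} any KT in $\hat{S}^2(F)$ automatically lifts to a KT on $M$.

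With $\K_B$ and $\K_F$ in hand, the forward direction of the equivalence is immediate from the construction. For the converse, given $L_B \in \K_B$ and $L_F \in \K_F$, the dKdV equation above guarantees the existence of $l \in \F(B)$ with $\d l = L_B \, \d \rho^{-2}$ (unique up to a constant); then the sufficient direction of Proposition~\ref{prop:extCKT} assembles $L := L_B + l G_F + L_F$ into a KT on $M$, which necessarily lies in $\K$ by simultaneous diagonalization. The main obstacle I anticipate is the careful dimensional bookkeeping: one must check that the assignment $L \mapsto (L_B, L_F)$, modulo the one-parameter $G_F$-ambiguity, gives a linear bijection between $\K$ and the expected space, so that the dimensions $n = \dim B + \dim F = \dim \K_B + \dim \K_F$ balance correctly, with the Killing (rather than merely invariant) hypothesis on $D$ being what makes Proposition~\ref{prop:extCKT} applicable uniformly across the entire KS-space.
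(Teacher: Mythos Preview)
Your proposal is correct and follows essentially the same route as the paper: restrict via \cref{prop:restrictCKT} to obtain the induced KS-spaces on $B$ and $F$, decompose each $L\in\K$ via \cref{prop:extCKT}, and use Benenti's theorem (packaged in the paper as \cref{cor:WPextKS}) to lift every element of $\K_B$ back into $\K$. Your version is somewhat more careful than the paper's terse argument---you explicitly set up the double inclusion $\K_B\subseteq\K_B'$ and $\K_B'\subseteq\K_B$ and worry about the $G_F$-ambiguity and dimension count---whereas the paper simply asserts that restriction ``induces'' the KS-spaces and moves on; but the underlying logic is the same.
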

\begin{proof}
	By \cref{prop:restrictCKT} it follows that $\K$ induces a KS-space $\K_{B}$ in $B$ and a KS-space $\K_{F}$ in $F$. If $L \in \K$, then it follows from \cref{prop:extCKT} that $L$ is determined up to constants by KTs in $\K_{B}$ and $\K_{F}$ satisfying the above equations.  Conversely from \cref{prop:extCKT} it follows that every KT in $\K_{F}$ can be extended to a KT in $\K$. Furthermore it follows from \cref{prop:extCKT} that $K$ can be decomposed into a KT on $M$ to satisfy the hypothesis of the above corollary. Hence from the above corollary it follows that each $L_{B} \in \K_{B}$ can be extended into a KT in $\K$ given by the above equation by taking $L_{F} = 0$.
\end{proof}

One usually determines if an orthogonal separable web is reducible by inspecting the metric in adapted coordinates by using \cref{prop:tpProps}~(4) and keeping in mind that all KTs in the KS-space are diagonalized in adapted coordinates. We give some examples to illustrate this.

\begin{example}
	The dimension of the Killing distribution is one in the above definition iff there is a Killing vector spanning one of the distributions of the web. This is sometimes called a \emph{web symmetry} \cite{Horwood2009}.
\end{example}
\begin{example}
	There is an abundant supply of reducible separable webs in spaces of constant curvature \cite{Kalnins1986}. These are a special case of KEM webs which will be introduced in \cref{sec:KEMweb}.
\end{example}

\section{Separation of The Hamilton-Jacobi Equation in Warped Products} \label{sec:sepHJeqnWP}

In this section we are concerned with the separation of the Hamilton-Jacobi equation in reducible separable webs. $K$ is assumed to be a \gls{kt} with orthogonally integrable eigenspaces $(E_{i})_{i=1}^{k}$ with associated eigenfunctions $\lambda_{1},...,\lambda_{k}$ and multiplicities $m_{1},...,m_{k}$. We work in the local twisted product $\sideset{^{\rho}}{_{i=1}^{k}}{\prod}M_{i}$ adapted to the eigenspaces of K given by \cref{cor:CKTsandTPnets}.

Fix $x \in \hat{\ve}(M_{i})$ and $y \in \hat{\ve}(M_{j})$ such that $[x,y] = 0$, then letting $\sigma_{i} := \log \rho_{i}^{2}$, it follows from \cref{eq:meanCurvWarpFn} that the eigenfunctions satisfy

\begin{equation}
	x \lambda_{j} = (\lambda_{j} - \lambda_{i}) x \sigma_{j}
\end{equation}

\noindent and we note that the above equation holds even if $i = j$. Fixing $V \in \F(M)$ and using the above equation we have

\begin{align}
	\d (K \d V)(x,y) & = x(K(y, \nabla V)) - y(K(x, \nabla V)) \\
	& = x(\lambda_{j} y V) - y(\lambda_{i} x V) \\
	& = x \lambda_{j} y V - y \lambda_{i} x V + \lambda_{j} x y V - \lambda_{i} y x V \\
	& = (\lambda_{j} - \lambda_{i}) x \sigma_{j} y V - (\lambda_{i} - \lambda_{j}) y \sigma_{i} x V +  (\lambda_{j} - \lambda_{i})x y V \\
	& = (\lambda_{j} - \lambda_{i})(x y V + x \sigma_{j} y V + y \sigma_{i} x V )
\end{align}


Hence we have proven the following:

\begin{proposition}[The dKdV Equation in the eigenframe] \label{prop:solnsOfDKdVeQ}
	Given K and V as above, $\d (K \d V) \equiv 0$ iff for each $x \in \hat{\ve}(M_{i})$ and $y \in \hat{\ve}(M_{j})$ with $i \neq j$ the following holds:
	
	\begin{equation} \label{eq:dKdVconsq}
		x y V + x \log \rho_{j}^{2} y V + y \log \rho_{i}^{2} x V = 0
	\end{equation}
	
	From which we can deduce the following:
	
	\begin{enumerate}
		
		\item If $E_{i}^{\perp}$ is geodesic, hence $E_{i}$ is Killing (see \cref{{prop:perpGeoImpKil}}), we have for all $y \in \hat{\ve}(M_{i \perp})$:
		\begin{equation} \label{eq:KiCons}
			y( \rho_{i}^{2} x V) = 0
		\end{equation}
		
		\item In particular, if $E_{i}$ and $E_{j}$ are Killing and $i \neq j$, we have for $x \in \hat{\ve}(M_{i})$ and $y \in \hat{\ve}(M_{j})$:
		\begin{equation}
			x y V = 0
		\end{equation}
	\end{enumerate}
\end{proposition}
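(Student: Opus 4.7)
The calculation immediately preceding the statement does most of the work for the main equivalence. Namely, for $x\in\hat{\ve}(M_i)$ and $y\in\hat{\ve}(M_j)$ with $[x,y]=0$ (which holds automatically for lifts from distinct factors, and can be arranged for $i=j$ by choosing commuting frames), it has been shown that
\[
\d(K\d V)(x,y) = (\lambda_j-\lambda_i)\bigl(xyV + x\sigma_j\, yV + y\sigma_i\, xV\bigr),\qquad \sigma_i := \log\rho_i^2.
\]
My plan for the main iff is therefore: first observe that $\d(K\d V)$ is a $2$-form, so it suffices to evaluate it on pairs drawn from a local frame adapted to the twisted product $\prod_{i=1}^k M_i$, and by multilinearity on lifts $x\in\hat{\ve}(M_i)$, $y\in\hat{\ve}(M_j)$. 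When $i=j$, the prefactor $(\lambda_j-\lambda_i)$ vanishes identically so the equation holds automatically. When $i\neq j$, the eigenfunctions $\lambda_i$ and $\lambda_j$ are distinct on the (shrunken) neighbourhood where the eigenspace dimensions are constant, so $(\lambda_j-\lambda_i)$ is nowhere zero, and dividing through yields precisely \eqref{eq:dKdVconsq}. This establishes the equivalence in both directions.

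For the deductions, the key input is \cref{prop:tpProps}~(4), which tells us that if $E_i$ is Killing (and the twisted product is normalised, which we may arrange locally by the freedom noted after \cref{def:twistedProd}), then every twist function $\rho_\ell$ with $\ell\neq i$ is independent of $M_i$, and so $x\log\rho_\ell^2 = 0$ for all $x\in\hat{\ve}(M_i)$. Applying this to deduction (1): the hypothesis that $E_i^{\perp}$ is geodesic forces $E_i$ to be Killing via \cref{prop:perpGeoImpKil}, so the middle term $x\sigma_j\,yV$ in \eqref{eq:dKdVconsq} drops out. Using $[x,y]=0$ to replace $xyV$ by $yxV$, the remaining identity becomes $y\sigma_i\,xV + yxV = 0$, which after multiplication by $\rho_i^2$ is exactly $y(\rho_i^2\,xV)=0$. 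Deduction (2) follows at once: if both $E_i$ and $E_j$ are Killing, then $x\sigma_j$ and $y\sigma_i$ both vanish, leaving $xyV=0$.

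The only subtlety I anticipate is making sure the normalisation step in the deductions is legitimate and does not affect the statement $y(\rho_i^2 xV)=0$: rescaling $\rho_i$ by a factor in $\hat{\F}(M_i)$ changes the left-hand side only by such a factor, so the vanishing statement is invariant under the admissible rescalings. Aside from this, the argument is a direct bookkeeping exercise from the identity already computed.
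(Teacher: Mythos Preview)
Your argument is correct and, for the main equivalence, identical to the paper's: both rely on the displayed identity $\d(K\d V)(x,y)=(\lambda_j-\lambda_i)(xyV+x\sigma_j\,yV+y\sigma_i\,xV)$ and the fact that $\lambda_i\neq\lambda_j$ for $i\neq j$.

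For deduction~(1) you take a slightly longer route than the paper. You pass from ``$E_i^\perp$ geodesic'' to ``$E_i$ Killing'' via \cref{prop:perpGeoImpKil}, then invoke \cref{prop:tpProps}~(4) together with a normalisation argument to conclude $x\sigma_j=0$. The paper instead applies the hypothesis $E_i^\perp$ geodesic directly through \cref{prop:tpProps}~(3), which already asserts that $\rho_j$ is independent of $M_i$ for every $j\neq i$, with no normalisation required; this kills the middle term immediately, and the same product-rule manipulation $yxV+y\sigma_i\,xV=\rho_i^{-2}\,y(\rho_i^2\,xV)$ finishes the job. Your detour is valid, and your observation that the conclusion $y(\rho_i^2\,xV)=0$ is invariant under the admissible rescalings of $\rho_i$ is correct, but the normalisation step is avoidable. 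Deduction~(2) is handled the same way in both arguments once the cross terms vanish.
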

\begin{proof}
	The first equation immediately follows from the above calculations. Now for the consequences, if $E_{i}^{\perp}$ is geodesic, then $ x(\log \rho_{j}^{2}) = 0$ for $j \neq i$ by \cref{prop:tpProps}~(3), hence
	
	\begin{align}
		x y V + x \log \rho_{j}^{2} y V + y \log \rho_{i}^{2} x V & = x y V +  y \log \rho_{i}^{2} x V  \\
		& = x y V + \frac{y \rho_{i}^{2}}{\rho_{i}^{2}} x V \\
		& = \frac{1}{\rho_{i}^{2}} (\rho_{i}^{2} x y V + y \rho_{i}^{2} x V) \\
		& = \frac{1}{\rho_{i}^{2}}  y(\rho_{i}^{2} x V)
	\end{align}
	
	Hence $y(\rho_{i}^{2} x(V)) = 0$, the second statement also follows immediately.
\end{proof}


Now, suppose $E_{i}$ is Killing and that $\tilde{K}_{i}$ is a KT on $M_{i}$, then by \cref{prop:extCKT}, the lift $K_{i}$, is a KT on M. The following proposition will allow us to reduce the calculation of the dKdV equation with $K_{i}$ on M to the restriction of the equation on $M_{i}$. To make this precise, we fix $\bar{p} \in M$ and let $L_{i}(\bar{p})$ be the leaf of the canonical foliation of $M_{i}$ through $\bar{p}$. Furthermore let $\tau_{i} : M_{i} \rightarrow L_{i}(\bar{p})$ be the embedding of $M_{i}$ in $M$.

\begin{propMy}[Reduction of The dKdV equation on Warped Products] \label{prop:wpReddKdV}
	Suppose $K$ and $K_{i}$ are as above, $E_{i}$ is Killing and additionally assume that $M$ is connected. For a potential $V \in \F(M)$, let $V_{i} := \tau_{i}^{*} V \in \F(M_{i})$. Suppose $\d{(K \d{V})} = 0$ holds on $M$, then the following is true:
	
	\begin{align}
		\d{(K_{i} \d{V})} = 0 \quad \Leftrightarrow \quad \d{(\tilde{K}_{i} \d{V_{i}})} = 0
	\end{align}
\end{propMy}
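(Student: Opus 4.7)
The plan is to extract a structural form for $V$ from the hypothesis, then compute $K_{i}\,\d V$ directly in coordinates adapted to the warped product and recognize it as the lift of $\tilde{K}_{i}\,\d W_{1}$ for an appropriate $W_{1} \in \hat{\F}(M_{i})$.

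First, since $\d(K\,\d V) = 0$ and $E_{i}$ is Killing, consequence (1) of \cref{prop:solnsOfDKdVeQ} yields $y(\rho_{i}^{2}\, xV) = 0$ for every $x \in \hat{\ve}(M_{i})$ and $y \in \hat{\ve}(M_{i \perp})$. By \cref{prop:tpProps}~(4), the Killing property of $E_{i}$ allows us to normalize the twisted product so that $\rho_{i}$ is independent of $M_{i}$. Together with the connectedness of $M$, the vanishing above integrates to give
\begin{equation}
V = W_{0} + \rho_{i}^{-2}\, W_{1}, \qquad W_{0} \in \hat{\F}(M_{i \perp}),\quad W_{1} \in \hat{\F}(M_{i}).
\end{equation}

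Next I would work in local coordinates $(x^{a}, x^{\alpha})$ adapted to the warped product, with $a$-indices parametrizing $M_{i \perp}$ and $\alpha$-indices parametrizing $M_{i}$. The only nonzero components of $K_{i}$ are $K_{i}^{\alpha \beta} = \tilde{K}_{i}^{\alpha \beta}$, independent of $x^{a}$; the metric is block-diagonal with $g_{\alpha \beta} = \rho_{i}^{2}\, \tilde{g}_{i, \alpha \beta}$. Hence the lowered components of the $1$-form $K_{i}\,\d V$ satisfy $(K_{i}\,\d V)_{a} = 0$ and
\begin{equation}
(K_{i}\,\d V)_{\beta} = g_{\beta \gamma}\, \tilde{K}_{i}^{\gamma \alpha}\, \partial_{\alpha} V = \rho_{i}^{2}\, \tilde{g}_{i, \beta \gamma}\, \tilde{K}_{i}^{\gamma \alpha}\, \partial_{\alpha} V.
\end{equation}
Substituting $V = W_{0} + \rho_{i}^{-2} W_{1}$ and using $\partial_{\alpha} W_{0} = 0$ together with $\partial_{\alpha} \rho_{i} = 0$, the $\rho_{i}^{2}$ cancels to give $(K_{i}\,\d V)_{\beta} = \tilde{g}_{i, \beta \gamma}\, \tilde{K}_{i}^{\gamma \alpha}\, \partial_{\alpha} W_{1} = (\tilde{K}_{i}\,\d W_{1})_{\beta}$, a function of $x^{\alpha}$ only. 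Taking the exterior derivative, the pure $(a,b)$- and mixed $(a, \beta)$-components of $\d(K_{i}\,\d V)$ vanish identically, while the $(\alpha, \beta)$-components are exactly $\d(\tilde{K}_{i}\,\d W_{1})_{\alpha \beta}$ computed on $M_{i}$. Therefore $\d(K_{i}\,\d V) = 0 \Leftrightarrow \d(\tilde{K}_{i}\,\d W_{1}) = 0$.

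Finally, restricting $V$ to the leaf $L_{i}(\bar{p})$ gives $V_{i} = W_{0}(\bar{p}_{i \perp}) + \rho_{i}^{-2}(\bar{p}_{i \perp})\, W_{1}$, so $V_{i}$ and $W_{1}$ differ by an additive constant and a nonzero multiplicative constant; consequently $\d(\tilde{K}_{i}\,\d V_{i}) = 0$ is equivalent to $\d(\tilde{K}_{i}\,\d W_{1}) = 0$, closing the chain of equivalences. The main obstacle is the first step: extracting the global decomposition $V = W_{0} + \rho_{i}^{-2} W_{1}$ from the pointwise identity $y(\rho_{i}^{2}\, xV) = 0$ requires the normalized twisted product (so $\rho_{i}$ is genuinely an $M_{i \perp}$-function) together with connectedness, in order to pass from a fibrewise factorization to globally defined functions on $M_{i \perp}$ and $M_{i}$.
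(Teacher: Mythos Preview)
Your argument is correct and arrives at the same conclusion by a genuinely different route from the paper. The paper never writes down the structural decomposition $V = W_{0} + \rho_{i}^{-2} W_{1}$; instead it encodes the same information in the statement that $\mathcal{L}_{y}\bigl(\rho_{i}^{2}\,(\d V)_{i}\bigr) = 0$ for all $y \in \hat{\ve}(M_{i\perp})$, deduced from \cref{eq:KiCons} via Cartan's formula, and then verifies $\d(K_{i}\,\d V)(x,y) = 0$ case by case on pairs of lifted vector fields (both in $M_{i}$, one in each factor, both in $M_{i\perp}$), invoking connectedness to propagate the vanishing off the leaf $L_{i}(\bar p)$. Your approach is more computational but also more transparent: once $V = W_{0} + \rho_{i}^{-2} W_{1}$ is in hand, the identity $K_{i}\,\d V = \pi_{i}^{*}(\tilde{K}_{i}\,\d W_{1})$ makes the equivalence immediate, and the role of connectedness is pushed into the extraction of globally defined $W_{0}, W_{1}$ rather than into a propagation argument. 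The paper's route buys a coordinate-free presentation and avoids the normalization step; yours buys an explicit formula for $K_{i}\,\d V$ and makes the affine relation $V_{i} = \text{const} + \text{const}\cdot W_{1}$ visible, which is conceptually pleasant.
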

\begin{proof}
	The first implication follows trivially by naturality of the exterior derivative, so now we prove the converse. First we note that as endomorphisms of $T^{*} M$, $K_{i} = \rho_{i}^{2} \tilde{K}_{i}$ where $\tilde{K}_{i}$ is the lift of an endomorphism of $T^{*} M_{i}$. We also note that for $y \in \hat{\ve}(M_{i \perp})$
	
	\begin{equation}
		\lied{ (\rho_{i}^{2} (\d V)_{i})}{y} = 0
	\end{equation}
	
	\noindent where $(\d V)_{i}$ is the orthogonal projection of $\d V$ onto $T^{*} M_{i}$.  To prove this, we first note that since $\d{(K \d{V})} = 0$, $y (\rho_{i}^{2} x V) = 0$ for all $x \in \hat{\ve}(M_{i})$ by \cref{eq:KiCons} in \cref{prop:solnsOfDKdVeQ}. This implies that $\d{(\rho_{i}^{2} (\d V)_{i})} = 0$. Hence the above equation follows by Cartan's Formula which relates the exterior derivative of forms to their Lie derivatives.
	
	Now by hypothesis, for $x \in \hat{\ve}(M_{i})$ and $y \in \hat{\ve}(M_{i})$ with $[x,y] = 0$ we have that $\tau_{i}^{*} (\d{(K_{i} \d{V})}(x,y)) = 0$. Then for $z \in \hat{\ve}(M_{i \perp})$,
	
	\begin{align}
		z \d{(K_{i} \d{V})}(x,y) & = z [x(K_{i}(y, \d V)) - y(K_{i}(x, \d V))] \\
		& =  z[x(\tilde{K}_{i}(y, \rho_{i}^{2} (\d V)_{i}))) - y(\tilde{K}_{i}(x, \rho_{i}^{2} (\d V)_{i})))] \\
		& =  x(\tilde{K}_{i}(y, \lied{ (\rho_{i}^{2} (\d V)_{i})}{z}))) - y(\tilde{K}_{i}(x, \lied{ (\rho_{i}^{2} (\d V)_{i})}{z}))) \\
		& = 0
	\end{align}
	
	\noindent where the last equation follows since $\lied{ (\rho_{i}^{2} (\d V)_{i})}{z} = 0$. Thus since M is connected we conclude that $\d{(K_{i} \d{V})}(x,y) = 0$ on M.
	
	For $x \in \hat{\ve}(M_{i})$ and $y \in \hat{\ve}(M_{i \perp})$
	
	\begin{align}
		\d{(K_{i} \d{V})}(x,y) & = x(\tilde{K}_{i}(y, \rho_{i}^{2} (\d V)_{i})) - y(\tilde{K}_{i}(x, \rho_{i}^{2} (\d V)_{i})) \\
		& = - \tilde{K}_{i}(x, y(\rho_{i}^{2} (\d V)_{i})) \\
		& = 0
	\end{align}
	
	Also it easily follows that for $x \in \hat{\ve}(M_{i \perp})$ and $y \in \hat{\ve}(M_{i \perp})$, that $\d{(K_{i} \d{V})}(x,y) = 0$. Thus the result is proven.
\end{proof}

We now consider the problem of separation in warped products. To be precise, suppose $N = N_{0} \times_{\rho} \prod\limits_{i=1}^{l} N_{i}$ is a warped product and $\Ei = (D_{i})_{i=0}^{l}$ is the associated WP-net. Suppose K is a ChKT such that each Killing distribution defining $\Ei$ is K-invariant. According to Benenti's Theorem (\cref{thm:HJosepII}), for a potential $V \in \F(M)$ to be separable in the web associated with K, we need to check that the dKdV equation is satisfied. Although in this case we have some more information. Due to \cref{cor:wpGenKTdecomp}, K can be decomposed as follows in contravariant form:

\begin{equation}
	K = K_{0} + \sum_{i=1}^{l} K_{i}
\end{equation}

\noindent where each $K_{i} \in \hat{S}^{2}(N_{i})$ is a KT for $i = 1,..,l$, each $D_{i}$ is an eigenspace of $K_{0}$ for $i = 1,..,l$ and $K_{0}$ restricted to $D_{0}$ is characteristic. By Benenti's Theorem, if V satisfies the dKdV equation with K, then it must satisfy the dKdV equation with each $K_{i}$. In particular it must satisfy the dKdV equation with $K_{0}$. Since $K_{0}$ invariantly encodes the warped product through its eigenspaces and a partial separable web on $D_{0}$, one could ask if the converse holds. If V satisfies the dKdV equation with a given KT $K_{0}$ with eigenspaces as just stated, is it possible to build up a separable web for V by reducing the problem to one on the spherical factors of $N$? The following theorem shows that we can.

\begin{thmMy}[Separation in Warped Products] \label{thm:wpSOV}
	Suppose $(D_{i})_{i=0}^{l}$ is a WP-net and $K_{0}$ is a KT with eigenspaces $D_{i}$ for $i = 1,...,l$ and characteristic on $D_{0}$. Fix $\bar{p} \in M$ and let $N = \prod\limits_{i=0}^{l} N_{i}$ be a connected product manifold passing through $\bar{p}$ adapted to the WP-net $(D_{i})_{i=0}^{l}$. Then the following holds:
	
	Suppose $V \in \F(M)$ satisfies $\d (K_{0} \d V) = 0$. Let $V_{i} := \tau_{i}^{*} V \in \F(N_{i})$ and suppose for each $i = 1,...,k$ there exists a ChKT $\tilde{K}_{i}$ on $N_{i}$ such that $\d (\tilde{K}_{i} \d V_{i}) = 0$.
	
	Then V is separable in the web formed by the simple eigenspaces of $K_{0}$ together with the lifts of the simple eigenspaces of $\tilde{K}_{1},...,\tilde{K}_{l}$.
\end{thmMy}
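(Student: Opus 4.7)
The plan is to apply Benenti's Theorem (\cref{thm:HJosepII}): the separability of $V$ in the claimed web $\Ei$ reduces to exhibiting a ChKT $K$ on $M$ whose eigenspaces form $\Ei$ and which satisfies $\d(K\d V) = 0$. My candidate is $K := c_{0} K_{0} + \sum_{i=1}^{l} c_{i} K_{i}$, where $K_{i}$ is the lift to $M$ of $\tilde K_{i}$ from $N_{i}$ and $c_{0},\ldots,c_{l}$ are constants to be chosen. Each $K_{i}$ is a KT on $M$ by the first special case of \cref{prop:extCKT} applied to the warped product $B_{i}\times_{\rho_{i}}N_{i}$, where $B_{i}$ is the product of the remaining factors with the inherited warped metric; the required WP-net structure $(D_{i}^{\perp},D_{i})$ is guaranteed by the assumption that $D_{i}$ is Killing. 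By linearity $K$ is itself a KT.

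Next I identify the eigenspaces of $K$ factor-by-factor. On $D_{0}$ we have $K|_{D_{0}} = c_{0} K_{0}|_{D_{0}}$, whose eigenspaces are precisely the one-dimensional simple eigenspaces of $K_{0}|_{N_{0}}$. On each $D_{i}$ with $i \geq 1$, $K_{0}$ acts as a scalar $\mu_{i}$ so that $K|_{D_{i}} = c_{0}\mu_{i}G|_{D_{i}} + c_{i}K_{i}|_{D_{i}}$ has the same eigenspaces as $\tilde K_{i}$, lifted to $M$. For a generic choice of the $c_{j}$ the resulting eigenvalues are pairwise distinct across all factors, so $K$ has simple eigenfunctions; orthogonal integrability of its eigenspaces is obtained by combining coordinates on $N_{0}$ adapted to the characteristic tensor $K_{0}|_{N_{0}}$ with coordinates on each $N_{i}$ adapted to the ChKT $\tilde K_{i}$, which assemble into orthogonal coordinates on the local warped product $N$ adapted to $\Ei$. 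This confirms that $K$ is a ChKT whose associated web is exactly $\Ei$.

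It remains to verify $\d(K\d V) = 0$. By linearity this reduces to showing $\d(K_{i}\d V) = 0$ for each $i \geq 1$. Fix such an $i$: the distribution $D_{i}$ is Killing, $K_{0}$ has orthogonally integrable eigenspaces with $D_{i}$ among them, and $\d(K_{0}\d V) = 0$ by hypothesis. The hypotheses of \cref{prop:wpReddKdV} are therefore satisfied with $K = K_{0}$ on the connected adapted product $N$; its conclusion yields $\d(K_{i}\d V) = 0 \Leftrightarrow \d(\tilde K_{i}\d V_{i}) = 0$, and the right-hand side is assumed. Combining this with $\d(K_{0}\d V) = 0$ gives $\d(K\d V) = 0$, and Benenti's Theorem delivers the desired separability of $V$ in $\Ei$.

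The main obstacle is ensuring that the candidate $K$ really is a ChKT rather than merely a KT: its eigenfunctions on different factors must remain simple as a collective, forcing the genericity argument for the constants $c_{j}$, and the one-dimensional distributions of $\Ei$ must form an integrable orthogonal web, not just an orthogonal family --- by \cref{thm:charIntNets} this requires integrability of the pairwise direct sums as well. Both points are resolved by exploiting the warped product structure supplied by the WP-net together with the characteristic hypothesis on $K_{0}|_{N_{0}}$ and the ChKT hypothesis on each $\tilde K_{i}$, which together let adapted coordinates on the individual factors be spliced into adapted coordinates on $N$.
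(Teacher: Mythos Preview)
Your proposal is correct and follows essentially the same approach as the paper: form $K$ as a linear combination of $K_{0}$ and the lifts $K_{i}$ of the $\tilde{K}_{i}$ (the paper writes $K = K_{0} + \sum K_{i}$ and then replaces each $\tilde{K}_{i}$ by $a_{i}\tilde{K}_{i} + b_{i}\tilde{G}_{i}$, whereas you use scalars $c_{i}$ out front, but both are just devices to force simple eigenfunctions locally), verify that $K$ is a ChKT via adapted product coordinates, invoke \cref{prop:wpReddKdV} to lift the dKdV conditions from each $N_{i}$, and conclude with Benenti's theorem. Your discussion of the genericity and integrability issues is a bit more explicit than the paper's, but the argument is the same.
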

\begin{proof}
	For $i = 1,...,l$, let $K_{i}$ be the lift of $\tilde{K}_{i}$ to $N$. Consider the tensor

	\begin{equation}
		K := K_0 + \sum_{i=1}^{l} K_i
	\end{equation}

	By \cref{prop:extCKT}, $K$ is a Killing tensor on $N$. Let $\tilde{G}_i$ be the contravariant metric on $N_i$, then by replacing $\tilde{K}_{i}$ with $a_i \tilde{K}_{i} + b_i \tilde{G}_i$ for some $a_i \in \R \setminus \{0\}$ and $b_i \in \R$, we can assume $K$ locally has simple eigenfunctions. Let $q_0$ be coordinates which diagonalize the ChKT induced by $K_0$ on $N_0$. Let $q_j$ be coordinates which diagonalize $\tilde{K}_{j}$ on $N_j$ for each $j > 0$. Then one can check that the product coordinates $(q_0, q_1, \dotsc, q_l)$ are orthogonal and diagonalize $K$, hence $K$ is a ChKT. By \cref{prop:wpReddKdV}, $\d (K_{i} \d V) = 0$ on $N$ for each $i > 0$, hence $K$ satisfies the dKdV equation with $V$. Thus it follows by \cref{thm:HJosepII} that $V$ separates in the product coordinates $(q_0, q_1, \dotsc, q_l)$, which proves the claim.
\end{proof}

The above theorem and the preceding discussion shows that reducible separable webs enable one to reduce the problem of separation to certain spherical submanifolds after one finds a KT with the same eigenspaces as $K_{0}$ in the above theorem. In the following section, we shall see that in special cases, orthogonal concircular tensors generate such a KT. In fact in the following section, we will use orthogonal concircular tensors and the above theorem to develop a general algorithm for separating a potential.

\section{Main Application: Concircular tensors and The BEKM Separation Algorithm} \label{sec:BEKMsepAl}

In this section we apply the theory developed in this article to concircular tensors. We first review the theory of concircular tensors and then present the key observation that a multidimensional eigenspace of concircular tensor is necessarily Killing. Then we introduce the notion of KEM webs and prove their relation to separable webs in spaces of constant curvature. We conclude by introducing the BEKM separation algorithm which can determine separability of natural Hamiltonians in KEM webs.

\subsection{Concircular tensors}

Recall that $L \in S^{2}(M)$ is called a \gls{ct} if it satisfies the following equation

\begin{equation} \label{eq:CTform}
	\nabla_{x}L = \alpha \odot x
\end{equation}

\noindent for all $x \in \ve(M)$ and some vector field $\alpha$.

An \defn{oct} or more succinctly an OC-tensor is a concircular tensor which is also an orthogonal tensor. OC-tensors with simple eigenfunctions were studied extensively by Benenti, see \cite{Benenti1992c,Benenti2004,Benenti2005a}; thus in recognition of his contributions we refer to this special class of OC-tensors as \emph{Benenti tensors} (also called L-tensors by Benenti).

OC-tensors have some useful properties. First, given a tensor $L$, let $N_{L}$ be the Nijenhuis tensor (torsion) of $L$ \cite{Gerdjikov2008a}. We say that $L$ is \emph{torsionless} if its Nijenhuis tensor vanishes. Then if $L$ is a concircular tensor, the following equations hold \cite[Lemma 3.1]{Benenti2005a} (cf. \cite{Crampin2003})

\begin{align}
	[L, G] = - 2 \nabla \tr{L} \odot G \\
	N_{L} = 0 
\end{align}

Conversely, by Theorem 19.3 in \cite{Benenti2005a}, an orthogonal tensor satisfying the above equations is a C-tensor. The first of the above equations tells us that a C-tensor is a conformal Killing tensor of trace-type. The second equation can be interpreted if we assume $L$ is an OC-tensor.

Suppose $L$ is an OC-tensor with eigenspaces $(E_{i})_{i=1}^{k}$ and corresponding eigenfunctions $\lambda^{1},...,\lambda^{k}$. Since an OC-tensor has Nijenhuis torsion zero, by Theorem~13.29 in \cite{Gerdjikov2008a}, the eigenspaces $(E_{i})_{i=1}^{k}$  are orthogonally integrable and each eigenfunction $\lambda^{i}$ depends only on $E_{i}$, i.e.

\begin{equation} \label{eq:torProp}
	(\nabla \lambda_{i})^{j} = 0 \quad j \neq i
\end{equation}

Now we present the property of OCTs that connects them with warped products. The trace-type condition implies that the conformal factor $\alpha = \nabla \tr{L} = \sum\limits_{i} m_{i} \nabla \lambda_{i}$ where $m_{i} = \dim E_{i}$. This together with \cref{eq:CKTconFac} in \cref{prop:OKTcharIII} gives the following:
	
\begin{equation}
	m_{i} \nabla \lambda_{i} = \alpha^{i} = \nabla \lambda_{i}
\end{equation}

Hence when $\dim E_{i} > 1$, $\lambda_{i}$ must be constant; this property was first observed by Benenti in \cite[Theorem~A.5.1]{Benenti2005a}. Then in combination with \cref{eq:torProp}, \cref{cor:CKTeigProps} implies that $E_{i}$ is a Killing distribution.

In order to separate the Hamilton-Jacobi equation, we need Killing tensors. Thus we observe that the following tensor, called the \defn{kbdt} generated by $L$, is a KT:

\begin{equation} \label{eq:KBDT}
	K = \tr{L} G - L
\end{equation}

\noindent This follows by a direct calculation. An important observation is that $K$ has the same eigenspaces as $L$. We also note that if $L$ is a Benenti tensor, then its KBDT is a ChKT, hence by \cref{thm:HJosepI} it follows that the orthogonal web associated with $L$ is separable. Later on we will use the KBDT and \cref{thm:wpSOV} to present an algorithm for separating natural Hamiltonians.

But first, we need the following characterization of orthogonal CTs to make statements about the completeness of the algorithm to be presented.

\begin{theorem}[Characterization of orthogonal CTs] \label{prop:charOCT}
	Suppose $L$ is a torsionless orthogonal tensor with eigenspaces $(E_{i})_{i=1}^{k}$ and associated eigenfunctions $(\lambda_{i})_{i=1}^{k}$. Then L is an OCT iff there is a twisted product adapted to its eigenspaces such that each twist function $\rho_{i}$ can be chosen to be:
	\begin{equation} \label{eq:OCTwarpFn}
		\rho_{i}^{2} = \prod\limits_{k \neq i} \Abs{\lambda_{i}-\lambda_{k}}
	\end{equation}
	\noindent and each multidimensional eigenspace $E_{i}$ is a Killing distribution, or equivalently the eigenfunction corresponding to $E_{i}$ is constant.
\end{theorem}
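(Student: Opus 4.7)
The proof will run in both directions by applying the characterization of CKTs with orthogonally integrable eigenspaces (Corollary \ref{prop:OKTcharIII}) together with the induced twisted product (Corollary \ref{cor:CKTsandTPnets}), and the upgrade from CKT to CT will be handled by comparing the conformal factor $t = \sum_i (\nabla \lambda_i)^i$ with the trace-type factor $\nabla \tr{L} = \sum_i m_i \nabla \lambda_i$.

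For the forward direction, I first would record the consequences of $L$ being an OCT already collected in the preceding discussion: vanishing of the Nijenhuis torsion forces the eigenspaces to be orthogonally integrable and each $\lambda_i$ to depend only on $E_i$, so $(\nabla \lambda_j)^i = 0$ for $j \neq i$. The trace-type constraint $\alpha = \nabla \tr{L} = \sum_k m_k \nabla \lambda_k$ combined with \eqref{eq:CKTconFac} gives $m_i \nabla \lambda_i = \nabla \lambda_i$, so $\lambda_i$ is constant whenever $m_i > 1$; by Corollary \ref{cor:CKTeigProps} this is equivalent to $E_i$ being a Killing distribution. Applying Corollary \ref{cor:CKTsandTPnets} then provides a local twisted product adapted to the eigennet with twist functions satisfying \eqref{eq:meanCurvWarpFn}.

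The central computation is to show that the choice $\rho_i^2 = \prod_{k \neq i} |\lambda_i - \lambda_k|$ is compatible with \eqref{eq:meanCurvWarpFn}. Taking the logarithm and gradient, and using that $\nabla \lambda_l \in \Gamma(E_l)$ for each $l$ (so that $\nabla \log |\lambda_i - \lambda_k|$ lies in $E_i \obot E_k$), one finds
\begin{equation}
(\nabla \log \rho_i^2)^{\perp i} = \sum_{k \neq i} (\nabla \log |\lambda_i - \lambda_k|)^k,
\end{equation}
which matches \eqref{eq:meanCurvWarpFn} after dividing by $2$. Since twist functions are only determined modulo multiplication by functions in $\hat{\F}(M_i)$, and the equation \eqref{eq:meanCurvWarpFn} determines $(\nabla \log \rho_i)^{\perp i}$ only, this freedom allows us to rescale the factor metrics so that the twist functions have the stated form.

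For the converse, I would assume the torsionless orthogonal $L$ admits such a twisted product and that every multidimensional $E_i$ is Killing (equivalently $\nabla \lambda_i = 0$ when $m_i > 1$). By Proposition \ref{prop:tpProps} the eigenspaces are umbilical with mean curvature normals $H_i = -(\nabla \log \rho_i)^{\perp i}$, and the calculation above shows this equals $-\frac{1}{2}\sum_{j\neq i}(\nabla \log|\lambda_i - \lambda_j|)^j$, verifying hypothesis 2 of Corollary \ref{prop:OKTcharIII}. Torsionlessness gives $(\nabla \lambda_i)^i = \nabla \lambda_i$, so the conformal factor produced by Corollary \ref{prop:OKTcharIII} is $t = \sum_i \nabla \lambda_i$, and $L$ is a CKT. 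To promote $L$ to a C-tensor via Theorem~19.3 of \cite{Benenti2005a}, I need $t = \nabla \tr{L} = \sum_i m_i \nabla \lambda_i$; orthogonality of the $E_i$ reduces this to $(m_i - 1)\nabla \lambda_i = 0$ for each $i$, which is precisely the Killing hypothesis.

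The main obstacle is the bookkeeping in the key computation of step three: one must carefully track which components of $\nabla \log|\lambda_i - \lambda_k|$ survive the projection onto $E_i^\perp$, and justify using the freedom in the choice of twist functions to realize the prescribed form globally on an adapted neighborhood. Everything else is a direct application of the earlier structural results.
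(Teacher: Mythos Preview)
Your proposal is correct and follows essentially the same route as the paper: both reduce the question to the equivalence \emph{OCT $\Leftrightarrow$ torsionless orthogonal trace-type CKT} (the paper cites \cite[Theorem~A.5.3]{Benenti2005a}, you cite Theorem~19.3 of the same source), then use \cref{cor:CKTsandTPnets}/\cref{prop:OKTcharIII} together with the torsionless property $(\nabla\lambda_j)^i = 0$ for $j\neq i$ to collapse the mean-curvature formula \eqref{eq:meanCurvWarpFn} into the product expression \eqref{eq:OCTwarpFn}, and handle the trace-type condition separately via $(m_i-1)\nabla\lambda_i = 0$. The only cosmetic difference is that the paper runs the iff in one pass by invoking \cref{cor:CKTsandTPnets} as a biconditional, whereas you unpack the forward and converse directions explicitly through \cref{prop:tpProps} and \cref{prop:OKTcharIII}; the key computation and the use of the freedom to rescale twist functions by elements of $\hat{\F}(M_i)$ are identical.
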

\begin{proof}
	Since an OCT is equivalent to a trace-type torsionless orthogonal CKT \cite[Theorem~A.5.3]{Benenti2005a}, we only need to characterize a trace-type torsionless orthogonal CKT.

	By \cref{cor:CKTsandTPnets} there is a twisted product $\sideset{^{\rho}}{_{i=1}^{k}}{\prod}M_{i}$ which is adapted to the eigenspaces of L. We can explicitly solve for the twist function $\rho_{i}$ in this case. From \cref{eq:CKTmeanCurv}, we have
	
	\begin{align}
		H_{i} & = - \frac{1}{2} \sum\limits_{j \neq i} (\nabla \log \Abs{\lambda_{i}-\lambda_{j}})^{j} \\
		& = - \frac{1}{2} \sum\limits_{j \neq i} ( \sum\limits_{k \neq i} \nabla \log \Abs{\lambda_{i}-\lambda_{k}})^{j} \\
		& = - \frac{1}{2} \sum\limits_{j \neq i} ( \nabla \log \prod\limits_{k \neq i} \Abs{\lambda_{i}-\lambda_{k}})^{j} \\
		& = - \frac{1}{2} ( \nabla \log \prod\limits_{k \neq i} \Abs{\lambda_{i}-\lambda_{k}})^{\perp i}
	\end{align}
	
	Hence by \cref{eq:meanCurvWarpFn}, we have
	
	\begin{align}
		(\nabla \log \rho_{i}^{2})^{\perp i} = ( \nabla \log \prod\limits_{k \neq i} \Abs{\lambda_{i}-\lambda_{k}})^{\perp i}
	\end{align}
	
	Thus $\log \rho_{i}^{2} - \log \prod\limits_{k \neq i} \Abs{\lambda_{i}-\lambda_{k}} = f_{i}$ where $f_{i}$ is a function of $M_{i}$ only. Thus we see that a necessary and sufficient condition for $L$ to be a torsionless orthogonal CKT is that the twist functions have the form given by \cref{eq:OCTwarpFn}.
	
	We also note from earlier calculations that the only constraint imposed by the trace-type condition is that the multidimensional eigenspaces have constant eigenfunctions. Thus the conclusion follows by \cref{cor:CKTsandTPnets}.
\end{proof}
\begin{remark}
	This characterization (with some-what less information) originally appeared in \cite{Benenti2005a}.
\end{remark}

For illustrative purposes it will be useful to have the general concircular tensor in pseudo-Euclidean space $\eunn$, it is given as follows in contravariant form \cite[theorem~B.1.1]{Benenti2005a}:

\begin{equation} \label{eq:ctGen}
	L = A + m r \odot r + w \odot r
\end{equation}

\noindent where $A$ is symmetric and constant, $m \in \R$, $w \in \R^{n}$ and $r = (x^{1},\dotsc,x^{n})$ is the vector representing the generic point in $\eunn$ hereafter called the \emph{dilatational vector field}.

We will also need the following fact concerning C-tensors, it shows that they form a vector space and it gives an upper bound on the dimension of this space.

\begin{theorem}[The Vector Space of Concircular Tensors \cite{Thompson2005}] \label{thm:SCKTsFiniteDim}
	The C-tensors form a finite dimensional real vector space when $n > 1$ with maximal dimension equal to $\frac{1}{2}(n+1)(n+2)$. Furthermore the maximal dimension is achieved if and only if the space has constant curvature.
\end{theorem}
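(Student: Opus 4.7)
The plan is to view a concircular tensor (CT) as a solution of the first-order linear PDE $\nabla_x L = \alpha \odot x$ and use a prolongation argument to show that the solution space is finite-dimensional with the stated sharp bound. Linearity is immediate, since if $L_1, L_2$ are CTs with conformal factors $\alpha_1, \alpha_2$ (necessarily gradients of $\tr{L_i}$), then for $a,b \in \R$ the tensor $aL_1 + bL_2$ is a CT with conformal factor $a\alpha_1 + b\alpha_2$. So I only need the dimension statements.

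For finite dimensionality I would prolong the system. Tracing the defining equation shows $\alpha$ is, up to a multiple, $\nabla \tr{L}$, so $\alpha$ is determined by $L$. Next, apply the Ricci identity to the second covariant derivative $\nabla_a \nabla_b L_{cd}$: the antisymmetrized part equals a curvature expression in $L$, while commuting derivatives in the CT equation also yields an expression linear in $\nabla \alpha$. Combining and symmetrizing appropriately produces an explicit formula expressing $\nabla_a \alpha_b$ as a linear combination (with coefficients built from the curvature tensor) of $L$, $\alpha$, a scalar $\mu$, and the metric; the scalar $\mu$ appears because the pure-trace part of $\nabla \alpha$ is not algebraically determined by $(L,\alpha)$. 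A further differentiation together with Bianchi identities then pins down $\nabla \mu$ linearly in $(L,\alpha,\mu)$, closing the system. Hence the triple $(L,\alpha,\mu)$ satisfies a closed linear first-order system along any curve; by the uniqueness theorem for linear ODE, a CT is determined by the point-values $(L(p),\alpha(p),\mu(p))$. Counting components gives the upper bound $\frac{n(n+1)}{2} + n + 1 = \frac{(n+1)(n+2)}{2}$.

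For the characterization of the maximal case, the bound is attained precisely when every admissible initial datum $(L(p),\alpha(p),\mu(p))$ actually extends to a global CT, which is equivalent to the integrability of the prolonged system. Writing down the integrability conditions (the curvature of the connection defined by the prolongation must annihilate the symbol) and separating by the tensorial type of the free data shows these conditions reduce to the requirement that the Riemann tensor have the constant-curvature form $R_{abcd} = k(g_{ac}g_{bd}-g_{ad}g_{bc})$. For the converse direction I would display the bound explicitly: in $\eunn$ the general CT from \cref{eq:ctGen}, $L = A + m\, r\odot r + w\odot r$, already contributes $\frac{n(n+1)}{2} + 1 + n = \frac{(n+1)(n+2)}{2}$ linearly independent parameters, so the bound is saturated; entirely analogous explicit families exist on $\Si^n_\nu$ and hyperbolic space (which can be obtained, for instance, from \cref{eq:ctGen} by realizing these spaces as quadrics in a pseudo-Euclidean space of one higher dimension and restricting using \cref{prop:restrictCKT}).

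The main obstacle will be carrying out the prolongation cleanly: identifying the single scalar $\mu$ that must be carried along in the jet, and then verifying that the resulting integrability conditions reduce exactly to the constant-curvature condition rather than to some weaker curvature restriction. A secondary technical point is making sure the explicit constructions on $\Si^n_\nu$ and the hyperbolic case really do yield $\frac{(n+1)(n+2)}{2}$ independent CTs; this is most easily handled uniformly by the ambient-space embedding and \cref{prop:restrictCKT}, which bypasses repeating the prolongation in each signature and each sign of curvature.
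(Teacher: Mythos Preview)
The paper does not supply its own proof of this statement: it is listed under the \texttt{theorem} environment (the paper's ``Known Results'') and is simply cited from \cite{Thompson2005}, with a pointer to \cite{Crampin2007} for further information. So there is nothing to compare against directly.

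That said, your prolongation strategy is the standard and correct one, and is essentially what one finds in the cited literature: close the system on the jet $(L,\alpha,\mu)$, count fibre dimensions to obtain the bound $\tfrac{1}{2}(n+1)(n+2)$, and read off the integrability conditions of the prolonged connection to characterize the maximal case as constant curvature. The linearity observation and the identification $\alpha = \nabla \tr{L}$ are exactly as in the paper's own discussion preceding \cref{eq:KBDT}.

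One technical caution on your converse argument for the non-flat models. Invoking \cref{prop:restrictCKT} to manufacture CTs on $\Si^n_\nu$ and hyperbolic space from the ambient family \eqref{eq:ctGen} is not quite right as written: that proposition only yields a \emph{conformal} Killing tensor on an $L$-invariant integral submanifold, not a concircular tensor, and in any case the quadric $\Si^n_\nu \subset \E^{n+1}_{\nu'}$ is not $L$-invariant for a generic $L$ in \eqref{eq:ctGen}. The correct mechanism is the (separate, but elementary) fact that a concircular tensor restricts to a concircular tensor along a totally umbilical non-degenerate submanifold; this follows from a short Gauss-equation computation rather than from \cref{prop:restrictCKT}. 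With that substitution your outline goes through.
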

\begin{remark}
	See \cite{Crampin2007} and references therein for more on the possible dimensions of the vector space of C-tensors.
\end{remark}

\subsection{KEM webs and Separable Webs in Spaces of Constant Curvature} \label{sec:KEMweb}

In this section we first introduce a certain type of orthogonal web called a KEM web which follows naturally from orthogonal concircular tensors. We will show that KEM webs are necessarily separable. Then we will use the classification of separable webs in Riemannian spaces of constant curvature given in \cite{Kalnins1986} to show that all separable webs in these spaces are KEM webs.

Before we introduce the general notion of a KEM web, we first present the following simple motivating example:

\begin{example}[KEM webs] \label{ex:kemWeb}
	In this example we work in $\E^{3}$ with the CT $L = d \odot d$ where $d \neq 0$ is a constant vector. In this case, $L$ has a simple eigenspace $S_{1} := \spa{d}$ and a multidimensional eigenspace $D_{1} := d^{\perp}$. Clearly a warped product manifold adapted to the WP-net $(S_{1}, D_{1})$ is $\E^{1} \times \E^{2}$. 

	Now in $\E^{2}$ we can specify a Cartesian coordinate system via the CT $L = A$ where $A$ is symmetric, constant and has simple eigenspaces. We can also specify polar coordinates via the CT $L = r \odot r$ where $r$ is the dilatational vector field as in the previous section. In both cases it is well known that this defines a separable web $\Ei_{1}$ in $\E^{2}$.

	Back in $\E^{3}$ we can define an orthogonal web, $\Ei$, formed by $S_{1}$ together with the lift of $\Ei_{1}$ (which is obtained by translating $\Ei_{1}$ along $d$). In the first case we obtain a web defining Cartesian coordinates and in the second case we obtain a web defining cylindrical coordinates, both of which are separable.
\end{example}

We have shown two examples where an orthogonal (in fact separable) web was obtained recursively using concircular tensors. For low dimensions we define a KEM web as follows: When $n = 1$ the tangent bundle $T M$ itself is trivially defined to be a KEM web. When $n = 2$ any non-trivial\footnote{By a non-trivial concircular tensor, we mean one which is not a multiple of the metric when $n > 1$.} OCT has simple eigenfunctions, hence is a Benenti tensor and defines an orthogonal web. So when $n = 2$ we define a KEM web to be any orthogonal web associated with a Benenti tensor. In the general case we define recursively a KEM web as follows:

\begin{definition}[KEM web]
	Let $L$ be a non-trivial OCT with simple eigenspaces $(S_{i})_{i=1}^{k}$ and multidimensional eigenspaces $(D_{i})_{i=1}^{l}$. For each $i= 1,...,l$, let $\Ei_{i}$ be a KEM web on an integral manifold of $D_{i}$. Then the web formed by $(S_{i})_{i=1}^{k}$ together with the lifts of $\Ei_{i}$ is called a \emph{Kalnins-Eisenhart-Miller (KEM) web}.
\end{definition}
\begin{remark}
	One can check that the above definition is well-defined since each $D_{i}$ is necessarily integrable and the lift of $\Ei_{i}$ is necessarily an orthogonal web at least locally.
\end{remark}

\begin{propMy}[KEM webs] \label{prop:KEMisSep}
	A KEM web is a separable web.
\end{propMy}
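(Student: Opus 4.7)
The plan is to prove the statement by induction on the dimension $n$ of the manifold, mirroring the recursive structure of the definition of a KEM web. The base cases $n = 1$ and $n = 2$ are immediate: for $n=1$ the claim is vacuous, and for $n=2$ the web is by definition the one associated with a Benenti tensor, whose KBDT is a ChKT by the remarks following \cref{eq:KBDT}, so separability follows from \cref{thm:HJosepI}.

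For the inductive step, suppose the result holds in all dimensions less than $n$, and let $\Ei$ be a KEM web on $M^n$ built from a non-trivial OCT $L$ with simple eigenspaces $(S_j)_{j=1}^{k}$ and multidimensional eigenspaces $(D_i)_{i=1}^{l}$, together with KEM webs $\Ei_i$ on integral manifolds of each $D_i$. From the discussion preceding \cref{eq:KBDT}, each $D_i$ is a Killing distribution (since its eigenvalue $\nu_i$ is constant and all eigenvalues of $L$ are independent of $D_i$ by the torsionless property), so $(E_0, D_1, \dotsc, D_l)$ is a WP-net, where $E_0 := \bigobot_j S_j$ is the geodesic factor. I would then form the KBDT $K_0 = \tr{L} G - L$, a KT whose eigenspaces coincide with those of $L$, with simple eigenvalues $\tr{L} - \mu_j$ on $S_j$ and constant-index eigenvalues $\tr{L} - \nu_i$ on $D_i$. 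In particular, $K_0$ restricts to a ChKT on any integral manifold of $E_0$ by \cref{prop:restrictCKT}, since its restriction there has $k = \dim E_0$ simple eigenfunctions.

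Next, for each $i$ I would apply the inductive hypothesis: the KEM web $\Ei_i$ on an integral manifold $N_i$ of $D_i$ is separable, so there exists a ChKT $\tilde{K}_i \in S^2(N_i)$. Working in a local warped product $M = N_0 \times_{\rho} \prod_{i=1}^{l} N_i$ adapted to the WP-net, the special case of \cref{prop:extCKT} with $K'=0,\,t=0$ allows each $\tilde{K}_i$ to be lifted to a KT $K_i$ on $M$. I would then form
\begin{equation}
    K := K_0 + \sum_{i=1}^{l} K_i,
\end{equation}
which is a KT by \cref{prop:extCKT}. Its eigenspaces are precisely the 1-dimensional distributions constituting $\Ei$: on each $S_j$ only $K_0$ contributes, yielding eigenvalue $\tr{L} - \mu_j$, while on a (lifted) simple eigenspace of $\tilde{K}_i$ inside $D_i$, the value is $\tr{L} - \nu_i + \tilde{\mu}$ where $\tilde{\mu}$ is the corresponding eigenvalue of $\tilde{K}_i$; contributions from $K_{i'}$ with $i' \neq i$ vanish there. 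Finally, by the same device used in the proof of \cref{thm:wpSOV}, replacing each $\tilde{K}_i$ by $a_i \tilde{K}_i + b_i \tilde{G}_i$ for generic $a_i \in \R \setminus \{0\}$ and $b_i \in \R$ guarantees that $K$ has locally simple eigenfunctions, and hence is a ChKT whose eigenspaces form $\Ei$. Separability of $\Ei$ then follows from \cref{thm:HJosepI}.

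The only step that requires some care is the last one, ensuring that the eigenfunctions of $K$ separate pointwise so that $K$ is genuinely characteristic. This is not a deep obstacle since the simple eigenfunctions of $K_0$ on $E_0$ and of each $\tilde{K}_i$ on $N_i$ are already pointwise distinct; a generic rescaling and shift of each block suffices to desynchronise the eigenvalues on different blocks, and this manipulation does not disturb the KT property or the eigenspace structure, only the eigenvalues. Orthogonal integrability of the constructed web is automatic since it is assembled from an orthogonal direct sum of the $S_j$ with lifts of the orthogonally integrable webs $\Ei_i$.
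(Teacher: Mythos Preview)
Your proof is correct and follows essentially the same route as the paper's: induction on dimension, take the KBDT $K_0$ of the defining OCT, lift ChKTs $\tilde{K}_i$ from the spherical factors via \cref{prop:extCKT}, and adjust the $\tilde{K}_i$ by affine combinations with the factor metrics so that $K_0 + \sum_i K_i$ has simple eigenfunctions, concluding with \cref{thm:HJosepI}. The only cosmetic difference is that the paper shifts each $K_i$ by a constant multiple of the induced metric, whereas you allow the slightly more general $a_i\tilde{K}_i + b_i\tilde{G}_i$ (as in the proof of \cref{thm:wpSOV}); both accomplish the same desynchronisation of eigenvalues.
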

\begin{proof}
	Suppose inductively that this theorem holds for all KEM webs with dimension $k < n$ and note that the statement trivially holds for $k = 1$ since the metric is always an OCT. Now we prove the proposition for KEM webs of dimension $n > k \geq 1$.

	Let $L$ be the OCT in the definition of the KEM web and let $K$ be the KBDT associated with $L$. Let $D_{1},...,D_{l}$ be the multidimensional eigenspaces of $L$, these are necessarily Killing distributions by \cref{prop:charOCT}. Then the net formed by $D_{1},...,D_{l}$ together with $D_{0} := \bigcap\limits_{i=1}^{l} D_{i}^{\perp}$ is a WP-net. So fix $\bar{p} \in M$ and let $N = \prod_{i=0}^{l} N_{i}$ be a connected product manifold adapted to this net and passing through $\bar{p}$. For each $i = 1,...,l$, let $K_{i}$ be a ChKT for $\Ei_{i}$ on $N_{i}$ which is given by \cref{thm:HJosepI}. It follows from \cref{prop:extCKT} that $K_{i}$ can be extended to a KT on $M$ (which we call $K_{i}$). After adding a constant multiple of the induced metric on $N_{i}$ to $K_{i}$ if necessary, we can assume that $K + \sum_{i=1}^{l} K_{i}$ is a ChKT at least locally. Since $K + \sum_{i=1}^{l} K_{i}$ is a ChKT for this KEM web, it follows from \cref{thm:HJosepI} that this KEM web is a separable web.
	
	Thus the result follows by induction.
\end{proof}

\begin{thmMy}[Separable Webs in Spaces of Constant Curvature] \label{thm:SCCconChKT}
	In a space of constant curvature, every separable web is a KEM web.
\end{thmMy}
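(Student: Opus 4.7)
The plan is to proceed by strong induction on $n = \dim M$, using the classification of separable webs in Riemannian spaces of constant curvature due to Kalnins and Miller. The case $n = 1$ is trivial since $TM$ itself is declared a KEM web, and for $n = 2$ every separable web arises from a ChKT with simple eigenfunctions, which by \cref{prop:charOCT} together with the abundance of concircular tensors in constant curvature (\cref{thm:SCKTsFiniteDim}) can be realized as the eigenspaces of a Benenti tensor; this is a KEM web by definition.

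For the inductive step with $n > 2$, the goal is to exhibit, for an arbitrary separable web $\Ei$ on $M$, a non-trivial OCT $L$ whose simple eigenspaces coincide with certain leaves of $\Ei$ and whose multidimensional eigenspaces $D_{1},\dotsc,D_{l}$ have the following property: each $D_{i}$ is a Killing distribution of $M$ (automatic by \cref{prop:charOCT}), the integral submanifolds of $D_{i}$ inherit a constant curvature metric (as spherical factors in a warped product decomposition of a constant curvature space, which are well known to themselves be of constant curvature), and the restriction of $\Ei$ to each integral submanifold is again a separable web. Granting this, \cref{prop:restrictCKT} (applied to a ChKT for $\Ei$) ensures the restriction remains characteristic, so by the inductive hypothesis it is a KEM web; the simple eigenspaces of $L$ combined with these inductively-obtained KEM webs reconstruct $\Ei$ as a KEM web according to the recursive definition.

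The main obstacle is producing the OCT $L$ from the data of $\Ei$. The plan is to extract it from the Kalnins-Miller classification, which is itself recursive: every separable web on $\E^{n}$, $\Si^{n}$, or $\H^{n}$ is built either from an elliptic/parabolic coordinate block (which is directly a Benenti web, giving $L$ with all simple eigenspaces) or by lifting lower-dimensional separable webs along a warped product decomposition $M = M_{0} \times_{\rho} \prod_{i=1}^{l} M_{i}$ into constant curvature factors. In the latter case one constructs $L$ explicitly using the general form of a concircular tensor. For $\E^{n}_{\nu}$ this is \cref{eq:ctGen}, namely $L = A + m\, r \odot r + w \odot r$; for $\Si^{n}$ and $\H^{n}$, one exploits their isometric embedding into ambient pseudo-Euclidean space, where concircular tensors on the hypersurface are induced by those of the ambient space. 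One chooses the parameters $(A,m,w)$ so that the constant eigenspaces of $L$ are precisely the tangent distributions of the spherical factors $M_{i}$, which is possible because the warped product decompositions of constant curvature spaces arise from affine/quadratic splittings and hence can be encoded by an appropriate $A$.

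The hard part is verifying this correspondence case-by-case against the Kalnins-Miller list, and in particular checking that in each reducible case the ``outer'' Benenti-type web on the geodesic factor $M_{0}$ together with the warped product structure is captured by a single OCT on $M$ (not merely by separate CKTs on each factor). Once this is confirmed, \cref{prop:charOCT} guarantees that $L$ is a well-defined OCT with the required eigenspace geometry, and the induction closes. The argument also extends to pseudo-Riemannian spaces of constant curvature provided an analogous classification of separable webs is available, by precisely the same recursive construction.
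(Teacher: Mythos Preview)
Your proposal is correct and follows essentially the same approach as the paper: strong induction on $\dim M$, with the inductive step driven by the existence of a non-trivial OCT $L$ adapted to the given separable web, after which \cref{prop:charOCT}, \cref{lem:SCCspherFol}, and \cref{prop:restrictCKT} reduce the problem to lower-dimensional constant curvature factors. The only organizational difference is that the paper packages the ``main obstacle'' you identify --- producing $L$ from the Kalnins--Miller classification via \cref{prop:charOCT} --- as a separate black-box result (the KEM Separation Theorem, \cref{thm:KEMsep}), whose case-by-case verification is likewise deferred; once that is invoked, the induction in the paper's proof of \cref{thm:SCCconChKT} is exactly the one you outline.
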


To prove the above theorem, we need some preliminary results. The following lemma is well known, see for example \cite{Nolker1996}.

\begin{lemma} \label{lem:SCCspherFol}
	In a space of constant curvature, a Killing foliation is a foliation of homothetic\footnote{By homothetic pseudo-Riemannian manifolds, we mean conformal pseudo-Riemannian manifolds where the conformal factor is a positive constant.} spaces of constant curvature.
\end{lemma}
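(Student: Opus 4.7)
The plan is to reduce the claim to the local structure theorem for warped products (Theorem~\ref{cor:tpWpNets}) and then invoke the classical curvature formula for warped products. Let $E$ be the Killing distribution tangent to the foliation in question. Since $E$ is spherical (hence umbilical and integrable) and $E^{\perp}$ is geodesic (hence integrable), the pair $(E^{\perp}, E)$ is a WP-net in the sense of the paper. Therefore by Theorem~\ref{cor:tpWpNets}, around every point there is an open neighbourhood isometric to a warped product $B \times_{\rho} F$ whose spherical factor corresponds to $E$; the leaves of the given foliation are, locally, the slices $\{p\} \times F$ equipped with the induced metric $\rho(p)^{2} g_{F}$.

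The first substantive observation is the homothety statement: for any two points $p, p' \in B$, the identity map $F \to F$ pulled back through the slices $\{p\} \times F$ and $\{p'\} \times F$ rescales the induced metric by the constant factor $(\rho(p')/\rho(p))^{2}$. Hence any two leaves meeting the same warped-product chart are homothetic to $(F, g_{F})$, and a connectedness/overlap argument on charts (Killing leaves are connected and the homothety class of a leaf is locally constant along $B$) propagates the homothety statement to any two leaves of the foliation globally.

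The second observation, that the leaves have constant curvature, uses that the ambient manifold $M$ has constant sectional curvature $c$. I would invoke the standard curvature identity for warped products: if $X, Y$ are vector fields on $F$ (lifted to $M$) that are $g_{F}$-orthonormal at a point $q$, then the $M$-sectional curvature of the plane they span at $(p,q)$ is
\begin{equation}
K^{M}(\tilde X, \tilde Y) \;=\; \frac{K^{F}(X,Y) - |\nabla \rho(p)|_{B}^{2}}{\rho(p)^{2}}.
\end{equation}
Setting the left-hand side equal to $c$ gives $K^{F}(X,Y) = c\,\rho(p)^{2} + |\nabla \rho(p)|_{B}^{2}$. The left-hand side is independent of $p$ while the right-hand side depends only on $p$, so both sides are constant in $q$ and in the chosen plane. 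Thus $(F, g_{F})$ has constant sectional curvature, and because each leaf is homothetic to $(F, g_{F})$ it inherits constant sectional curvature (scaled by $1/\rho(p)^{2}$).

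The main obstacle is really bookkeeping: verifying that the globally defined Killing foliation can be covered by warped-product charts in such a way that the homothety class of $(F, g_{F})$ is chart-independent. This is handled by noting that on any chart overlap the two slices $\{p\} \times F$ and $\{p\} \times F'$ through a common leaf are isometric as leaves of the same foliation, so the local models $(F, g_{F})$ and $(F', g_{F'})$ of overlapping charts are homothetic; connectedness of $M$ then yields a single homothety class. With that in hand, the warped-product curvature identity completes the proof.
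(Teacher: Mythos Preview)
Your argument is correct. You use \cref{cor:tpWpNets} to pass to a local warped product $B\times_{\rho}F$, read off the homothety between leaves from the scaling $\rho(p)^{2}g_{F}$, and then apply the standard O'Neill curvature formula for warped products to force $K^{F}$ to be constant. The only minor quibble is the $\dim F = 1$ case, where sectional curvature is vacuous; you may want to note that this case is trivially ``constant curvature''.

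As for comparison with the paper: there is nothing to compare. The paper does not prove \cref{lem:SCCspherFol}; it simply records it as ``well known'' and cites N\"olker. Your proposal supplies exactly the kind of self-contained argument one would expect from such a reference, and it meshes naturally with the paper's own machinery (WP-nets and \cref{cor:tpWpNets}), so it would fit in without friction.
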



In particular for $\E^n$ one can show that a Killing foliation is foliation by subsets of (affine) spheres or planes of lesser dimension. The following theorem is the key to proving the above theorem.

\begin{thmMy}[KEM Separation Theorem] \label{thm:KEMsep}
	Suppose $K$ is a ChKT defined on a space of constant curvature $M$. Then there is a non-trivial concircular tensor $L$ defined on $M$ such that each eigenspace of $K$ is L-invariant, i.e. $L$ is diagonalized in coordinates adapted to the eigenspaces of $K$.
\end{thmMy}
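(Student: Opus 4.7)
The plan is to induct on $n = \dim M$, using the Kalnins--Miller classification \cite{Kalnins1986} of separable webs in Riemannian spaces of constant curvature together with the explicit parametrization of concircular tensors there. For $n = 1$ any CT works trivially, and for $n = 2$ every separable web is known to be induced by a Benenti tensor, which is itself a non-trivial OCT whose simple eigenspaces coincide with those of $K$; this initializes the induction.

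For the inductive step at $n \geq 3$, I would pass to a model realization of $M$. In pseudo-Euclidean space, \cref{eq:ctGen} gives the full vector space of concircular tensors explicitly as $L = A + m\, r \odot r + w \odot r$, and by \cref{thm:SCKTsFiniteDim} its dimension $\tfrac{(n+1)(n+2)}{2}$ is maximal; analogous parametrizations hold on $S^n$ and $H^n$ via their ambient embeddings. Appealing to the Kalnins--Miller classification, every separable web $\Ei$ on the model space is shown, case by case, to arise from a distinguished non-trivial OCT $L$ whose simple eigenspaces account for some of the lines of $\Ei$ and whose multidimensional eigenspaces---which are necessarily Killing by \cref{prop:charOCT} and hence, by \cref{lem:SCCspherFol}, foliate $M$ by submanifolds of strictly lower dimension that are themselves of constant curvature---contain the remaining lines. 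By \cref{prop:restrictCKT}, $K$ restricts to a ChKT on each such leaf, and the inductive hypothesis supplies a compatible non-trivial CT on that leaf. Combining these data with $L$ exhibits the eigen-net of $K$ as a KEM web and proves the theorem.

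The main obstacle is that this strategy relegates the substantive combinatorial work to the Kalnins--Miller classification: one must verify case by case that each ``irreducible'' separable web in that list (Cartesian, spherical, elliptic, parabolic, and their higher-rank analogues) is actually cut out by an OCT of the form above. A more conceptual route would extract the required $L$ directly from invariants of the KS-space of $K$, but I do not see how to make this work with only the tools developed in the excerpt; the appeal to \cite{Kalnins1986} is the natural shortcut. The pseudo-Riemannian case of the theorem, which is not covered by \cite{Kalnins1986}, would require either an analogous classification for indefinite signatures or an independent argument exploiting the fact that \cref{thm:SCKTsFiniteDim} forces the space of CTs to have maximal dimension in any constant curvature setting, thereby giving enough freedom to match any prescribed eigen-decomposition coming from a ChKT.
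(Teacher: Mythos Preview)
Your core idea---appeal to the Kalnins--Miller classification and check that each separable metric there has the form of \cref{prop:charOCT}, which then hands you the concircular tensor $L$---is exactly the approach the paper sketches (the paper defers a rigorous proof to \cite{Rajaratnam2014d} and only outlines this route). So on that level you are aligned with the paper.

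However, you have conflated this theorem with \cref{thm:SCCconChKT}. The KEM Separation Theorem asks only for \emph{one} non-trivial concircular tensor $L$ on $M$ such that each eigenspace of $K$ is $L$-invariant. Once the Kalnins--Miller case check produces such an $L$, the proof is over; no induction on $n$ is required, and there is nothing to ``combine''. The inductive step you describe---restricting $K$ to the leaves via \cref{prop:restrictCKT}, invoking the hypothesis to get CTs on the lower-dimensional factors, and assembling a KEM web---is precisely the proof of \cref{thm:SCCconChKT}, which \emph{uses} \cref{thm:KEMsep} as a black box at every stage of the induction, not the other way around. In particular, the CTs you obtain on the leaves do not lift to concircular tensors on $M$, so ``combining'' them does not even produce an object relevant to the statement you are proving. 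Strip out the induction and the restriction step, keep the Kalnins--Miller/\cref{prop:charOCT} verification, and you have the paper's argument.
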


A rigorous proof will be given in \cite{Rajaratnam2014d}. In Riemannian spaces of constant curvature, this theorem can be proven by connecting the classification of separable metrics given by Kalnins and Miller in \cite{Kalnins1986} with \cref{prop:charOCT}. Indeed, by examining the separable metrics given in \cite{Kalnins1986}, it can be shown that all separable metrics derived in \cite{Kalnins1986} have the form given by \cref{prop:charOCT}. Then the desired concircular tensor, $L$, is given by \cref{prop:charOCT}. For a space of constant curvature with arbitrary signature, it can be shown that the classification given by Kalnins and Miller can be generalized in such a way that the separable metrics still satisfy the hypothesis of \cref{prop:charOCT}. This generalization will be given in \cite{Rajaratnam2014d}.

\begin{proof}[\cref{thm:SCCconChKT}]
	Suppose inductively that this theorem holds for all separable webs in spaces of constant curvature of dimension $k < n$. The statement trivially holds when $k = 1$. We now show that the theorem holds when $\dim M = n$.

	Suppose $K$ is a ChKT defined on a space of constant curvature $M$ defining a separable web. Then let $L$ be a concircular tensor guaranteed by the KEM separation theorem.
	\begin{parts}
		\item If $L$ has simple eigenfunctions (i.e. is a Benenti tensor), then it follows that the separable web determined by $K$ is a KEM web.
		\item  Suppose $L$ has multidimensional eigenspace $D_{i}$ for $i = 1,...,l$; these must be Killing by \cref{prop:charOCT}. Thus each $D_{i}$ induces a foliation of spherical submanifolds of $M$. Then it follows by \cref{lem:SCCspherFol} that this is a foliation of spaces of constant curvature of lesser dimension. Suppose $N_{i}$ is an integral manifold of $D_{i}$. Then it follows from \cref{prop:restrictCKT} that $K$ restricts to a ChKT $\tilde{K}_{i}$ on $N_{i}$. Thus $\tilde{K}_{i}$ is a ChKT on a space of constant curvature $N_{i}$ which has dimension less than $n$. Hence by induction hypothesis, it follows that the separable web $\Ei_{i}$ associated with $\tilde{K}_{i}$ is a KEM web. Thus by definition it follows that the separable web associated with $K$ is a KEM web.
	\end{parts}
	
	The result then follows by induction on $n$.
\end{proof}

Motivated by \cref{thm:SCCconChKT}, in the next subsection we will use the results presented in \cref{sec:sepHJeqnWP} to give a recursive algorithm for separating potentials.

\subsection{The BEKM Separation Algorithm}

In this section we will present the Benenti-Eisenhart-Kalnins-Miller (BEKM) separation algorithm, which is named after the researchers who's work anticipated this algorithm \cite{Benenti2005a,Eisenhart1934,Kalnins1986a}. We fix a potential $V \in \F(M)$ and suppose $n = \dim M > 1$.

We first motivate the BEKM separation algorithm for the case when $M$ is a space of constant curvature. According to \cref{thm:HJosepII} a necessary condition for separability of $V$ is the existence of a ChKT $K$ satisfying the dKdV equation with $V$. Now \cref{thm:KEMsep} implies that there exists a non-trivial OCT $L$ which commutes as a linear operator with $K$. Hence the KBDT associated with $L$, say $K'$, is in the KS-space associated with $K$, thus by \cref{thm:HJosepII} $V$ must satisfy the dKdV equation with $K'$. This establishes the necessity of KBDTs for orthogonal separation in spaces of constant curvature. We use this fact and the theory on the separation of the Hamilton-Jacobi equation in warped products to obtain a recursive algorithm to find separable coordinates for $V$.

\begin{remark}
	The authors originally discovered the necessity of KBDTs for $\E^{n}$ and $\Si^{n}$ implicitly through Corollary 5.4 in \cite{Waksjo2003}. Indeed, according to the remarks following Equation~4.2 in \cite{Benenti2004}, the Bertrand-Darboux equations in \cite{Waksjo2003} are the dKdV equations generated by a KBDT. Hence Corollary 5.4 in \cite{Waksjo2003} implies the necessity of KBDTs for the special case of $\E^{n}$. Corollary 5.4 in \cite{Waksjo2003} also implies a similar statement for $\Si^{n}$. This explains the origin of the name Bertrand-Darboux in Killing-Bertrand-Darboux tensor and one of our initial reasons for working with CTs.
\end{remark}

Now we present the BEKM separation algorithm, so assume $M$ is an arbitrary pseudo-Riemannian manifold. Let $L$ denote the general concircular tensor on $(M,g)$ and $K := \tr{L} G - L$ be the KBDT generated by $L$. Now impose the condition:

\begin{equation} \label{eq:KBD}
	\d{(K \d V)} = 0
\end{equation}

\noindent which is called the \defn{kbd} equation. The above equation defines a system of linear equations in the unspecified parameters of $L$. Indeed, by \cref{thm:SCKTsFiniteDim}, the C-tensors form a finite-dimensional vector space. Since the KBDT is linearly related to $L$, it follows that the above equation defines a linear system. Furthermore by \cref{thm:SCKTsFiniteDim} the maximum number of unknowns in the above equation is $\frac{1}{2}(n+1)(n+2)$.

Suppose now that $K$ is a particular solution of the KBD equation and let $L$ be the associated C-tensor. We make the assumption that $L$ is an orthogonal tensor (which is always satisfied on a Riemannian manifold). Let $(E_{i})_{i=1}^{k}$ be the eigenspaces of $L$ and $(\lambda_{i})_{i=1}^{k}$ the corresponding eigenfunctions. We now classify such a solution:

\begin{parts}
		\item (k = 1, i.e. all the eigenfunctions coincide) \\
		In this case $L = c  G$ where $c := \lambda_{1} \in \R$, thus the associated KBDT, $K = c (n-1) G$ is the trivial solution of \cref{eq:KBD} and so the algorithm yields no information.
		\item (the eigenfunctions are simple) \\
		$K$ is a characteristic Killing tensor, then by Benenti's theorem (\cref{thm:HJosepII}), $V$ is separable in the web of the eigenspaces of $L$.
		\item (at least one eigenfunction is not simple) \\
		In this case, we enumerate the eigenspaces $D_{1},...,D_{l}$ with dimension greater than one. Since each $D_{i}$ is Killing by \cref{prop:charOCT}, the net formed by $D_{1},...,D_{l}$ together with $D_{0} := \bigcap\limits_{i=1}^{l} D_{i}^{\perp}$ is a WP-net. So fix $\bar{p} \in M$ and let $N = \prod_{i=0}^{l} N_{i}$ be a connected product manifold adapted to this net and passing through $\bar{p}$.
		
		If $D_{0} \neq 0$, then $K$ restricted to $D_{0}$ is characteristic by construction. Let $V_{i} := \tau_{i}^{*} V \in \F(N_{i})$ and suppose for each $i = 1,...,k$ there exists a ChKT $\tilde{K}_{i}$ on $N_{i}$ such that $\d (\tilde{K}_{i} \d V_{i}) = 0$.
			
		Then by \cref{thm:wpSOV}, $V$ is separable in the web formed by the simple eigenspaces of $L$ together with the lifts of the simple eigenspaces of $\tilde{K}_{1},...,\tilde{K}_{l}$.
\end{parts}

The algorithm can be applied recursively in the case $L$ has a non-simple eigenfunction. In the notation of case~3 one would have to apply the algorithm to each $N_{i}$ equipped with the induced metric for $i = 1,...,l$.

Now, some remarks are in order:

\begin{remark}
	In case~3 even if there are no ChKTs on the submanifolds $N_{i}$ which satisfy the dKdV equation with $V_{i}$, the Hamilton-Jacobi equation is partially separable.
\end{remark}
\begin{remark}
	Since the metric is always a solution of the KBD equation and because the KBD equation is linear in $K$, we always consider a solution of the KBD equation modulo multiplies of the metric.
\end{remark}

In the following example we will show how to use the theory just presented to show that the Calogero-Moser system is separable in cylindrical coordinates. It was originally shown to be separable in these coordinates by \citeauthor{Calogero1969} in \cite{Calogero1969}.

\begin{example}[Calogero-Moser system]
	The Calogero-Moser system is a natural Hamiltonian system with configuration manifold $\E^3$ given by the following potential in Cartesian coordinates $(q_{1},q_{2},q_{3})$:
	
	\begin{equation}
		V = (q_{1} - q_{2})^{-2} + (q_{2} - q_{3})^{-2} + (q_{1} - q_{3})^{-2}
	\end{equation}
	
	First note that the constant vector $d = \frac{1}{\sqrt{3}} [1,1,1]$ is a symmetry of $V$, i.e. $\lied{V}{d} = 0$. Hence we observe that the CT $L = d \odot d$ is a solution of the KBD equation associated with $V$. From \cref{ex:kemWeb} we know that a warped product manifold adapted to $L$ has the form $\E^1 \times \E^2$. One can choose Cartesian coordinates $(q_1',q_2',q_3')$ adapted to this product manifold, such that $V$ takes the form:
	
	\begin{equation}
		V = \frac{9(q_{3}'^2 + q_{2}'^2 )^2}{2 q_{2}'^2 (3 q_3'^2 - q_{2}'^2)^2}
	\end{equation}
	
	In this case $V$ naturally restricts to a potential on $\E^2$ with coordinates $(q_2',q_3')$. In $\E^2$ one can apply the BEKM separation algorithm to find that the only solution of the KBD equation (up to constant multiplies) is $L = r \odot r$ where $r$ is the dilatational vector field. Hence we conclude that $V$ is separable in cylindrical coordinates which are obtained by taking polar coordinates $(r \cos(\theta), r \sin(\theta) )$ on $\E^2$.
\end{example}

We won't give a proof here, but one can show that for the $n$-dimensional Calogero-Moser system the general solution to the KBD equation is

\begin{equation}
	L = c \, d \odot d + w \, d \odot r + m \, r \odot r
\end{equation}

\noindent where $c,w,m \in \R$ and $d = \frac{1}{\sqrt{n}} [1,\dotsc,1]$. One would have to apply the BEKM separation algorithm recursively when $n > 3$ in order to search for separable coordinates. In particular when $n = 3$, using the above solution one can determine that the Calogero-Moser system is separable in four additional coordinate systems. The details will appear in a subsequent article.

The following example illustrates how one can obtain ChKTs when an ignorable coordinate is present.
\begin{example}[Separation in Static space-times]
	A static space time is the product manifold $M = B \times_{\rho} \E_{1}^{1}$ equipped with warped product metric $g = \tilde{g} - \rho^2 \d t^2$ where $\tilde{g}$ is a Riemannian metric. By \cref{prop:wpKss}, $M$ admits a ChKT $K$ with timelike eigenvector field $\pderiv{}{t}$ iff there exists a ChKT $\tilde{K} \in S^2(B)$ satisfying: 
	
	\begin{equation}
		\d (\tilde{K} \d \rho^{-2}) = 0
	\end{equation}
	
	This observation is a special case of the connection between separation of potentials and extensions of KTs observed earlier via the Eisenhart metric. We note here that in order to find $\tilde{K}$, the BEKM separation algorithm can be applied on $B$ with $V := \rho^{-2}$. In particular if $B$ is a space of constant curvature, we will observe shortly that the BEKM separation algorithm gives a complete method for determining $\tilde{K}$ satisfying the above equation if it exists.
\end{example}

\paragraph{Completeness of the algorithm} It follows from the definition of the KEM web that if this algorithm is applied recursively then it will always test if the potential is separable in a KEM web. Thus it follows from \cref{thm:SCCconChKT} that this algorithm gives a complete test for separability in spaces of constant curvature. Although if one uses a ChKT not associated with a KEM web in case~3 of the algorithm, then one can test for separability against more general separable webs.

\paragraph{Practical Implementation} In \cite{Rajaratnam2014e} we will describe how to actually implement this algorithm in spaces of constant curvature. To do this, the only problems that remain are the classification of OCTs modulo the action of the isometry group, then obtaining the transformation to Cartesian coordinates for their associated webs and classifying warped product decompositions on these spaces.

This algorithm has been implemented concretely in Euclidean and spherical space by Waksjo and Wojciechowski in their solution \cite{Waksjo2003}. Their solution which was more classical, involved St\"{a}ckel theory and was based on the work of Kalnins-Miller \cite{Kalnins1986}. They made no use of Benenti's modern formulation of the separation of the Hamilton-Jacobi equation \cite{Benenti1997a} in terms of Killing tensors which is independent of St\"{a}ckel theory. 

Like the algorithm in \cite{Waksjo2003}, in spaces of constant curvature the BEKM separation algorithm reduces to a series of problems in linear algebra. Although for hyperbolic space and Minkowski space-time, one will have to deal with finding the Jordan canonical form of non-diagonalizable (constant) matrices.

\paragraph{Characterization of KEM webs} As we have mentioned above, the BEKM separation algorithm is a test for separability in KEM webs. Thus for the purposes of classifying orthogonal separation on a given manifold, one would want a more mathematically convenient definition of a KEM web then the natural one given earlier. First of all, we know that a KEM web is a separable web by \cref{prop:KEMisSep}. We also know that a KEM web is characterized by a hierarchy of orthogonal concircular tensors all diagonalized in adapted coordinates. Generalizing Crampin's observation \cite{Crampin2003}, by calculating the integrability conditions of the defining equation of a concircular tensor, one can show that the Riemann curvature tensor $R$ must satisfy the diagonal curvature condition in adapted coordinates, which is that $R_{ijik} = 0$ for $j \neq k$. Thus coordinates adapted to a KEM web are orthogonal separable coordinates and have diagonal curvature. In \cite{Rajaratnam2014d} we will solve for metrics that satisfy these conditions thereby determining if they characterize KEM webs. We will also use these ideas to prove the KEM separation theorem.

\section{Conclusion}

In this article we have introduced the notion of KEM webs and have shown that all separable webs in spaces of constant curvature are KEM webs. The remaining theory developed herein can be used for the study of these webs. These webs arise from the freedom that warped products give for constructing Killing tensors. \Cref{prop:wpKss} in \cref{sec:KTinWP} can be used to calculate the KS-space of a KEM web. The BEKM separation algorithm can be used to determine the separability of natural Hamiltonians in KEM webs. Furthermore, the theory developed in \cref{sec:KTinWP,sec:sepHJeqnWP} is applicable to any reducible separable web.

Concircular tensors are often studied with the added assumption of simple eigenfunctions, see \cite{Crampin2003,Benenti2005a} for example. Thus the results in this article can be seen as an application of these tensors when this assumption is weakened to point-wise diagonalization. Since such tensors appear in other areas such as geodesic equivalence \cite{Bolsinov2003} and cofactor systems (geodesically Hamiltonian systems) \cite{Lundmark2003,Crampin2001,Benenti2005a}, one can speculate that results in these areas can be generalized using the ideas presented in this article.

For strictly pseudo-Riemannian manifolds, one can attempt to extend the BEKM separation algorithm by including the case when the CT has complex eigenfunctions. Some relevant results in this area may be found in \cite{Degiovanni2007,Bolsinov2013}. Also non-orthogonal separation can occur as well \cite{Benenti1992,Benenti1997a}; this case is not considered in this article.

By definition of a KEM web, it follows that concircular tensors are invariant tensors characterizing these webs. Thus, according to \cref{thm:SCCconChKT}, every separable web in a space of constant curvature is characterized by a hierarchy of concircular tensors. As we mentioned earlier, this property allows for an algebraic calculation of the KS-space of these webs using Benenti's theory. Thus a related question arises: can one obtain invariant tensors for the webs associated with conformal separation which would allow for the algebraic calculation of the associated KS-space?

In \cite{Rajaratnam2014d} we will give an independent proof of the KEM separation theorem which is an important result within the theory. In \cite{Rajaratnam2014} we will show how to generalize Benenti's theory given in \cite{Benenti1993} and use the results presented in this article to explicitly obtain a basis for the KS-space for KEM webs. In \cite{Rajaratnam2014e} we will classify the concircular tensors in spaces of constant curvature thereby giving another classification of the separable webs in these spaces and enabling one to explicitly apply the BEKM separation algorithm. Finally, all the results presented in this article and the articles just listed can be found in thesis of the first author \cite{Rajaratnam2014}.

\section*{Acknowledgments}

We would like to express our appreciation to Dong Eui Chang for his continued interest in this work.  The research was supported in part by National Science and Engineering Research Council of Canada Discovery Grants (D.E.C. and R.G.M.). The first author would like to thank Spiro Karigiannis for reading his thesis \cite{Rajaratnam2014}, which contains the contents of this article.
\phantomsection
\addcontentsline{toc}{section}{References}
\printbibliography
\end{document}